%

\documentclass[12pt,thmsa, a4paper]{article}%
\usepackage{amssymb}
\usepackage{amsfonts}
\usepackage{amsmath}
\usepackage{xcolor}
\usepackage{graphicx}
\usepackage{geometry}
\usepackage{probsoln}
\usepackage{enumerate}
\usepackage{natbib}%
\setcounter{MaxMatrixCols}{30}
\providecommand{\U}[1]{\protect\rule{.1in}{.1in}}
\def\E{\mathbb{E}}
\def\R{\mathbb{R}}

\def\N{\mathbb{N}}
\def\X{\mathcal{X}}
\def\d{\mathrm{d}}
\def\esssup{\operatorname{ess} \sup}
\def\essinf{\operatorname{ess} \inf}
\geometry{a4paper,
	total={180mm,267mm},
}

\newtheorem{theorem}{Theorem}

\newtheorem{proposition}[theorem]{Proposition}
\newtheorem{lemma}[theorem]{Lemma}
\newtheorem{definition}{Definition}

\newtheorem{aaxiom}{Axiom}[section]
\newenvironment{proof}[1][Proof.]{\addvspace{\bigskipamount}\noindent \textbf{#1} }{\hfill$\square$\par \addvspace{\bigskipamount}}
\definecolor{indigo}{rgb}{0.0, 0.25, 0.42}
\definecolor{hl}{rgb}{0.0, 0.25, 0.42}

\newenvironment{super-boxer}
{\begin{center}
		\begin{tabular}{||p{0.9\columnwidth}||}
			\hline \hline \\
		}
		{
			\\ \\ \hline \hline
		\end{tabular}
	\end{center}
}

\hideanswers
\showanswers
\makeatletter
\renewcommand{\maketitle}{
	\begin{center}
		{\LARGE \@title \par}
		\vspace{1.5em} 		{\@author \par}
		\vspace{1.5em} 		{\@date \par}
	\end{center}
}
\makeatother
\usepackage{array}
\usepackage{tikz}
\usetikzlibrary{arrows.meta}

\begin{document}

\title{Disappointment Aversion and Expectiles}
\author{
\begin{tabular*}
{\textwidth}[c]{@{\extracolsep{\fill}}ccccc}%
F. Bellini & F. Maccheroni & T. Mao & R. Wang & Q. Wu\\
\emph{Bicocca} & \emph{Bocconi} & \emph{USTC} & \emph{Waterloo} & \emph{Yale}%
\end{tabular*}
}
\date{\textbf{\today}}
\maketitle

\begin{abstract}
\noindent This paper recasts Gul's (1991) theory of disappointment aversion in
a Savage framework, with general outcomes, new explicit axioms of
disappointment aversion, and novel explicit representations. These permit
broader applications of the theory and a better understanding of its
decision-theoretic foundations. Our results exploit an unexpected connection
between Gul's model and the econometric framework of Newey and Powell (1987)
of asymmetric least squares estimation. 
Our main axiomatization result shows that a preference relation over Savage
acts is probabilistically sophisticated, invariant biseparable, and
disappointment hedging if and only if it admits a representation \emph{à la} Gul, and hence all explicit equivalent representations that we present
in the paper. We also derive a neurocomputational foundation of the theory based on recent neuroscience findings and a novel reinforcement learning result. \medskip

\noindent\emph{JEL Classification:} D81 \medskip

\noindent\emph{Keywords:} Disappointment aversion, expectiles, probabilistic
sophistication, invariant biseparable preferences, maxmin expected utility,
asymmetric least squares estimation, reinforcement learning

\end{abstract}

\begingroup
\renewcommand\thefootnote{}\footnotetext{Some material in this paper was
previously presented in ``Disappointment concordance and duet expectiles''
(\texttt{arXiv:2404.17751}). The present paper subsumes and replaces that
material. We thank Pierpaolo Battigalli, Aur\'elien Baillon, Simone
Cerreia-Vioglio, Soo Hong Chew, David Dillenberger, Gabi Gayer, Peter Klibanoff, Alfonso Maselli, Massimo
Marinacci, Sujoy Mukerji, Peter Wakker, and especially Marcus Pivato, as well
as the participants at the 2024 RISKLAB workshop in Rabat, the 2024 China
Conference on ASQFRM in Beijing, and the 2025 Science of Decision Making
conference in Chengdu for stimulating discussions and very useful comments.
Fabio Maccheroni gratefully acknowledges the financial support of Grant
101142844 TRAITS-GAMES ERC Europe-2023-ADG. Ruodu Wang is supported by the
Natural Sciences and Engineering Research Council of Canada (CRC-2022-00141,
RGPIN-2024-03728).}
\endgroup

\section{Introduction}

The \emph{theory of disappointment aversion} of Gul (1991) is one of the most
influential contributions to decision-making under risk. This model posits that
individuals evaluate the outcomes of any risky action relative to an
endogenous reference point: the ex-ante value of the action itself. Such a
reference point balances the negative emotional responses to outcomes falling
short of expectations --- that are \emph{disappointing} --- against the positive
emotional responses to outcomes exceeding expectations --- that are
\emph{elating}.

The model captures two core emotions, sadness (from disappointment) and joy
(from elation). It has proven valuable in explaining economic behaviors beyond
the reach of expected utility theory,   performs well in experimental
settings, and   aligns closely with recent neuroscientific findings on
reference-dependent decision-making. Yet, despite its success, the original
framework, including its axiomatic formulation and functional representation,
has some known limitations. The present paper addresses these comprehensively,
{while keeping the core theory intact}. \bigskip

Gul's model is characterized by two parameters: a \emph{utility function}
$u:\mathbb{R}\rightarrow\mathbb{R}$, evaluating monetary outcomes, and a
\emph{disappointment aversion coefficient} $\beta\in(-1,\infty)$, measuring
how much a decision maker (DM) discounts elating outcomes relative to
disappointing ones. A positive $\beta>0$ corresponds to disappointment
aversion, $\beta=0$ aligns with standard expected utility preferences, and a
negative $\beta<0$ indicates elation seeking behavior (rather than being
discounted, elating outcomes carry additional value).

Gul (1991) shows that a weakening of the von
Neumann-Morgenstern Independence Axiom (motivated by the Allais paradox)  characterizes evaluation of a random
variable $X$ as the solution $v_{X}$ to the equation:
\begin{equation}
v=\mathbb{E}\left[  k_{v}(X)\right]  \label{eq:Gul}%
\end{equation}
where $k_{v}:\mathbb{R}\rightarrow\mathbb{R}$ is defined as
\[
k_{v}(x)=%
\begin{cases}
\dfrac{u(x)+\beta v}{1+\beta} & \text{if }u(x)\geq v\\
\hspace{16pt}u(x) & \text{if }u(x)<v
\end{cases}
\]
for all $v\in\mathbb{R}$. The DM prefers $X$ over $Y$ if, and only if,
$v_{X}\geq v_{Y}$. \bigskip

Our research is motivated by the following challenges:

\begin{enumerate}
\item The original formulation above is restricted to monetary outcomes, even
though disappointment over non-monetary outcomes, such as the results of
personal endeavors or medical treatments, seems to be equally, if not more, significant.

\item The axiomatic characterization does not include explicit axioms that
capture disappointment aversion or elation seeking,
and, as a result, much of the theory's interpretation depends on its
functional representation.

\item The representation itself is implicit: the evaluation $v_X$ of  $X$ is the solution of equation
\eqref{eq:Gul}.
\end{enumerate}

The insightful work of Cerreia-Vioglio, Dillenberger, and Ortoleva (2020)
addresses the last point by providing an explicit representation for the
monetary certainty equivalents of Gul's model in terms of cautious expected
utility,\footnote{Cerreia-Vioglio, Dillenberger, and Ortoleva (2015).} using
solely the parameters $u$ and $\beta$ of the original implicit representation. A recent manuscript of Gul and Pesendorfer (2026) provides an alternative explicit minimum-over-rank-dependent-utilities representation.\footnote{See the discussion in Section \ref{sec:expectiled-more}.}
\bigskip

The present paper tackles all the challenges outlined above and offers the
following contributions:

\begin{enumerate}
\item An extension of the theory to \emph{general outcomes} in a framework
\`{a} la Savage (1954).

\item A behavioral foundation based on \emph{explicit axioms} of
disappointment aversion/elation seeking, which are easy to interpret and
determine the sign of $\beta$.

\item An explicit representation in terms of \emph{maxmin expected utility}
\`{a} la Gilboa and Schmeidler (1989), and another one in terms of
\emph{asymmetric least squares estimation} of the random payoff $u\left(
X\right)  $ that the DM is facing:%
\begin{equation}
v_{X}=\mathop{\arg\min}_{v\in\mathbb{R}}\mathbb{E}\left[  \ell_{\beta
}(u(X)-v)\right]  \label{eq:spoiler}%
\end{equation}

where the deviation penalty $\ell_{\beta}(s)=s^{2}+\beta s^{2}1_{\left(
-\infty,0\right]  }(s)$ captures the different emotional impact of elating and
disappointing outcomes.
\end{enumerate}

Relative to the last point, similarly to Cerreia-Vioglio et al.~(2020), both
the explicit representations that we obtain use solely the parameters of
equation (\ref{eq:Gul}) and yield exactly the same solution $v_{X}$. \bigskip

Altogether, our results provide a novel framework, a new set of axioms, and some
explicit representations for Gul's model, offering the following benefits:

\begin{enumerate}
\item The extended domain (general outcomes) broadens the applicability of the
model, allowing it to capture disappointment and elation beyond monetary rewards.

\item The new axiomatization opens up avenues for empirical testing and calibration on the
descriptive side, while providing a solid decision-theoretic foundation. 

\item The explicit representations enhance the interpretability of the model,
facilitate comparison with alternative theories, improve its computational
tractability, and relate decision theory to econometric modeling.
\end{enumerate}

Conceptually and mathematically, these contributions are enabled by an
unexpected connection between Gul's (1991) model of disappointment aversion
and the asymmetric least squares estimation framework of Newey and Powell
(1987), as foreshadowed by representation~(\ref{eq:spoiler}) above and further
detailed in the paper outline below.

Last but not least, the same connection also suggests a neurocomputational foundation
for the theory. If a value estimate is updated through reward prediction
errors with different learning gains for positive and negative surprises, then
the long-run value learned by the system is not the expected reward, but an
expectile of the reward distribution. In this interpretation, the coefficient
$\beta$ has a direct process meaning: it measures the relative strength of the
negative prediction-error channel compared with the positive one. Thus
disappointment aversion can be read not only as a behavioral attitude and a
robust decision rule, but also as the limiting outcome of asymmetric
reinforcement learning.\bigskip

\subsection{Outlook of the paper}


\subsubsection{Beyond monetary outcomes}

The original theory of disappointment aversion was formulated for monetary
lotteries, yet the psychological phenomenon it seeks to capture --- the
asymmetric weighting of elation versus disappointment relative to an
endogenous benchmark --- appears at least as forcefully in non-monetary
domains. Everyday life abounds with examples: the novelist whose manuscript is
rejected after years of hope, the PhD applicant denied entry to her dream
program, the marathon runner sidelined by persistent injury despite flawless
preparation, and so on. In each case, the emotional impact is shaped not
merely by the absolute quality of the outcome, but by how far it deviates ---
upward or downward --- from the DM's prior evaluation of the potential of
their choices.

To illustrate the importance of extending Gul's model beyond money, consider a
DM choosing among three wellness activities: running,
strength training, and rock climbing. The quality of the experience in each
activity depends critically on an exogenous state variable: how congested
their weekly schedule turns out to be (low, moderate, or high congestion). 
Assume that the DM needs to commit to the chosen activity (possibly due to planning and budget constraints) before observing the state.
The
table below summarizes the subjective, qualitative payoffs the decision maker
anticipates in each state.

\begin{table}[h]
\centering
\resizebox{0.99\textwidth}{!}{	\renewcommand{\arraystretch}{1.3}
\begin{tabular}{>{\centering\arraybackslash}l|p{6cm}|p{6cm}|p{6cm}}
			\textbf{Action / State} & \textbf{Low Congestion} & \textbf{Moderate Congestion} & \textbf{High Congestion} \\ \hline
			\textbf{Running} & \textbf{Flow and endurance} \newline --- steady rhythm, clear mind &
			\textbf{Habitual ease}\newline --- controlled effort, satisfaction &
			\textbf{Minimal but consistent effort}\newline --- sense of discipline preserved \\ \hline
			\textbf{Strength training} & \textbf{Focused strength}\newline --- visible progress and control &
			\textbf{Routine competence}\newline --- balance of effort and reward &
			\textbf{Manageable fallback}\newline --- short workouts still productive \\ \hline
			\textbf{Rock climbing} & \textbf{Exhilaration}\newline --- challenge, mastery and vistas&
			\textbf{Engaged but uneven}\newline --- effortful satisfaction &
			\textbf{Frustration}\newline --- barriers, fatigue, likely skip\\
\end{tabular}	
}\end{table}Running exhibits \emph{low state-dependence} of
outcomes: even under severe time pressure, a short run still delivers a
meaningful sense of discipline and accomplishment. The worst-case experience
remains tolerable and rarely generates sharp disappointment relative to the
activity's own ex-ante expectation.

Strength training shows \emph{moderate state-dependence}: tight schedules
force shorter sessions and slower progress, yet the activity retains a core
feeling of competence and control. Disappointments are real but arguably controlled.

Rock climbing displays \emph{high state-dependence}: when time and energy are
abundant the experience can be intensely rewarding. But the same activity
collapses under high congestion: long commutes, inadequate warm-up, and
fatigue turn exhilaration into frustration or outright abandonment. The
potential for both extreme elation and disappointment is therefore much greater.

A disappointment-averse agent (positive $\beta$) fears states
in which realized experience falls below the act's endogenous reference point
(its own ex-ante value). Such an agent is drawn toward \textbf{running}
precisely because disappointing realizations are both rare and mild: the
activity is emotionally ``robust.''

An elation-seeking agent (negative $\beta$) instead chases large upside potential
even at the cost of more frequent or intense shortfalls. For this agent,
\textbf{rock climbing} becomes especially attractive: the possibility of
transcendent sessions outweighs the risk of repeated frustration.

Finally, an agent with balanced attitudes (zero $\beta$) often finds
\textbf{strength training} most appealing: it offers genuine progress and
reward in good conditions while still delivering acceptable value when
conditions deteriorate.

Extending Gul's original arguments, which relied on lotteries over a bounded
monetary interval $[w,b]$, to this richer setting is conceptually and
mathematically non-trivial. Moreover, a purely von Neumann--Morgenstern lottery framework tends to entangle decision modeling with statistical modeling. By contrast, the Savage (1954) framework separates acts, the objects of choice, from the probability measure on states, the statistical model of risk. This separation is useful not only conceptually but also empirically: in experimental and real-life settings, the language of state-contingent outcomes is often more natural and more common.

In Section~\ref{sec:expectiled-framework} we therefore develop this extended
model in a   Savage framework. Outcomes $x \in\mathcal{X}$ belong to a
general measurable space, risky alternatives are acts $X : \Omega
\to\mathcal{X}$, and the decision maker evaluates each act relative to its own
endogenous certainty equivalent --- exactly as Gul envisioned --- except now
without any restriction to monetary lotteries.

\subsubsection{New formulas and interpretations}

\label{sect:intro}

Section \ref{sec:expectiled-utility} opens with a simple observation
(Proposition \ref{prop:expectiles}): in our general framework, $v_{X}$ is a
solution of equation (\ref{eq:Gul}) if, and only if, it is a solution
of\footnote{We tacitly assume suitable integrability in the Introduction.}
\begin{equation}
\underset{\text{expected elation}}{\underbrace{\mathbb{E}\left[  \left(
u\left(  X\right)  -v\right)  ^{+}\right]  }}\ =\left(  1+\beta\right)
\underset{\text{expected disappointment}}{\underbrace{\mathbb{E}\left[
\left(  v-u\left(  X\right)  \right)  ^{+}\right]  }}. \label{eq:Gul00}%
\end{equation}
For $\beta \ge 0$, this means that expected elation $\mathbb{E}[(u(X)-v_{X})^{+}]$
must overcompensate expected disappointment by a factor $1+\beta$.
Mathematically, setting $U=u\left(  X\right)  $\ and $\alpha=\left(
2+\beta\right)  ^{-1}$, equation (\ref{eq:Gul00}) shows that $v_{X}$ is a
solution of
\begin{equation}
\alpha\mathbb{E}[(U-v)^{+}]=(1-\alpha)\mathbb{E}[(v-U)^{+}]. \label{eq:NP}%
\end{equation}
This establishes the anticipated link between Gul's model (\ref{eq:Gul00})
and econometrics. In fact, Newey and Powell (1987)
introduce the solutions of (\ref{eq:NP}), denoted by $\mathbb{E}_{\beta}[U]$,
as asymmetric least squares estimators. Under the name \emph{expectiles},
these estimators became a popular tool in econometrics due to their robustness
properties and statistical elicitability (e.g.~Gneiting, 2011).
Subsequently they gained prominence as coherent measures of risk in finance and
actuarial science (e.g.~Bellini, Klar, M\"{u}ller, Rosazza Gianin, 2014, and
Ziegel, 2016). More recently, the framework of Cerreia-Vioglio, Corrao, and Lanzani
(2024) suggests a natural role for expectile-type rules in robust
opinion aggregation, especially when agents are more cautious about over-predicting than under-predicting (or conversely).

The original results of Newey and Powell guarantee that equation
(\ref{eq:Gul00}) always has a unique solution
\begin{equation}
v_{X}=\mathbb{E}_{\beta}[u(X)]. \label{eq:expectiled-utility}%
\end{equation}
In turn, our Proposition \ref{prop:expectiles} shows that $\mathbb{E}_{\beta
}[u(X)]$ is the only solution of Gul's equation (\ref{eq:Gul}). Then we call
$\mathbb{E}_{\beta}[u(X)]$ the \emph{expectiled utility} of $X$ (expectile of
the utility), in analogy with the \emph{expected utility} of $X$ (expectation
of the utility) which corresponds to $\beta=0$.

Another direct application of the results of Newey and Powell leads to the
representation%
\begin{equation}
\mathbb{E}_{\beta}[u(X)]=\mathop{\arg\min}_{v\in\mathbb{R}}\mathbb{E}\left[
\ell_{\beta}(u(X)-v)\right]  \label{eq:quadratic}%
\end{equation}
where $\ell_{\beta}(s)=s^{2}+\beta s^{2}1_{\left(  -\infty,0\right]  }(s)$ is
an asymmetric quadratic loss function. This representation recasts the theory
of Gul in terms of (internal) utility estimation on the part of a DM who
anticipates asymmetric emotional responses to positive and negative
deviations. Newey and Powell also show that $\mathbb{E}_{\beta}[\cdot]$ is
positively homogeneous and constant-additive; furthermore, $\mathbb{E}_{\beta
}[\cdot]$ is monotone and, for $\beta\geq0$, superadditive, two important
properties used in the mathematical finance literature.\footnote{See, for
instance, Bellini et al.~(2014).}

To the decision theorist, this says that expectiled utility is a
specification of Gilboa and Schmeidler's (1989) maxmin expected utility in a
Savage framework.\footnote{See Casadesus-Masanell, Klibanoff, and Ozdenoren
(2000), Ghirardato, Maccheroni, Marinacci, and Siniscalchi (2001, 2003), and
Alon and Schmeidler (2014).} Our \textbf{first main result}, Theorem
\ref{th:GulNeweyPowellPlus}, characterizes expectiled utility within the class
of maxmin expected utilities. We show that
\begin{equation}
\mathbb{E}_{\beta}\left[  u\left(  X\right)  \right]  =\min_{Q\in
\mathcal{Q}^{\beta}}\mathbb{E}^{Q}\left[  u\left(  X\right)  \right]
\label{eq:maxmin1}%
\end{equation}
where
\begin{equation}
\mathcal{Q}^{\beta}=\left\{  Q:\frac{\mathrm{d}Q}{\mathrm{d}P}=\frac{1_{D^{c}
}+(1+\beta)1_{D}}{1+\beta P\left(  D\right)  }\text{ for some\textrm{\ }}
D\in\mathcal{F}\right\}  \label{eq:maxmin2}%
\end{equation}
and an optimum is attained at $Q^{\ast}$ given by
\begin{equation}
\frac{\mathrm{d}Q^{\ast}}{\mathrm{d}P}=\frac{1_{D_{X}^{c}}+(1+\beta)1_{D_{X}}
}{1+\beta P\big(D_{X}\big)} \label{eq:maxmin3}%
\end{equation}
where $D_{X}=\left\{  \omega:u\left(  X\left(  \omega\right)  \right)
<\mathbb{E}_{\beta}\left[  u\left(  X\right)  \right]  \right\}  $ is the set
of states $\omega$ in which the ex-post utility $u\left(  X\left(
\omega\right)  \right)  $ of the realized outcome $x=X\left(  \omega\right)  $
is inferior to the ex-ante utility evaluation $\mathbb{E}_{\beta}\left[
u\left(  X\right)  \right]  $ of the act $X$ itself: the \emph{disappointing
states} for act $X$.

The representation (\ref{eq:maxmin1})--(\ref{eq:maxmin3}) admits a clear
interpretation in terms of a (zero-sum) game against Nature.\footnote{See the
classical Milnor (1954) and Luce and Raiffa (1957, p.~279), as well as Gilboa
and Schmeidler (1989).} A disappointment-averse DM envisages an adversarial
Nature who is able to distort the reference probability $P$ within a set
$\mathcal{Q}^{\beta}$ of alternative distributions to minimize her expected
utility. Anticipating this, the evaluation of act $X$ is robustly computed via (\ref{eq:maxmin1}). The set $\mathcal{Q}^{\beta}$, given by (\ref{eq:maxmin2}%
), explicitly captures these adversarial distributions, with Nature being able
to increase the odds of any single event $D$ by a factor of $(1+\beta)$.
Finally, the \textquotedblleft optimal sabotage\textquotedblright\ on the part of
Nature, $Q^{\ast}$ given by (\ref{eq:maxmin3}), precisely embodies the fears
of the DM by making salient the set $D_{X}$ of disappointing states for act
$X$.

\subsubsection{New axiomatizations}

In the light of the above representations, the preferences $\succsim$ of a DM
adhering to Gul's theory have to satisfy some standard axioms on the set of
all simple acts.\footnote{As customary in axiomatic decision theory,
preferences are represented by a binary relation $\succsim$ on the set of all
acts that take a finite number of values in a connected, metric, and separable
space $\mathcal{X}$, called \emph{simple acts}. Moreover, in the tradition of
Savage (1954), $\left(  \Omega,\mathcal{F},P\right)  $ is assumed to be
nonatomic, so that simple acts generate all simple lotteries.} First, these
preferences are \emph{probabilistically sophisticated}:
\[
P\left(  \omega:X\left(  \omega\right)  \succsim x\right)  \geq P\left(
\omega:Y\left(  \omega\right)  \succsim x\right)  \text{ for all }%
x\in\mathcal{X}\text{ implies }X\succsim Y
\]
with strict preference if the inequality is strict for some $x$ (Machina and
Schmeidler, 1992, p.~754). Second, these preferences are \emph{invariant
biseparable} in the sense of Ghirardato, Maccheroni, Marinacci, and
Siniscalchi (2001, henceforth GMMS).\footnote{Ghirardato, Maccheroni,
Marinacci, and Siniscalchi (2003) is the abridged published version.} In
particular, there exist a nonconstant and continuous function $u:\mathcal{X}%
\rightarrow\mathbb{R}$, and a monotone, positively homogeneous, and
constant-additive functional $I$ on the set of simple random variables such
that, for all simple acts $X$ and $Y$,
\begin{equation}
X\succsim Y\Longleftrightarrow I(u\left(  X\right)  )\geq I(u\left(  Y\right)
). \label{eq:repr}%
\end{equation}
GMMS provide axioms characterizing this representation. Under these axioms,
for all outcomes $x$ and $y$, it is possible to elicit (up to indifference) a
\emph{preference midpoint}
\[
\frac{1}{2}x\oplus\frac{1}{2}y\text{ in }\mathcal{X}\text{ such that }u\left(
\frac{1}{2}x\oplus\frac{1}{2}y\right)  =\frac{1}{2}u\left(  x\right)
+\frac{1}{2}u\left(  y\right)  .
\]
Moreover, GMMS show that the additional axiom needed to characterize
\emph{maxmin expected utility} within the class of invariant biseparable ones
is:%
\begin{equation}
X\sim Y\implies X\precsim\dfrac{1}{2}X\oplus\dfrac{1}{2}Y \label{eq:AH}%
\end{equation}
called \emph{ambiguity hedging}. The interpretation of (\ref{eq:AH}) is
literally the one of Schmeidler (1989, p. 582): \textquotedblleft
smoothing\textquotedblright\ or averaging utility distributions makes the DM
better off.

In Section \ref{sec:axiom}, we propose a new axiom, called
\emph{disappointment hedging}, which, together with probabilistic
sophistication and invariant biseparability, characterizes expectiled utility:%
\begin{equation}
X\sim Y\implies\dfrac{1}{2}W\oplus\dfrac{1}{2}X\precsim\dfrac{1}{2}%
W\oplus\dfrac{1}{2}Y \label{eq:DH}%
\end{equation}
for all simple acts $W$ that have the same disappointing states as $X$, that
is, $\{\omega:W(\omega)\prec W\}=\{\omega:X(\omega)\prec X\}$. This novel
axiom admits a clear interpretation in terms of disappointment aversion. If
$W$ has the same disappointing states as $X$, then mixing $X$ with $W$ offers
no protection: both acts disappoint in the same states. By contrast, mixing
$Y$ with $W$ may offer some hedging benefits, thus making the mixture of $Y$
with $W$ preferable. Also observe that considering $W=X$ in (\ref{eq:DH})
shows that disappointment hedging is a stronger axiom than ambiguity hedging.

In the GMMS framework, our Expectiled Utility Theorem (Theorem
\ref{th:expectiled-utility}) shows that a binary relation $\succsim$ between
simple acts is probabilistically sophisticated, invariant biseparable, and
disappointment hedging if, and only if, there exists a nonconstant and
continuous function $u:\mathcal{X}\rightarrow\mathbb{R}$ and a number
$\beta\geq0$ such that
\begin{equation}
X\succsim Y\iff\mathbb{E}_{\beta}\left[  u\left(  X\right)  \right]
\geq\mathbb{E}_{\beta}\left[  u\left(  Y\right)  \right]  . \label{eq:final}%
\end{equation}
In this case, $u$ is cardinally unique and $\beta\geq0$ is unique. When the
preference in (\ref{eq:DH}) is reversed, we have \emph{elation speculating}
preferences, characterized by the representation (\ref{eq:final}) with $-1<\beta\leq0$. This is the \textbf{second main result} of the paper, and
technically the most demanding. It requires the creation of novel techniques
to analyze attitudes toward different dependence structures among random
variables and the development of a \textquotedblleft Riesz representation
theorem\textquotedblright\ for expectiles (Theorem \ref{th-characterizationEx}).

\subsubsection{Behavioral identification of parameters}
Section~\ref{sec:elicitation} turns the Expectiled Utility Theorem  into a direct elicitation procedure. Specifically, the coefficient $\beta$ can be recovered from binary acts alone, before the utility function is known. Fix $x\succ y$ and consider any preference midpoint of theirs. Theorem~\ref{th:elicitation} shows that any event $F\in\mathcal{F}$  such that
\[
\dfrac{1}{2}x\oplus\dfrac{1}{2}y\sim xFy
\]
offers the elicitation
\[
\beta=\frac{2P(F)-1}{1-P(F)},
\]
and such an event always exists.
Moreover, $P(F)>1/2$, $P(F)=1/2$, and $P(F)<1/2$ correspond, respectively, to disappointment aversion, expected utility, and elation seeking. This confirms the behavioral intuition that disappointment-averse behavior  demands elation to overcompensate disappointment.  

Once $\beta$ is retrieved, the techniques of Ghirardato and Marinacci (2001) allow us to efficiently recover the utility function $u$, again with binary-act indifferences only.

\subsubsection{Expectiled rewards: behavioral and neurocomputational foundations}
Section~\ref{sec:exp-v} studies expectiled rewards directly,
that is, the case of utility-valued acts. First, it provides a lean and expressive behavioral characterization of this important case. Second and fundamental, it connects the behavioral model with a simple learning mechanism. Theorem
\ref{th:RPE-learning} shows that, when a value estimate is adjusted by
asymmetric reward prediction errors, with negative surprises weighted by
$1+\beta$ relative to positive surprises, the estimate converges almost surely
to $\mathbb{E}_{\beta}[U]$. This \textbf{third main result} gives a process interpretation of the
same parameter that appears in Gul's representation and in our elicitation
formula: larger $\beta$ corresponds to stronger updating from disappointing
realizations and therefore to lower learned values for risky rewards. The
neurocomputational discussion relates this reduced-form learning rule to
dopamine-based reinforcement learning and to evidence that positive and
negative prediction errors may affect value learning asymmetrically.

\subsubsection{Additional results}

The additional results of Section~\ref{sec:expectiled-more} provide further insights into different aspects of our theory and connect them to the existing literature.
\begin{itemize}
\item Expectiled utilities can be characterized as the only probabilistically sophisticated class of maxmin or maxmax expected utility preferences satisfying betweenness formulated in the Savage framework, connecting to the classic studies on betweenness for lotteries.

\item The axiom of disappointment hedging leads to expectiled utilities also beyond the realm of invariant biseparable preferences.
\end{itemize}

\subsubsection*{Synopsis}

The results that we summarized above show that the
theory of disappointment aversion of Gul (1991) can be \textquotedblleft
remastered\textquotedblright\ in a Savage framework with general outcomes, new
explicit representations, and an axiomatization that directly appeals to the
idea of robustifying decisions to take protection against disappointment (or
exploit elation opportunities); moreover it has an unexpected neurocomputational foundation based on the differential processing of elation and disappointment in the brain. \bigskip

\section{Framework}

\label{sec:expectiled-framework}

We adopt a Savage framework augmented with a reference probability $P$. Specifically:

\begin{itemize}
\item $(\Omega,\mathcal{F},P)$ is a probability space of \emph{states};

\item $\mathcal{X}$ is a measurable space of (deterministic) \emph{outcomes};

\item \emph{acts} are random outcomes (measurable mappings) $X:\Omega
\rightarrow\mathcal{X}$.
\end{itemize}

An act $X$ represents a risky action the outcome of which is $X\left(
\omega\right)  $ if state $\omega$ occurs. Each act $X$ induces a distribution
$P_{X}=P\circ X^{-1}$ of outcomes, called a \emph{lottery} in the decision
theory jargon. Beyond the advantages discussed in the introduction,
representing risky actions as acts rather than lotteries allows for a natural
description of dependence structures across different actions. When
considering two acts simultaneously, the Savage framework provides a joint
specification of outcomes across states, enabling the modeling of patterns
that would be lost if acts were reduced to separate lotteries (their marginal
distributions). This capability is crucial for our axiomatization since it
allows us to specify when two acts disappoint the DM in exactly the same
states.\footnote{See the \emph{disappointment hedging} axiom below.}
Furthermore, as in Machina and Schmeidler (1992), the probability measure $P$
in the present framework can be interpreted as a subjective one.

Translated into this Savage framework, Gul's model describes a DM who
evaluates act $X:\Omega\rightarrow\mathcal{X}$ through a solution $v_{X}$ of%
\begin{equation}
v=\mathbb{E}\left[  k_{v}(X)\right],  \label{eq:Gulagain}%
\end{equation}
where $u:\mathcal{X}\rightarrow\mathbb{R}$ is a (measurable) \emph{utility
function}, $\beta\in(-1,\infty)$ is a \emph{disappointment aversion coefficient},
and $k_{v}:\mathcal{X}\rightarrow\mathbb{R}$ is the   transformation
of $u$ given by%
\[
k_{v}(x)=%
\begin{cases}
\dfrac{u(x)+\beta v}{1+\beta} & \text{if }u(x)\geq v\\
\hspace{16pt}u(x) & \text{if }u(x)<v
\end{cases}
\]
for all $v\in\mathbb{R}$.

Note that, for $\beta\geq0$,
\[
u(x)\geq v\implies\beta v\leq\beta u\left(  x\right)  \implies\dfrac
{u(x)+\beta v}{1+\beta}\leq\dfrac{u(x)+\beta u(x)}{1+\beta}=u\left(  x\right)
.
\]
Therefore, when $v=v_{X}$, the utility of disappointing outcomes, for which
$u(x)<v_{X}$, is unaffected, while that of elating ones, for which $u(x)\geq
v_{X}$, is discounted. Thus Gul's implicit representation captures the
asymmetry between positive and negative deviations from the endogenously
determined reference point $v_{X}$. This describes the full emotional impact
of disappointment and the deflated one of elation. The coefficient $\beta$
governs the strength of this asymmetry. \bigskip

\noindent\textbf{Nota Bene.} For brevity, we focus mostly on the case of
disappointment aversion, corresponding to $\beta\geq0$. The case $-1<\beta
\leq0$, which reflects elation seeking, admits analogous formal results and
interpretive reversals. We address it only when the duality between
disappointment aversion and elation seeking yields additional conceptual insight.

\section{Expectiled utility representations}

\label{sec:expectiled-utility}

Our first result shows that, also in the framework that we introduced in the
previous section, Gul's equation (\ref{eq:Gulagain}) always has a unique
solution $v_{X}$, an internal equilibrium between elation and
disappointment.\footnote{In reading it, recall that $s^{+}=\max\left\{
s,0\right\}  $ denotes the positive part of a real number $s$.}

\begin{proposition}
\label{prop:expectiles}If $\mathbb{E}\left[  u\left(  X\right)  \right]  $
exists finite, then the following conditions are equivalent for $v\in
\mathbb{R}$:

\begin{itemize}
\item $v\ $is a solution to Gul's equation (\ref{eq:Gulagain});

\item $v$ is a solution of
\begin{equation}
\mathbb{E}\left[  \left(  u\left(  X\right)  -v\right)  ^{+}\right]
\ =\left(  1+\beta\right)  \mathbb{E}\left[  \left(  v-u\left(  X\right)
\right)  ^{+}\right]  . \label{eq:DA}%
\end{equation}

\end{itemize}

\noindent In particular, $v$ exists and is unique, denoted $\mathbb{E}_{\beta
}\left[  u\left(  X\right)  \right]  $.
\end{proposition}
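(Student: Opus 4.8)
The plan is to prove the statement in two moves: first reduce Gul's equation (\ref{eq:Gulagain}) to equation (\ref{eq:DA}) by a pointwise computation, and then obtain existence and uniqueness of the solution from a continuity-and-monotonicity argument (or, equivalently, by citing Newey and Powell (1987)).

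For the first move, write $U=u(X)$ and note that, since $\mathbb{E}[U]$ is finite, $\mathbb{E}[|U|]<\infty$, so $(U-v)^{+}$ and $(v-U)^{+}$ are integrable for every $v\in\mathbb{R}$ (each is dominated by $|U|+|v|$). I would then establish the elementary pointwise identity
\[
k_{v}(U)-v=\frac{(U-v)^{+}}{1+\beta}-(v-U)^{+},
\]
by checking the two cases $U\geq v$ and $U<v$ directly from the definition of $k_{v}$ (here $1+\beta>0$ is used). Taking expectations, which is legitimate by the integrability just noted, yields
\[
\mathbb{E}[k_{v}(U)]-v=\frac{\mathbb{E}[(U-v)^{+}]}{1+\beta}-\mathbb{E}[(v-U)^{+}],
\]
so that $v=\mathbb{E}[k_{v}(U)]$ holds if and only if $\mathbb{E}[(U-v)^{+}]=(1+\beta)\,\mathbb{E}[(v-U)^{+}]$, which is precisely (\ref{eq:DA}). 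This settles the equivalence of the two bullet points.

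For the second move, I would set $g(v):=\mathbb{E}[(U-v)^{+}]-(1+\beta)\,\mathbb{E}[(v-U)^{+}]$, which by the first move equals $(1+\beta)\big(\mathbb{E}[k_{v}(U)]-v\big)$ and thus vanishes exactly at the solutions of (\ref{eq:DA}). The function $g$ is continuous by dominated convergence; it is strictly decreasing because $v\mapsto(U-v)^{+}$ is non-increasing, $v\mapsto(v-U)^{+}$ is non-decreasing, $1+\beta>0$, and for $v_{1}<v_{2}$ one cannot have simultaneously $U\leq v_{1}$ and $U\geq v_{2}$ almost surely, so at least one of the two expectations strictly changes. Moreover $g(v)\to+\infty$ as $v\to-\infty$ (since $(U-v)^{+}\geq U-v$ forces $\mathbb{E}[(U-v)^{+}]\to+\infty$ while $\mathbb{E}[(v-U)^{+}]\to0$ by dominated convergence) and $g(v)\to-\infty$ as $v\to+\infty$ by the symmetric reasoning. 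By the intermediate value theorem $g$ has a unique zero, which I would denote $\mathbb{E}_{\beta}[u(X)]$; this is also exactly the existence-and-uniqueness statement for asymmetric least squares estimators in Newey and Powell (1987), which may be cited instead.

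The hard part---such as it is---will be nothing more than correct bookkeeping: verifying the pointwise identity in both regimes and justifying the interchange of limit and expectation, all of which follow from the single hypothesis that $\mathbb{E}[u(X)]$ is finite. The one genuinely useful observation is the clean pointwise identity above, which turns the whole reduction into a one-liner rather than a manipulation of indicator functions.
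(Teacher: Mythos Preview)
Your proposal is correct and follows essentially the same route as the paper: both reduce Gul's equation to (\ref{eq:DA}) by the same algebraic identity (the paper splits $k_v$ into indicator pieces, you package it as the single pointwise identity $k_v(U)-v=(U-v)^+/(1+\beta)-(v-U)^+$), and both then appeal to existence and uniqueness of the expectile. The only cosmetic difference is that the paper cites Bellini et al.\ (2014) for existence and uniqueness, whereas you supply the standard self-contained monotonicity/continuity argument; either is fine.
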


This formulation provides an alternative take on Gul's model. Rather than
working with a piecewise transform $k_{v}$ of $u$, Proposition \ref{prop:expectiles}
characterizes the DM's evaluation $\mathbb{E}_{\beta}\left[  u\left(
X\right)  \right]  $ of act $X$ as the sure utility level $v$ that balances
expected elation, $\mathbb{E}\left[  \left(  u\left(  X\right)  -v\right)
^{+}\right]  $, and expected disappointment, $\mathbb{E}\left[  \left(
v-u\left(  X\right)  \right)  ^{+}\right]  $. The factor $(1+\beta)$
explicitly measures the degree of emotional compensation required: expected
elation must outweigh expected disappointment by such a factor to achieve
internal equilibrium.

As anticipated in the introduction, $\mathbb{E}_{\beta}\left[  u\left(
X\right)  \right]  $ is the $\alpha$\emph{-expectile} of $u(X)$ in the sense
of Newey and Powell (1987) for $\alpha=\left(  2+\beta\right)  ^{-1}$.
Therefore, in analogy with the \emph{expected utility} terminology for
$\mathbb{E}\left[  u\left(  X\right)  \right]  $, expectation of the utility
of $X$, we call $\mathbb{E}_{\beta}\left[  u\left(  X\right)  \right]  $
\emph{expectiled utility} of $X$. The econometric connection that we have just
established presents us with another representation that bears a conceptual
nuance of subjective estimation.

\begin{theorem}
\label{th:expectiles}If $\mathbb{E}\left[  u\left(  X\right)  ^{2}\right]  $
exists finite, then
\[
\mathbb{E}_{\beta}[u(X)]=\mathop{\arg\min}_{v\in\mathbb{R}}\mathbb{E}\left[
\ell_{\beta}(u(X)-v)\right]  ,
\]
where $\ell_{\beta}:\mathbb{R}\rightarrow\mathbb{R}$ is given by
\[
\ell_{\beta}(s)=%
\begin{cases}
s^{2} & \text{if }s\geq0,\\
(1+\beta)s^{2} & \text{if }s<0.
\end{cases}
\]

\end{theorem}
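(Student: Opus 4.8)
The strategy is to show that the function $\phi(v) = \mathbb{E}[\ell_\beta(u(X) - v)]$ is strictly convex and differentiable in $v$, compute its derivative, and verify that the first-order condition $\phi'(v) = 0$ is exactly equation (\ref{eq:DA}) from Proposition \ref{prop:expectiles}. Since that proposition already guarantees (\ref{eq:DA}) has a unique solution $\mathbb{E}_\beta[u(X)]$, identifying the critical point with the minimizer completes the argument.

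\textbf{Step 1: Well-posedness.} Write $U = u(X)$ and note $\ell_\beta(s) \le (1+\beta) s^2$, so under the hypothesis $\mathbb{E}[U^2] < \infty$ the expectation $\phi(v) = \mathbb{E}[\ell_\beta(U - v)]$ is finite for every $v \in \mathbb{R}$. The integrand $s \mapsto \ell_\beta(s)$ is continuous, convex (it is a positive combination of $s^2$ and the convex function $(s^-)^2 = ((-s)^+)^2$), and strictly convex since $s^2$ already is; hence $\phi$ is convex on $\mathbb{R}$, and strictly convex unless $U$ is almost surely constant (in which case the claim is immediate). So $\phi$ has at most one minimizer, and since $\phi(v) \ge \mathbb{E}[(U-v)^2] \to \infty$ as $|v| \to \infty$, a minimizer exists.

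\textbf{Step 2: Differentiation under the integral and the first-order condition.} The map $v \mapsto \ell_\beta(U - v)$ has, for each fixed $\omega$, derivative $-\ell_\beta'(U - v)$ where $\ell_\beta'(s) = 2s$ for $s \ge 0$ and $2(1+\beta)s$ for $s < 0$; equivalently $\ell_\beta'(s) = 2s^+ - 2(1+\beta)s^-$. This derivative is dominated locally in $v$ by an integrable function (a linear-in-$|U|$ bound), so I may differentiate under the expectation to get
\[
\phi'(v) = -2\,\mathbb{E}\!\left[(U - v)^+\right] + 2(1+\beta)\,\mathbb{E}\!\left[(v - U)^+\right].
\]
Setting $\phi'(v) = 0$ gives precisely $\mathbb{E}[(U-v)^+] = (1+\beta)\mathbb{E}[(v-U)^+]$, i.e. (\ref{eq:DA}). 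By convexity this stationarity condition characterizes the global minimizer, and by Proposition \ref{prop:expectiles} its unique solution is $\mathbb{E}_\beta[u(X)]$, which therefore equals $\arg\min_{v} \phi(v)$.

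\textbf{Main obstacle.} There is no deep difficulty here; the only point requiring a little care is justifying the interchange of derivative and expectation at the kink $s = 0$ of $\ell_\beta$. Since $\ell_\beta$ is $C^1$ (the two pieces agree in value and slope at $0$, both giving $0$), the pathwise derivative exists everywhere and the dominated-convergence argument for differentiation under the integral goes through cleanly — the moment hypothesis $\mathbb{E}[U^2] < \infty$ supplies the integrable envelope. Alternatively, one can bypass differentiation entirely and argue directly from the subgradient inequality for the convex function $\phi$, showing $0 \in \partial\phi(\mathbb{E}_\beta[u(X)])$; I would mention this as a remark but carry out the differentiable version as the cleaner route.
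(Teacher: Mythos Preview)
Your argument is correct and self-contained, whereas the paper does not actually prove this result: its ``proof'' consists of a one-sentence citation of Newey and Powell (1987) and Bellini et al.~(2014), where the asymmetric-least-squares characterization of expectiles is already established. So you are supplying precisely the computation the paper defers to the literature, and your route---strict convexity plus the first-order condition reducing to equation~(\ref{eq:DA})---is exactly the standard one behind those references. One small caveat: several of your auxiliary inequalities (the upper bound $\ell_\beta(s)\le (1+\beta)s^2$, the lower bound $\phi(v)\ge \mathbb{E}[(U-v)^2]$ for coercivity, and the ``positive combination of convex functions'' justification) are written for $\beta\ge 0$ and fail as stated when $-1<\beta<0$. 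The repairs are immediate---replace $1+\beta$ by $\max\{1,1+\beta\}$ or $\min\{1,1+\beta\}$ as appropriate, and argue strict convexity of $\ell_\beta$ directly from the fact that $\ell_\beta'(s)=2s^+-2(1+\beta)s^-$ is strictly increasing whenever $1+\beta>0$---but since the theorem does not restrict the sign of $\beta$ you should make the bounds uniform over $(-1,\infty)$.
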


Here the agent looks for the sure utility level that best approximates the
random utility levels yielded by $X$. In doing so, he acknowledges that ex-post
utility gains and shortfalls will have different psychological
\textquotedblleft distance\textquotedblright\ from the ex-ante evaluation of
the act. This formulation highlights the optimization nature of value
formation. The DM behaves like an asymmetric least squares estimator:
minimizing the average squared deviation of payoffs from a reference value,
but doing so with a bias that mirrors emotional asymmetry. This perspective
reframes the DM's valuation of risk not as a fixed point solution, but as the
result of a deliberate, internal optimization based on emotional awareness.

Proposition \ref{prop:expectiles} and Theorem \ref{th:expectiles} describe the
DM as an active mitigator of emotional distress aligning with neuroscientific
evidence on nonlinear loss responses (e.g. Tom, Fox, Trepel, and Poldrack,
2007).\bigskip


Our \textbf{first main result} shows that expectiled utility is a special
case of maxmin expected utility. In reading it, recall that, given an event $D
\in\mathcal{F}$ and a bias coefficient $\beta\geq0$, the \emph{Jeffrey
update} of $P$ given $D$ with bias $\beta$ is given by%
\[
Q_{D}^{\beta}(A)=\frac{P\left(  A\cap D^{c}\right)  +(1+\beta)P\left(  A\cap
D\right)  }{P\left(  D^{c}\right)  +(1+\beta)P\left(  D\right)  }\qquad\forall
A\in\mathcal{F}\text{.}%
\]
This soft version of Bayesian updating proposed by Jeffrey (1965) describes
the result of anticipating --- making salient --- event $D$. It has the
following features:

\begin{itemize}
\item for $\beta=0$, $Q_{D}^{\beta}$ coincides with the original probability
measure $P$, no anticipation;

\item for any $\beta>0$, the odds of $D$ under $Q_{D}^{\beta}$ are inflated by
a factor $1+\beta$ with respect to those under $P$, that is,
\[
\frac{Q_{D}^{\beta}\left(  D\right)  }{Q_{D}^{\beta}\left(  D^{c}\right)
}=\left(  1+\beta\right)  \frac{P\left(  D\right)  }{P\left(  D^{c}\right)  }
\]
while the conditional probabilities remain unaffected
\[
Q_{D}^{\beta}\left(  A\mid D\right)  =P\left(  A\mid D\right)  \qquad
\text{and} \qquad Q_{D}^{\beta}\left(  A\mid D^{c}\right)  =P\left(  A\mid
D^{c}\right)
\]
for all $A\in\mathcal{F}$;

\item as $\beta\rightarrow\infty$, $Q_{D}^{\beta}$ converges to the Bayesian
update of $P$ on the event $D$, and the DM behaves as if $D$  will  occur for sure.
\end{itemize}

We are now ready for the statement:

\begin{theorem}
\label{th:GulNeweyPowellPlus}If $\mathbb{E}\left[  u\left(  X\right)  \right]
$ exists finite and $\beta\geq0$, then
\[
\mathbb{E}_{\beta}\left[  u\left(  X\right)  \right]  =\min_{Q\in
\mathcal{Q}^{\beta}}\mathbb{E}^{Q}\left[  u\left(  X\right)  \right]  ,
\]
where
\[
\mathcal{Q}^{\beta}=\left\{  Q_{D}^{\beta}:D\in\mathcal{F}\right\}
\]
and an optimum is attained at $Q^{\ast}$ given by
\begin{equation}
Q^{\ast}=Q_{D_{X}}^{\beta} \label{q:stella}%
\end{equation}
where $D_{X}=\left\{  \omega:u\left(  X\left(  \omega\right)  \right)
<\mathbb{E}_{\beta}\left[  u\left(  X\right)  \right]  \right\}  $.
\end{theorem}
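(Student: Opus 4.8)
The claim is an equality between the expectile $\E_\beta[u(X)]$ and the lower envelope of $\E^Q[u(X)]$ over the explicit family $\mathcal Q_\beta$, together with identification of the minimizer $Q^\ast$. I would prove the two inequalities separately, exploiting Proposition \ref{prop:expectiles} throughout. Write $U=u(X)$, $m=\E_\beta[U]$, and $D_X=\{U<m\}$; by Proposition \ref{prop:expectiles}, $m$ satisfies the balance equation $\E[(U-m)^+]=(1+\beta)\E[(m-U)^+]$. The first step is to record that each candidate density $\mathrm dQ_D/\mathrm dP=(1_{D^c}+(1+\beta)1_D)/(1+\beta P(D))$ is genuinely a probability density (nonnegative, integrates to $1$ because $P(D^c)+(1+\beta)P(D)=1+\beta P(D)$), so $\mathcal Q_\beta$ is a well-defined, nonempty set of measures equivalent to (or absolutely continuous w.r.t.) $P$.

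**Step 2: the $Q^\ast$ computation (the ``$\le$'' direction).** I would plug $D=D_X$ into $\E^Q[U]$ and show it equals $m$. Expanding,
\[
\E^{Q^\ast}[U]=\frac{\E[U 1_{D_X^c}]+(1+\beta)\E[U 1_{D_X}]}{1+\beta P(D_X)}=\frac{\E[U]+\beta\E[U1_{D_X}]}{1+\beta P(D_X)}.
\]
The goal is to see this is $m$, i.e. $\E[U]+\beta\E[U1_{D_X}]=m(1+\beta P(D_X))=m+\beta m P(D_X)$, i.e. $\E[U]-m=\beta\,\E[(m-U)1_{D_X}]=\beta\,\E[(m-U)^+]$. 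But $\E[U]-m=\E[(U-m)^+]-\E[(m-U)^+]$, so this reduces exactly to the balance equation of Proposition \ref{prop:expectiles}. Hence $\E^{Q^\ast}[U]=m$, which gives $\min_{Q\in\mathcal Q_\beta}\E^Q[U]\le m$ and simultaneously pins down the minimizer.

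**Step 3: the $\ge$ direction.** Here I must show $\E^{Q_D}[U]\ge m$ for every $D\in\mathcal F$. Compute as above: $\E^{Q_D}[U]\ge m$ is equivalent to $\E[U]+\beta\E[U1_D]\ge m(1+\beta P(D))$, i.e. (using $\E[U]=m+\E[(U-m)^+]-\E[(m-U)^+]$ and rearranging) to
\[
\E[(U-m)^+]-\E[(m-U)^+]+\beta\,\E[(U-m)1_D]\ge 0.
\]
Using the balance equation to replace $\E[(U-m)^+]=(1+\beta)\E[(m-U)^+]$, the left side becomes $\beta\E[(m-U)^+]+\beta\E[(U-m)1_D]=\beta\big(\E[(m-U)^+]+\E[(U-m)1_D]\big)$. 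Now $\E[(m-U)^+]\ge \E[(m-U)1_D]=-\E[(U-m)1_D]$ pointwise (since $(m-U)^+\ge (m-U)1_D$ everywhere), so the bracket is $\ge 0$, and since $\beta\ge 0$ the whole expression is $\ge 0$. This closes the argument.

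**The main obstacle / care points.** The algebra is routine; the one place to be careful is integrability — $\E[U]$ is assumed finite, so $\E[(U-m)^+]$, $\E[(m-U)^+]$ and $\E[U1_D]$ are all finite and every rearrangement above is legitimate. A second point worth a sentence is that the minimum is genuinely attained (not just an infimum): Step 2 exhibits $Q^\ast\in\mathcal Q_\beta$ achieving the value $m$, so ``$\min$'' is justified. I would also remark, for the interpretation, that $Q^\ast$ reweights exactly the disappointing states $D_X=\{U<m\}$ by the factor $1+\beta$, consistent with the ``optimal sabotage'' narrative in the introduction; and, if desired, note that when $\beta=0$ the set $\mathcal Q_\beta$ collapses to $\{P\}$ and the representation reduces to $\E[u(X)]$.
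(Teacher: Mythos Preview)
Your argument is correct and self-contained. It differs from the paper's proof, which is shorter but less elementary: the paper invokes the known dual representation of expectiles (Lemma~\ref{lm-propertyEx}(viii), from Bellini et al.\ 2014), namely $\E_\beta[U]=\min_{Q\in\mathcal P_\beta}\E^Q[U]$ over the larger set $\mathcal P_\beta=\{Q:\esssup(\d Q/\d P)/\essinf(\d Q/\d P)\le 1+\beta\}$, with the same $Q^\ast$ attaining the minimum; then it simply observes that $\mathcal Q_\beta\subseteq\mathcal P_\beta$ and that $Q^\ast\in\mathcal Q_\beta$, so the minimum over the smaller set coincides. Your route, by contrast, bypasses that external lemma entirely and reduces both inequalities to the balance equation of Proposition~\ref{prop:expectiles} via direct computation. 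The paper's approach buys brevity and situates the result within the established coherent-risk-measure duality; your approach buys independence from that machinery and makes transparent exactly where $\beta\ge 0$ enters (the final factorization $\beta(\E[(m-U)^+]+\E[(U-m)1_D])\ge 0$).
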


As discussed in the introduction (see Section \ref{sect:intro}), Theorem
\ref{th:GulNeweyPowellPlus} reveals disappointment aversion as a \emph{robust
approach against fictitious adversarial scenarios}. {It is straightforward to check that $Q_{D}^{\beta}$ is given by the Radon--Nikodym derivative  
\[
\frac{\mathrm{d}Q_{D}^{\beta}}{\mathrm{d}P}=\frac{1_{D^{c} }+(1+\beta)1_{D}%
}{1+\beta P\left(  D\right)  }
\]
for all $D\in\mathcal{F}$.} 
This maxmin representation features:

\begin{itemize}
\item A game-theoretic interpretation. The DM views the decision problem as a
game against a malevolent Nature that is able to distort probabilities,
overweighting disappointing states, to minimize his utility.

\item Robustness against disappointment. The coefficient $\beta$ quantifies
defensive preparedness. Higher $\beta$ implies greater anticipated sabotage of
favorable outcomes, but higher protection comes at a higher opportunity cost.
Formally, this corresponds to the fact that $\mathbb{E}_{0}\left[  u\left(
X\right)  \right]  =\mathbb{E}\left[  u\left(  X\right)  \right]  $, while
$\mathbb{E}_{\beta}\left[  u\left(  X\right)  \right]  $ decreases as $\beta$
increases, and the limit $\mathbb{E}_{\infty}\left[  u\left(  X\right)
\right]  $ is the essential infimum of $u\left(  X\right)  $ under $P$ as
$\beta\rightarrow\infty$.

\item An explicit minimizer. The optimal sabotage $Q^{\ast}$ magnifies the
likelihood of states $\omega$ in which $u(X(\omega))<\mathbb{E}_{\beta}\left[
u\left(  X\right)  \right]  $, making disappointment salient through
probability inflation.
\end{itemize}

Denoting by $\succsim$ the preference relation between acts $X$ that is
represented by $\mathbb{E}_{\beta}\left[  u\left(  X\right)  \right]  $,
Theorem \ref{th:GulNeweyPowellPlus} provides an entirely new lens through
which Gul's theory can be understood. For each act $X$, the event%
\[
\{\omega:u(X(\omega))<\mathbb{E}_{\beta}\left[  u\left(  X\right)  \right]
\}=\{\omega:X(\omega)\prec X\}
\]
is the one in which the DM is disappointed. The emotional salience of this
event induces the DM to overweight the probability of this set and to evaluate
the expected utility of $X$ with respect to the distorted probability
$Q^{\ast}\ $given by (\ref{q:stella}) that inflates the odds of this event by
a factor $1+\beta$. 

Finally, Theorem \ref{th:GulNeweyPowellPlus} provides the general
perspective behind the fundamental intuition of Ghirardato and Marinacci
(2001) regarding the behavior of a disappointment-averse DM who confronts bets
(binary acts). Consider two outcomes $x\succ y$ and an act $xEy$ that delivers
$x$ if $E$ occurs and $y$ otherwise, so that the elating states are the ones
in $E$ and the disappointing states the ones its complement $E^{c}$. Ghirardato and
Marinacci show that
\begin{equation}
\mathbb{E}_{\beta}\left[  u\left(  xEy\right)  \right]  =\frac{P\left(E\right)  }{1+\beta P\left(  E^{c}\right)  }u\left(  x\right)  +\left(
1+\beta\right)  \frac{P\left(  E^{c}\right)  }{1+\beta P\left(  E^{c}\right)
}u\left(  y\right)  \label{eq:GM}%
\end{equation}
highlighting how the probability of elating states in $E$ reduced and that of disappointing states in $E^{c}$ is augmented. The formula%
\[
\mathbb{E}_{\beta}\left[  u\left(  X\right)  \right]  =\mathbb{E}^{Q^{\ast}%
}\left[  u\left(  X\right)  \right]
\]
implied by Theorem \ref{th:GulNeweyPowellPlus} is the generalization of (\ref{eq:GM}) to nonbinary acts.

Formula~\eqref{eq:GM} also presents a simple experimental procedure to elicit the parameters $u$ and $\beta$ of the model, as shown in the next Section~\ref{sec:elicitation}.

\section{Expectiled utility theory}

\label{sec:expectiled-axioms}

\subsection{Axiomatization}
\label{sec:axiom}
In this section, we consider a preference relation $\succsim$ over the set
$\mathbb{X}$ of all acts that take a finite number of values in a connected,
metric, and separable space $\mathcal{X}$, called \emph{simple acts}.
Moreover, in the tradition of Savage (1954), $\left(  \Omega,\mathcal{F}%
,P\right)  $ is assumed to be adequate, that is, either nonatomic or such that
$\mathcal{F}$ is generated by a finite partition over which $P$ is uniform.
The set of all simple random variables, which are $\mathcal{F}$-measurable
real-valued functions that take finitely many values, is denoted by
$B_{0}(\Omega, \mathcal{F})$. In this framework we provide an axiomatic
characterization of the preferences on $\mathbb{X}$ that admit an expectiled
utility representation.

The results of the previous sections guarantee that expectiled utility
preferences are probabilistically sophisticated and invariant biseparable. The
first property, probabilistic sophistication (Machina and Schmeidler, 1992),
is a preferential form of first-order stochastic dominance. It says that acts
that deliver better outcomes with higher probability are preferred. Formally:
\[
P\left(  \omega:X\left(  \omega\right)  \succsim x\right)  \geq P\left(
\omega:Y\left(  \omega\right)  \succsim x\right)  \text{ for all }%
x\in\mathcal{X}\text{ implies }X\succsim Y
\]
with strict preference if the inequality is strict for some $x$.

The second property, \emph{invariant biseparability}, requires the possibility
of separating a cardinal utility from attitudes toward uncertainty
(Ghirardato, Maccheroni, and Marinacci, 2005). Formally, this means that there
exists a continuous nonconstant function $u:\mathcal{X}\rightarrow\mathbb{R}$
and a monotone, positively homogeneous, and constant-additive functional
$I:B_{0}(\Omega,\mathcal{F})\rightarrow\mathbb{R}$ such that, for all acts $X$
and $Y$,
\begin{equation}
X\succsim Y\Longleftrightarrow I(u\left(  X\right)  )\geq I(u\left(  Y\right)
) \label{eq:nove}%
\end{equation}
with $I(1_{F})\in\left(  0,1\right)  $ for some $F\in\mathcal{F}$. GMMS
provide an axiomatization of invariant biseparable preferences in the
framework of this section. The utility function $u$ that they obtain is
cardinally unique and $I$ is unique. Recently, Castagnoli, Cattelan,
Maccheroni, Tebaldi, and Wang (2022) and Chandrasekhar, Frick, Iijima, and Le
Yaouanq (2022) provided concrete representations of the functional $I$
appearing in (\ref{eq:nove}). When preferences are invariant biseparable, for
all outcomes $x$ and $y$, it is possible to elicit from betting behavior (that
is from the restriction of $\succsim$ to bets) a \emph{preference midpoint}
\[
\frac{1}{2}x\oplus\frac{1}{2}y\text{ in }\mathcal{X}\text{ such that }u\left(
\frac{1}{2}x\oplus\frac{1}{2}y\right)  =\frac{1}{2}u\left(  x\right)
+\frac{1}{2}u\left(  y\right)  ,
\]
which is unique up to indifference. \bigskip

\noindent\textbf{Remark.} This was first shown by GMMS (Lemma 3); subsequently
the theoretical and the experimental literature presented alternative ways
of obtaining $(1/2)x\oplus\left(  1/2\right)  y$ from the preference
$\succsim$, and put this technology into action. We refer the reader to
K\"{o}bberling and Wakker (2003), Abdellaoui, Bleichrodt, and Paraschiv (2007),
Baillon, Driesen, and Wakker (2012), Dean and Ortoleva (2017), Ghirardato and
Pennesi (2020), Chateauneuf, Maccheroni, and Zank (2026). \bigskip

The additional axiom required to characterize \emph{maxmin expected utility
preferences} within invariant biseparable ones is:%
\begin{equation}
X\sim Y\implies X\precsim\dfrac{1}{2}X\oplus\dfrac{1}{2}Y\label{eq:AHHA}%
\end{equation}
called \emph{ambiguity hedging} (GMMS, Proposition 10).\footnote{Maxmin
expected utility preferences are represented by $v_{X}=\min_{Q\in
\mathcal{Q}}\mathbb{E}^{Q}\left[  u\left(  X\right)  \right]  $ where $\mathcal{Q}$ is a set of probabilities over $\mathcal{F}$. Their
duals, \emph{maxmax expected utility preferences}, represented by $v_{X}%
=\max_{Q\in\mathcal{Q}}\mathbb{E}^{Q}\left[  u\left(  X\right)
\right]  $, are characterized by the  \emph{ambiguity
speculating} axiom, obtained by reversing the preference in (\ref{eq:AHHA}). See again GMMS.}
Our novel key axiom,
\emph{disappointment hedging}, is indeed a stronger version of it:
\begin{equation}
X\sim Y\implies\frac{1}{2}W\oplus\frac{1}{2}X\precsim\frac{1}{2}W\oplus
\frac{1}{2}Y \label{eqDhhD}%
\end{equation}
for all simple acts $W$ that have the same disappointing states as $X$, that
is, $\{\omega:W(\omega)\prec W\}=\{\omega:X(\omega)\prec X\}$. We already
discussed its interpretation in the introduction: if $W$ has the same
disappointing states as $X$, then mixing $X$ with $W$ offers no protection
since both acts disappoint in the same event. By contrast, mixing $Y$ with
$W$ may offer some hedging benefits, thus making the mixture of $Y$ with $W$
preferable. By reversing the preference in (\ref{eqDhhD}) we obtain a dual
axiom, \emph{elation speculating}, with the inverse meaning.  

\begin{theorem}
[Expectiled Utility]\label{th:expectiled-utility}Let $(\Omega,\mathcal{F},P)$
be an adequate probability space and $\mathcal{X}$ be a connected and
separable metric space. The following conditions are equivalent for a binary
relation $\succsim$ on $\mathbb{X}$:

\begin{enumerate}
\item[(i)] $\succsim$ is probabilistically sophisticated, invariant
biseparable, and disappointment hedging (resp.~elation speculating);
\end{enumerate}

\begin{enumerate}
\item[(ii)] there exists a continuous and nonconstant function $u:\mathcal{X}%
\rightarrow\mathbb{R}$ and a number $\beta\geq0$ (resp.~$-1<\beta\leq0$) such
that
\[
X\succsim Y\iff\mathbb{E}_{\beta}\left[  u\left(  X\right)  \right]
\geq\mathbb{E}_{\beta}\left[  u\left(  Y\right)  \right]  .
\]

\end{enumerate}

In this case, $u$ is cardinally unique and $\beta$ is unique.
\end{theorem}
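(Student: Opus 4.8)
The plan is to establish the two implications separately; $(ii)\Rightarrow(i)$ is a verification, while $(i)\Rightarrow(ii)$ carries essentially all the difficulty.

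For $(ii)\Rightarrow(i)$ I would take $I=\mathbb{E}_\beta$ with the given $u$ and check the three properties. Invariant biseparability is immediate from (\ref{eq:nove}) and the Newey--Powell properties recalled in Section~\ref{sec:expectiled-utility}: $\mathbb{E}_\beta$ is monotone, positively homogeneous and constant-additive, and $\mathbb{E}_\beta[1_E]\in(0,1)$ whenever $0<P(E)<1$. Probabilistic sophistication holds because $\mathbb{E}_\beta[u(X)]$ depends on $X$ only through the law of $u(X)$ and $\mathbb{E}_\beta$ is strictly monotone with respect to first-order stochastic dominance. For disappointment hedging, the preference midpoint identity $u(\tfrac12 W\oplus\tfrac12 X)=\tfrac12 u(W)+\tfrac12 u(X)$ and positive homogeneity reduce the axiom to: if $\mathbb{E}_\beta[u(X)]=\mathbb{E}_\beta[u(Y)]$ and $W$ has the same disappointment states as $X$, then $\mathbb{E}_\beta[u(W)+u(X)]\le\mathbb{E}_\beta[u(W)+u(Y)]$. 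This follows from Theorem~\ref{th:GulNeweyPowellPlus}: since the minimizing measure $Q^{\ast}$ depends on the act only through its disappointment event, $W$ and $X$ share that event, so $\mathbb{E}_\beta[u(W)+u(X)]=\mathbb{E}_\beta[u(W)]+\mathbb{E}_\beta[u(X)]$; then superadditivity for $\beta\ge0$ gives $\mathbb{E}_\beta[u(W)+u(Y)]\ge\mathbb{E}_\beta[u(W)]+\mathbb{E}_\beta[u(Y)]=\mathbb{E}_\beta[u(W)]+\mathbb{E}_\beta[u(X)]$. The elation speculating case is the mirror image, using subadditivity for $-1<\beta\le0$.

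For $(i)\Rightarrow(ii)$ the first step is structural: GMMS's axiomatization of invariant biseparability yields a continuous nonconstant $u:\mathcal{X}\to\mathbb{R}$ and a monotone, positively homogeneous, constant-additive $I$ on $B_0(\Omega,\mathcal{F})$ with $X\succsim Y\iff I(u(X))\ge I(u(Y))$ and $I(1_E)\in(0,1)$ for some $E$. Probabilistic sophistication forces $I$ to be law-invariant: two simple random variables with the same law and values in the interval $u(\mathcal{X})$ arise from equidistributed acts, hence are $I$-indifferent, and positive homogeneity with constant-additivity extends law invariance to all of $B_0(\Omega,\mathcal{F})$. The second step converts disappointment hedging, via the midpoint identity and positive homogeneity, into the functional statement: $I(U)=I(U')$ together with $\{U<I(U)\}=\{V<I(V)\}$ imply $I(V+U)\le I(V+U')$ for all $U,U',V\in B_0(\Omega,\mathcal{F})$. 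Taking $V=U$ and translating shows $I$ is superadditive, hence a law-invariant coherent (maxmin) functional; taking $U'$ to be a positive multiple of $V$ (after translating so that $I(V)>0$) gives the reverse of superadditivity whenever $U$ and $V$ have the same disappointment event. Thus one obtains the key additivity lemma: $\{U<I(U)\}=\{V<I(V)\}$ implies $I(U+V)=I(U)+I(V)$.

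The third step is the heart of the argument. Superadditivity, monotonicity, positive homogeneity and constant-additivity give a maxmin representation $I(U)=\min_{Q\in\mathcal{Q}}\mathbb{E}^{Q}[U]$ over a law-invariant set $\mathcal{Q}$ of densities. The additivity lemma, law invariance, and nonatomicity (or adequacy) of $(\Omega,\mathcal{F},P)$ force the minimizing density for each $U$ to be constant on the disappointment event $D_U=\{U<I(U)\}$ and constant on its complement, with the larger value on $D_U$; equivalently it equals $\big(1_{D_U^{c}}+\gamma\,1_{D_U}\big)/\big(1+(\gamma-1)P(D_U)\big)$ for some $\gamma=\gamma(P(D_U))\ge1$. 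The hardest point, and the one requiring the new techniques announced in the paper, is to prove that $\gamma$ is a constant, independent of the event: optimality of $D_U$ against competing events easily yields monotonicity of the coefficients in $P(D_U)$, but ruling out genuinely probability-dependent disappointment weighting calls for comparing how $I$ evaluates different couplings (dependence structures) of fixed marginal distributions. Once $\gamma\equiv1+\beta$, unwinding $I(U)=\mathbb{E}^{Q^{\ast}}[U]$ shows that $v=I(U)$ solves $\mathbb{E}[(U-v)^{+}]=(1+\beta)\mathbb{E}[(v-U)^{+}]$, so $I=\mathbb{E}_\beta$ by Proposition~\ref{prop:expectiles}, with $\beta\ge0$ because $I$ is superadditive. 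Cardinal uniqueness of $u$ is from GMMS, and $\beta$ is unique because $\mathbb{E}_\beta[U]$ is strictly decreasing in $\beta$ for nonconstant $U$; the elation speculating case is dual and yields $-1<\beta\le0$.
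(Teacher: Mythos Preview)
Your verification $(ii)\Rightarrow(i)$ and the first two steps of $(i)\Rightarrow(ii)$---extracting $u,I$ from GMMS and then deriving superadditivity and the additivity lemma (equal disappointment events force $I(U+V)=I(U)+I(V)$) from disappointment hedging---are essentially the paper's argument; the paper packages this derivation as its Theorem~\ref{th-characterizationEx1}.

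Your third step diverges from the paper and contains the real gap. You propose to pass through the maxmin dual, argue that the minimizing density for each $U$ is two-level on $D_U$ and $D_U^c$ with ratio $\gamma(P(D_U))\ge1$, and then show $\gamma$ is constant. You admit this last point is ``the hardest'' but offer only ``comparing how $I$ evaluates different couplings,'' which is not a mechanism. The paper does \emph{not} go through the dual at this stage. Its Theorem~\ref{th-characterizationEx} works primally: for each event $S$ it builds open cones $\mathcal{X}_S^{\pm}$ of acts on which the disappointment event is forced to be $S^c$, so that the additivity lemma makes $X\mapsto I(X1_S)$ and $X\mapsto I(X1_{S^c})$ additive there; extending these to linear functionals yields measures $P_S,Q_S$ supported on $S,S^c$ with $\mathbb{E}^{P_S}[X_+]=\mathbb{E}^{Q_S}[X_-]$ whenever $I(X)=0$ and $\{X\ge0\}=S$. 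Law invariance then makes $P_S,Q_S$ uniform, so everything hinges on a single ratio $\lambda_S/\eta_S$. The step your sketch does not supply is that this ratio is the same for all $S$: the paper takes $S_1\subset S_2$ and the one-parameter family $X_\epsilon=1_{S_1}-\epsilon\,1_{S_2\setminus S_1}-f(\epsilon)\,1_{S_2^c}$ with $f$ chosen so that $I(X_\epsilon)=0$; for $\epsilon>0$ the relevant event is $S_1$, for $\epsilon=0$ it is $S_2$, and $L^\infty$-continuity of $I$ as $\epsilon\downarrow0$ forces $\lambda_{S_1}/\eta_{S_1}=\lambda_{S_2}/\eta_{S_2}$. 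Your remark that ``optimality of $D_U$ against competing events easily yields monotonicity of the coefficients'' is neither what is needed nor clearly true, and the vague appeal to couplings does not substitute for this continuity argument.
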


This is the  \textbf{second main theorem} of this paper providing a novel
axiomatic foundation of Gul's theory: in a general Savage framework, with
transparent axioms that reveal the asymmetry between disappointment and
elation, and supported by the explicit representations of the previous section.

\subsection{Elicitation}

\label{sec:elicitation}

Remarkably, the parameters $u$ and $\beta$ appearing in Theorem
\ref{th:expectiled-utility} can be directly elicited from preferences. 

\begin{theorem}
[Elicitation]\label{th:elicitation}Let $P$ be nonatomic, $\succsim$ satisfy
the equivalent conditions of Theorem~\ref{th:expectiled-utility}, and $x,y \in \mathcal{X}$ be such that $x \succ y$.

\begin{enumerate}[(i)]
\item Any $F\in\mathcal{F}$ with $P(F)\in (0,1)$ such
that
\begin{equation}
\label{eq:pro}
\dfrac{1}{2}x\oplus\dfrac{1}{2}y\sim xFy
\end{equation}
yields
\begin{equation}
\beta=\frac{2P(F)-1}{1-P(F)},\label{eq:beta-elicitation}%
\end{equation}
and such an event $F$ always exists.

\item  For each $z$ such that $x \succsim z \succsim y$, any  $E\in\mathcal{F}$ such that
\begin{equation*}
z\sim xEy
\end{equation*}
yields
\begin{equation}
u(z)=\frac{P(E)}{1+\beta(1-P(E))}u(x)+(1+\beta)\frac{1-P(E)}{1+\beta
(1-P(E))}u(y),\label{eq:utility-elicitation}%
\end{equation}
and such an event $E$ always exists.
\end{enumerate}

\end{theorem}

Part (i) of Theorem \ref{th:elicitation} shows that $\beta$ can be retrieved by eliciting a single indifference between a
sure outcome and a binary act. Also, the probability of the ``matching  event'' $F$ immediately declares the attitudes of the DM: $P(F)>1/2$
corresponds to disappointment aversion, $P(F)=1/2$ to expected utility, and
$P(F)<1/2$ to elation seeking.
Also observe that event $F$ is characterized by
\[
P\left(  F\right)  =\frac{1+\beta}{2+\beta}%
\]
irrespective of $x$ and $y$. This provides another testable implication of Theorem \ref{th:expectiled-utility}. 
Moreover, after $\beta$ has been elicited, under the innocuous normalization $u(x)=1$ and $u(y)=0$,  \eqref{eq:utility-elicitation} in  part (ii) becomes
\[
u(z)=\frac{P(E)}{1+\beta(1-P(E))}.
\]

Finally, our approach differs from that of Cerreia-Vioglio et al.~(2020): we only consider
indifferences between binary acts, $\beta$ is retrieved first, and $u$ is derived
from $\beta$. By contrast, Cerreia-Vioglio et al.~(2020) first retrieve $u$, through the \textquotedblleft
expected utility core\textquotedblright\ of the whole preference over
lotteries, then obtain $\beta$ from $u$ by the lottery counterpart of equation
(\ref{eq:pro}).
\bigskip

\noindent\textbf{Remark.} An immediate corollary of Theorem \ref{th:elicitation} is that: given any two expectiled utility preferences $\succsim_{\mathrm A}$ and $\succsim_{\mathrm B}$ that agree on $x\succ y$, the condition $\beta_{\mathrm A}\geq\beta_{\mathrm B}$ is equivalent to
\begin{equation}
\label{comparative}
xEy \succsim_{\mathrm A} z_{\mathrm A}
\implies
xEy \succsim_{\mathrm B} z_{\mathrm B},
\end{equation}
for every event $E$ and any preference midpoints $z_{\mathrm A}$ and $z_{\mathrm B}$ of $x$ and $y$.

In words, whenever the potential elation from betting on $E$ offsets disappointment for decision maker $\mathrm A$, the same is true for decision maker $\mathrm B$. Thus, comparative disappointment aversion can be assessed independently of the taste components $u_{\mathrm A}$ and $u_{\mathrm B}$.

 \bigskip

\noindent\textbf{Remark.}  
We can use Theorem \ref{th:elicitation} to obtain a simple lab elicitation procedure for $\beta$ involving   three indifference judgments on 
fair coin tosses,
and one on a binary bet. 
Fix  any two outcomes $x\succ y$ and an event $H$ with $P(H)=1/2$. 
Let $z,z_{-},z_{+}\in\mathcal{X}$ be  the 
certainty equivalents  given by
\begin{equation}
 z\sim xHy,\qquad z_{+}\sim xHz,\qquad z_{-}\sim zHy. \label{eq:elicitation-consequences}
\end{equation}
These exist by  \eqref{eq:GM} because $u$ is continuous and $u\left( \mathcal{X} \right)$ is an interval. Moreover, by the same formula,
\[x \succ z_{+}\succ z\succ z_{-} \succ y. \] 
 Direct calculation shows that $z$ is a preference midpoint of $z_+$ and $z_-$. Now by Theorem \ref{th:elicitation}, there is an event $F$ such that  the binary act $z_{+}F z_{-}$ is indifferent to $z$,
 and $\beta$ is elicited by \eqref{eq:beta-elicitation}.

\section{Expectiled reward: behavioral and learning foundations}

\label{sec:exp-v}

In this section, we take a direct look at the essential object behind
expectiled utility (axiomatized above), that is, \emph{expectiled reward}. In
this case, the acts that the DM is facing are bounded random variables; thus
$\mathbb{X}$ is $L^{\infty}\left(  P\right)  $, constant acts are real numbers,
and subjective mixtures (with $\oplus$) are replaced by usual averages (with
$+$). The elements of $\mathbb{X}$ are interpreted as random rewards (utility-valued acts in the decision-theoretic jargon, neural representations in the neuroscience jargon).

\subsection{Expectiled reward}

The following properties are considered for a binary relation $\succsim$ on
$\mathbb{X}$, where typical elements are denoted by $U$, $V$, and $Z$.

\begin{itemize}
\item \emph{monotonicity:} given any $x,y\in\mathbb{R}$, $x\geq y$ implies
$x\succsim y$ and conversely;

\item \emph{continuity:} the upper and lower level sets of $\succsim$ are
closed in the $L^{\infty}\left(  P\right)  $-norm;

\item \emph{disappointment aversion:} given any $U,V\in L^{\infty}\left(  P\right)  $,
\[
U\sim V\implies Z+U\precsim Z+V
\]
for all $Z\in L^{\infty}\left(  P\right)$ that have the same disappointing states as $U$.

\end{itemize}

The latter is the only new axiom since the others are well known and well
studied in the literature. Its interpretation is very clear: the DM shies away
from the accumulation of disappointment. If $\{\omega:Z(\omega)\prec
Z\}=\{\omega:U(\omega)\prec U\}$, and the DM holding $Z$ acquires $U$ then he
runs the risk of being simultaneously disappointed by both his (utility)
payoffs, whereas $V$ may allow some elating compensation when $Z$ disappoints.
An even more precise name for this axiom would be \emph{disappointment
accumulation aversion}; we opted for the shorter version
for brevity. The obvious dual axiom is called \emph{elation (accumulation)
seeking}.

\begin{theorem}
[Expectiled Reward]\label{th:expectiled-value}\label{th-ax-expectileutils} Let
$(\Omega,\mathcal{F},P)$ be an adequate probability space. The following
conditions are equivalent for a binary relation $\succsim$ on $L^{\infty}\left(  P\right)$:

\begin{enumerate}
\item[(i)] $\succsim$ is monotone, continuous, probabilistically
sophisticated, and disappointment averse (resp.~elation seeking);

\item[(ii)] there exists a number $\beta\geq0$ (resp.~$-1<\beta\leq0$) such
that
\[
U\succsim V\iff\mathbb{E}_{\beta}\left[  U \right]  \geq\mathbb{E}_{\beta
}\left[  V \right]  ;
\]

\item[(iii)] $\succsim$ is monotone, continuous, probabilistically
sophisticated, and disappointment hedging (resp.~elation
speculating).\footnote{With $\oplus= +$.}
\end{enumerate}

In this case, $\beta$ is unique.
\end{theorem}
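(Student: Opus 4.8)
I plan to establish the implications (ii)$\Rightarrow$(i), (ii)$\Rightarrow$(iii), (i)$\Rightarrow$(ii), and (iii)$\Rightarrow$(ii); the equivalence of (i) and (iii) is then a by-product, and uniqueness of $\beta$ is inherited from Theorem~\ref{th:expectiled-utility}. I treat only the disappointment-averse case $\beta\ge 0$, the elation case $-1<\beta\le 0$ being entirely dual. As a common preliminary: by monotonicity and continuity, and since $\mathbb{R}$ is connected, every $U\in B_0(\Omega,\mathcal{F})$ is indifferent to exactly one constant; writing $I(U)$ for that constant, $I$ represents $\succsim$, satisfies $I(c)=c$, and is monotone; by probabilistic sophistication, $I(U)$ depends on $U$ only through its law and is monotone for first-order stochastic dominance; and the disappointment event of $U$ is $D_U=\{\omega:U(\omega)<I(U)\}$, so that ``$Z$ has the same disappointment states as $U$'' means $D_Z=D_U$.

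\medskip
\noindent\emph{Easy directions.} Here I would verify the axioms directly for $\mathbb{E}_\beta[\cdot]$. Monotonicity, $L^\infty$-continuity, and law-invariance (hence probabilistic sophistication) are the classical properties of expectiles recalled around Proposition~\ref{prop:expectiles} and Theorem~\ref{th:expectiles}. For disappointment aversion (resp.\ disappointment hedging), the crucial observation is that the optimal distortion $Q^{\ast}$ of (\ref{q:stella}) depends on the underlying random variable only through its disappointment event: if $D_Z=D_U$ then $Q^{\ast}_Z=Q^{\ast}_U$, whence $\mathbb{E}^{Q^{\ast}_U}[Z]=\mathbb{E}^{Q^{\ast}_Z}[Z]=\mathbb{E}_\beta[Z]$ by Theorem~\ref{th:GulNeweyPowellPlus}. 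Combining this with the minimality in the maxmin representation and with the positive homogeneity and superadditivity of $\mathbb{E}_\beta[\cdot]$ for $\beta\ge 0$, one obtains, whenever $\mathbb{E}_\beta[U]=\mathbb{E}_\beta[V]$ and $D_Z=D_U$,
\[
\mathbb{E}_\beta[Z+U]\ \le\ \mathbb{E}^{Q^{\ast}_U}[Z+U]\ =\ \mathbb{E}_\beta[Z]+\mathbb{E}_\beta[U]\ =\ \mathbb{E}_\beta[Z]+\mathbb{E}_\beta[V]\ \le\ \mathbb{E}_\beta[Z+V],
\]
which is exactly disappointment aversion; the disappointment-hedging inequality is the same display with $\tfrac12$-mixtures.

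\medskip
\noindent\emph{Hard direction.} For (i)$\Rightarrow$(ii) I would follow the blueprint of the Expectiled Utility Theorem, now with the added convenience that the outcome space $\mathbb{R}$ already carries a linear structure. The first --- and most substantial --- step is to upgrade the soft properties of $I$: a careful use of the disappointment-aversion axiom, fed with indifferent pairs $U\sim V$ and with rescaled or translated companions $Z$ sharing the prescribed disappointment event $D_U$, should yield that $I$ is positively homogeneous, constant-additive, and concave; together with monotonicity, continuity, and the non-triviality $I(1_E)\in(0,1)$ (coming from probabilistic sophistication and the adequacy of $(\Omega,\mathcal{F},P)$), this makes $\succsim$ invariant biseparable. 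Theorem~\ref{th:expectiled-utility} then applies with $\mathcal{X}=\mathbb{R}$ and yields a continuous, strictly increasing $u:\mathbb{R}\to\mathbb{R}$ and a $\beta\ge 0$ with $U\succsim V\iff\mathbb{E}_\beta[u(U)]\ge\mathbb{E}_\beta[u(V)]$, that is, $I(U)=u^{-1}\!\big(\mathbb{E}_\beta[u(U)]\big)$. A last step pins down $u$: since this $I$ is simultaneously positively homogeneous and constant-additive, a classical quasi-arithmetic-mean argument --- using, as the adequacy of $(\Omega,\mathcal{F},P)$ permits, random variables taking values of both signs together with their translates --- forces $u$ to be affine, whence $I=\mathbb{E}_\beta[\cdot]$ and (ii) follows. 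The implication (iii)$\Rightarrow$(ii) proceeds identically, disappointment hedging yielding the same upgrade of $I$ (indeed, disappointment hedging and disappointment aversion coincide once $I$ is known to be positively homogeneous).

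\medskip
\noindent I expect the main obstacle to be exactly this upgrade step. In contrast with ambiguity hedging, the disappointment-aversion axiom only compares acts that share a \emph{common} disappointment event, so the usual arguments --- which exploit arbitrary indifferent pairs --- do not transfer verbatim; one must construct, for each target random variable, a rich enough supply of companions with the prescribed disappointment event, together with indifferent pairs whose mixtures reveal how $I$ behaves under scaling, translation, and ultimately under changes of dependence. This is precisely where the new techniques for analyzing attitudes toward dependence structures, already needed for Theorem~\ref{th:expectiled-utility}, come into play.
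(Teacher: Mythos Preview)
Your proposal is correct, but it takes a longer route than the paper. The paper does not pass through Theorem~\ref{th:expectiled-utility} at all. Instead, it translates the preference axioms in (i) (resp.~(iii)) into the functional properties of the certainty-equivalent $I$: strong monotonicity, $L^\infty$-norm continuity, and \emph{decrease in concordant sums} (resp.~\emph{decrease in concordant mixtures}). It then invokes two stand-alone characterizations proved in the appendix (Theorems~\ref{th-characterizationEx1} and~\ref{th-characterization_general}), which state that any normalized functional with exactly these three properties is $\mathbb{E}_\beta$ for some $\beta\ge 0$. The proofs of those theorems contain precisely what you call the ``upgrade step'': from decrease in concordant sums one derives constant additivity, superadditivity, additivity in concordant sums, and positive homogeneity, and then concludes via Theorem~\ref{th-characterizationEx}.

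Your plan performs the same upgrade, but then, instead of stopping, detours through Theorem~\ref{th:expectiled-utility} and a final ``$u$ is affine'' argument. Both addenda are redundant: once $I$ is positively homogeneous and constant-additive, the pair $(I,\mathrm{id})$ is already an invariant biseparable representation, so by cardinal uniqueness the $u$ produced by Theorem~\ref{th:expectiled-utility} is affine automatically---no quasi-arithmetic-mean argument is needed---and since both $I$ and $\mathbb{E}_\beta$ are normalized certainty equivalents of the same preference, they coincide. (Note also that Theorem~\ref{th:expectiled-utility} itself is proved by reducing to Theorem~\ref{th-characterizationEx1}, so you would be invoking the same machinery twice.) Your ``easy direction,'' on the other hand, gives a pleasant self-contained proof of additivity in concordant sums via the explicit minimizer $Q^\ast$ of Theorem~\ref{th:GulNeweyPowellPlus}; the paper instead cites this as a known property of expectiles (Lemma~\ref{lm-propertyEx}(vii)).
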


The axiomatization of expected utility \`{a} la Savage (1954) descends from
that of \emph{expected reward} of de Finetti (1931). Moreover, the latter
builds on a set of even more transparent axioms.
Theorem~\ref{th:expectiled-value} shows that the same is true for expectiled
utility and expectiled reward: the behavioral force behind the Savage
representation, aversion to accumulated disappointments in the same states,
precisely pins down expectiled reward, under some other standard decision-theoretic assumptions.

\subsection{Neurocomputational foundation}

Here we show how the behavior we characterized above can arise from temporal-difference learning: when a value estimate is updated through asymmetric reward prediction errors, the long-run learned value is precisely the expectiled reward. This kind of asymmetric updating is exactly what
reward-based learning in the brain is now known to exhibit.

Consider a
neural system repeatedly facing  the same random outcome and trying to
assign it a value estimate. For instance, a rat might press a given button
and receive, in each trial, a random amount of fruit juice drawn from the
distribution of $X$ in an i.i.d.\ manner, as in \citet{DabneyEtAl2020}.

In contemporary neuroscience, the neural value estimate $V_t=V_t(X)$ of the
random outcome $X$ after trial $t$ is understood as being represented by
activity in valuation circuits, including the ventral striatum,
orbitofrontal cortex, and medial prefrontal cortex, where value signals of
this kind are most robustly recorded. Such an estimate is updated after
each trial in proportion to the reward prediction error (RPE), namely the
difference $\delta_t=U_{t+1}-V_t$ between the reward $U_{t+1}=U_{t+1}(X)$
received at trial $t+1$ and the current estimate $V_t$. The canonical
neuroscientific finding is that phasic activity of midbrain dopamine
neurons, especially in the ventral tegmental area (VTA) and substantia
nigra pars compacta (SNc), encodes such a prediction-error signal:
dopamine-neuron firing increases when rewards are better than expected, is
approximately unchanged when rewards are as expected, and decreases or
pauses when expected rewards fail to arrive. This is the central finding of
\citet{SchultzDayanMontague1997}, and later work has quantified dopamine
firing as an RPE-like signal in progressively more detail, notably in the
distributional reinforcement learning account of \citet{DabneyEtAl2020}.

The fact that negative and positive prediction errors
$\delta_t=U_{t+1}-V_t$ of the same magnitude can produce different
value-estimate updates is also supported both at a physiological and at a
behavioral level. In particular, \citet{FrankSeebergerOReilly2004} show
that Parkinson's patients with depleted striatal dopamine learn
preferentially from negative outcomes off medication, and preferentially
from positive outcomes on dopaminergic medication. A simple biologically
plausible reduced-form model of neural value dynamics is
\[
V_{t+1}=V_t+\alpha_t
\begin{cases}
w_+(U_{t+1}-V_t), & \text{if } U_{t+1}\geq V_t, \\
w_-(U_{t+1}-V_t), & \text{if } U_{t+1}<V_t,
\end{cases}
\]
where $w_+$ and $w_-$ are the effective gains on positive and negative
RPEs, respectively, and the sequence of numerical coefficients  $\alpha_t$ describes the learning effects of these gains over trials. This is the type of dynamic suggested by \citet{DabneyEtAl2020}. The biological interpretation is that the coefficients multiplying
positive and negative RPEs summarize the net efficacy with which
dopaminergic fluctuations induce plasticity in downstream circuits.\footnote{A
useful reduced-form picture is to associate the positive-RPE channel with
dopamine bursts and D1/direct-pathway plasticity, and the negative-RPE
channel with dopamine dips, D2/indirect-pathway plasticity, and aversive
or omission-related pathways \citep{GerfenSurmeier2011}.} 

A natural specification for the learning plasticity
coefficients is
\[
\alpha_{t}=\alpha_{0}\frac{T}{T+t},
\]
with $\alpha_{0}>0$ and $T\in \mathbb{N}$, so that $\alpha
_{T}=\alpha_{0}/2$ can be interpreted as a \textquotedblleft learning
half-life\textquotedblright.  Moreover
the normalization
\[
w_+=1, \qquad w_-=1+\beta,
\]
is completely innocuous, since it is obtained by the general two-gain model via
\[
\widetilde{\alpha}_t=\alpha_t w_+, \qquad 1+\beta=\frac{w_-}{w_+}.
\]

Our final theorem shows that this learning dynamic leads to expectiled reward evaluation: it  provides a plausible neurocomputational
foundation of Gul's theory of disappointment aversion and shows how simple
stylized agents can compute expectiled rewards.

\begin{theorem}
[Asymmetric RPE learning converges to expectiled reward]%
\label{th:RPE-learning} Let $\beta>-1$. Let $(U_{t})_{t\in\mathbb{N}}$ be an
i.i.d.~sequence of copies of  a bounded random variable $U$. Consider the
reinforcement learning process
\[
V_{t+1}=V_{t}+\alpha_{t}%
\begin{cases}
U_{t+1}-V_{t}, & \text{if }U_{t+1}\geq V_{t},\\
(1+\beta)(U_{t+1}-V_{t}), & \text{if }U_{t+1}<V_{t},
\end{cases}
\]
with deterministic initial condition $V_{0}\in\mathbb{R}$. If the  learning
plasticity coefficients $(\alpha_{t})_{t\ge 0}$ satisfy
\[
\alpha_{t}>0,\qquad\sum_{t=0}^{\infty}\alpha_{t}=\infty,\qquad\sum
_{t=0}^{\infty}\alpha_{t}^{2}<\infty,
\]
then
\[
V_{t}\rightarrow\mathbb{E}_{\beta}[U]\qquad\text{a.s.}%
\]

\end{theorem}

This \textbf{third} (and final) \textbf{main result} finding connects dopamine-based learning and disappointment
aversion through the econometric theory of expectiles. Neural agents whose
reward estimates are updated by asymmetric RPE learning converge to
expectiled reward evaluations.

\section{More expectiled utility theory}
\label{sec:expectiled-more}

This section develops additional axiomatic characterizations of expectiled utility preferences. 
These results offer a more comprehensive picture of the role of expectiled utilities among some classic decision models.


\subsection{Disappointment hedging, ambiguity hedging, and betweenness}

We clarify the relation between disappointment hedging and
ambiguity hedging by providing an alternative axiomatization of expectiled
utility based on a Savagean version of betweenness.


Betweenness is one of the classical weakenings of the von Neumann--Morgenstern
Independence Axiom. On a lottery space, it requires that, whenever two
lotteries $p$ and $q$ are indifferent, every probabilistic mixture of them is
indifferent to both:
\begin{equation}
p\sim q\quad\Longrightarrow\quad\lambda p+(1-\lambda)q\sim p,\qquad\lambda
\in\lbrack0,1].\label{eq:betweenness}%
\end{equation}
Thus betweenness preserves neutrality toward randomization among equally good
alternatives, while allowing preferences to be nonlinear away from
indifference sets. This idea is central in the weighted-utility and
implicit-utility tradition; see, among others, Chew (1983), Dekel (1986), Chew
(1989), and Chew and Epstein (1989). Gul's disappointment-aversion model is a
leading member of this class, and Cerreia-Vioglio et al.~(2020) provide
explicit representations for disappointment aversion and related betweenness preferences.

To reproduce betweenness in a Savage framework, one can randomize between acts
$X$ and $Y$ by means of an event $\Lambda \in\mathcal{F}$ that is independent
of the acts being mixed. Indeed, if $\Lambda $ is independent of both $X$ and
$Y$, then the act $X\Lambda  Y$, which agrees with $X$ on $\Lambda $ and with $Y$
on $\Lambda ^{c}$, has distribution $P(\Lambda )P_{X}+(1-P(\Lambda ))P_{Y}$,
where $P_{X}$ and $P_{Y}$ are the distributions of $X$ and $Y$. This motivates
the following definition of betweenness.

We say that $\succsim$ on $\mathbb{X}$ satisfies \emph{betweenness} if, for
all acts $X$ and $Y$ and every event $\Lambda \in\mathcal{F}$ that is
independent of both $X$ and $Y$,
\begin{equation}
X\sim Y\implies X\Lambda  Y\sim X.\label{eq:independent-betweenness}%
\end{equation}
This is a weakening of Savage's Sure Thing Principle that clearly mirrors the
weakening (\ref{eq:betweenness}) of von Neumann--Morgenstern's Independence
Axiom, actually it formalizes the way in which (\ref{eq:betweenness}) is
typically phrased in expository presentations.

\begin{theorem}
\label{th:expectiled-utility-bis}Let $(\Omega,\mathcal{F},P)$ be a nonatomic
probability space and $\mathcal{X}$ be a connected and separable metric space.
The following conditions are equivalent for a binary relation $\succsim$ on
$\mathbb{X}$:

\begin{enumerate}
\item[(i)] $\succsim$ is a probabilistically sophisticated maxmin
(resp.~maxmax) expected utility preference that satisfies betweenness;

\item[(ii)] there exists a continuous and nonconstant function $u:\mathcal{X}
\rightarrow\mathbb{R}$ and a number $\beta\geq0$ (resp.~$-1<\beta\leq0$) such
that
\[
X\succsim Y\iff\mathbb{E}_{\beta}\left[  u\left(  X\right)  \right]
\geq\mathbb{E}_{\beta}\left[  u\left(  Y\right)  \right]  .
\]

\end{enumerate}

In this case, $u$ is cardinally unique and $\beta$ is unique.
\end{theorem}

Although both lead to the expectiled utility model, the key condition in
Theorem~\ref{th:expectiled-utility} and that in
Theorem~\ref{th:expectiled-utility-bis} are conceptually quite different:
disappointment hedging has a transparent content in terms of avoidance of certain states, whereas betweenness reconnects our analysis to the nonexpected
utility theories developed at the end of the Eighties, but its behavioral
content in terms of disappointment/elation attitudes is unclear. Be that as it may,
for probabilistically sophisticated and invariant biseparable preferences,
disappointment hedging is equivalent to ambiguity hedging and betweenness,
yielding expectiled utility. Schematically:
\bigskip

\begin{figure}[h!]
	\centering
	\resizebox{.97\textwidth}{!}{\begin{tikzpicture}[
			box/.style={
				draw,
				rounded corners=3pt,
				line width=.55pt,
				align=center,
				inner xsep=8pt,
				inner ysep=5pt,
				minimum height=1.45cm,
				minimum width=4.35cm,
				font=\small
			},
			equiv/.style={
				font=\normalsize,
				inner sep=1pt
			}
			]
			\node[box] (A) at (0,0)
			{\textbf{Invariant biseparability}\\[2pt]
				$+$\\[2pt]
				\textbf{disappointment hedging}};
			\path (A.east) ++(.35,0) node[equiv, anchor=west] (AB) {$\iff$};
			\path (AB.east) ++(.35,0) node[box, anchor=west] (B)
			{\textbf{Maxmin expected utility}\\[2pt]
				$+$\\[2pt]
				\textbf{betweenness}};
			\path (B.east) ++(.35,0) node[equiv, anchor=west] (BC) {$\iff$};
			\path (BC.east) ++(.35,0) node[box, minimum width=3.75cm, anchor=west] (C)
			{\textbf{Expectiled utility}\\[2pt]
				$X\mapsto \E_\beta[u(X)]$\\[1pt]
				$\beta\ge 0$};
	\end{tikzpicture}}
\end{figure}

\noindent \textbf{Remark.} The maxmin/maxmax assumption in
Theorem~\ref{th:expectiled-utility-bis} (which is stronger than the invariant
biseparability used in Theorem~\ref{th:expectiled-utility}) is essential. It
cannot be replaced by invariant biseparability alone. Indeed, fix $p>1$ and
$\alpha\in(0,1)$, and let $q_{\alpha,p}(U)$ be the $L^{p}$-quantile of a
bounded random variable $U$, that is, it is the unique number $m$ solving
\[
\alpha\mathbb{E}\left[  (U-m)_{+}^{p-1}\right]  =(1-\alpha)\mathbb{E}\left[
(m-U)_{+}^{p-1}\right]  .
\]
For $p=2$ these functionals are precisely expectiles (see Proposition \ref{prop:expectiles}). Bellini et al.~(2014) study these generalized quantiles and show, in
particular, that the convex or concave members of this class are the
expectiles. For $p\neq2$, the preferences 
\[
X\succsim Y\quad\Longleftrightarrow\quad q_{\alpha,p}(u(X))\geq q_{\alpha
,p}(u(Y))
\]
are probabilistically sophisticated, invariant biseparable, and satisfy
betweenness, without being maxmin or maxmax.\bigskip

\noindent \textbf{Remark.}
Theorem \ref{th:expectiled-utility-bis} also clarifies the relation between
our Theorem \ref{th:GulNeweyPowellPlus} and a recent manuscript of
Gul and Pesendorfer (2026). Their analysis is formulated in a standard
monetary-lottery setting. Within their class of Lorenz Expected Utility
preferences, they characterize Gul's model by first-order risk aversion and
betweenness, and represent Gul's evaluation as the minimum over a
one-parameter family of rank-dependent expected-utility functionals with
piecewise-linear probability distortions.

The betweenness characterization in our Theorem
\ref{th:expectiled-utility-bis} links ambiguity hedging to first-order risk
aversion, or aversion to risk in utility units, in the spirit of Dekel's
(1989) analysis of diversification without the independence axiom. Moreover,
after translating between monetary lotteries and utility-valued acts, their
minimum-over-rank-dependent-utilities representation can be rewritten as the
maxmin expected-utility representation in Theorem
\ref{th:GulNeweyPowellPlus}. In sum, their representation can be viewed as
the distributional, monetary, Lorenz-curve counterpart of our Savagean,
act-based maxmin representation for general outcomes.

\subsection{Weakening invariant biseparability}

In Section \ref{sec:expectiled-axioms}, our characterization of expectiled
utility builds on the assumption of invariant biseparability, which requires
that the preference relation admits a representation of the form $X\mapsto
I(u(X))$, where $u:\mathcal{X}\rightarrow\mathbb{R}$ is cardinally unique,
continuous, and nonconstant, and $I:B_{0}(\Omega,\mathcal{F})\rightarrow
\mathbb{R}$ is monotone, positively homogeneous, and constant-additive
(\emph{a fortiori}, continuous). To highlight the scope of our disappointment
hedging axiom, we next show that a characterization of expectiled utility can
be obtained within the broader class of preference relations that admit a
representation of the form $X\mapsto I(u(X))$, where $u:\mathcal{X}%
\rightarrow\mathbb{R}$ is cardinally unique, continuous, and nonconstant, and
$I:B_{0}(\Omega,\mathcal{F})\rightarrow\mathbb{R}$ is only required to be
monotone, continuous, and normalized,\footnote{Normalization means that $I(v)=v$ for
all $v\in\mathbb{R}$.} that is, a Chisini mean. We call these \emph{separable preferences} and remark that they include most of the
preference relations studied in decision theory under uncertainty.

\begin{theorem}
\label{th-generalex} Let $(\Omega,\mathcal{F},P)$ be an adequate probability
space and $\mathcal{X}$ be a connected and separable metric space. The
following conditions are equivalent for a separable preference $\succsim$ on
$\mathbb{X}$ with representation $X\mapsto I(u(X))$:

\begin{enumerate}
\item[(i)] $\succsim$ is probabilistically sophisticated and disappointment
hedging (resp.~elation speculating) with midpoints defined by
\begin{align}
\label{eq-thmidpoint}u\left(  \frac{1}{2}x\oplus\frac{1}{2}y\right)  =\frac
{1}{2}u(x)+\frac{1}{2}u(y)
\end{align}
for all $x,y\in\mathcal{X}$;

\item[(ii)] there exists a number $\beta\geq0$ (resp.~$-1<\beta\leq0$) such
that $I=\mathbb{E}_{\beta}$ on $B_0(\Omega,\mathcal F,u(\mathcal X))$, where $B_0(\Omega,\mathcal F,u(\mathcal X))$ denotes the set of all simple random
variables taking values in $u(\mathcal X):=\{u(x): x\in\mathcal X\}$.

\end{enumerate}


\end{theorem}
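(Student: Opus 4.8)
The plan is to treat the functional $I$ itself as a preference over random payoffs and then invoke the Expectiled Value Theorem (Theorem~\ref{th:expectiled-value}). The direction (ii)$\Rightarrow$(i) will be a verification; the substantive direction (i)$\Rightarrow$(ii) will reduce to Theorem~\ref{th:expectiled-value} once the hypotheses on $\succsim$ have been transferred to $I$.

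For (ii)$\Rightarrow$(i): assume $I=\mathbb{E}_{\beta}$ with $\beta\geq 0$. Since $\mathcal{X}$ is connected and $u$ continuous, $u(\mathcal{X})$ is an interval, so $\tfrac12 u(x)+\tfrac12 u(y)\in u(\mathcal{X})$ for all $x,y$ and, by the intermediate value theorem, equals $u$ of some outcome; declaring such an outcome to be $\tfrac12 x\oplus\tfrac12 y$ makes \eqref{eq-thmidpoint} hold. Probabilistic sophistication is immediate because $\mathbb{E}_{\beta}[u(X)]$ depends on $X$ only through the law of $u(X)$ and $\mathbb{E}_{\beta}$ is monotone. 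Finally, reading the disappointment-hedging axiom of $\succsim$ through \eqref{eq-thmidpoint} turns it into the implication ``$\mathbb{E}_{\beta}[U]=\mathbb{E}_{\beta}[V]$ and $\{Z<\mathbb{E}_{\beta}[Z]\}=\{U<\mathbb{E}_{\beta}[U]\}$ $\Rightarrow$ $\mathbb{E}_{\beta}[\tfrac12 Z+\tfrac12 U]\leq\mathbb{E}_{\beta}[\tfrac12 Z+\tfrac12 V]$'' for random variables valued in $u(\mathcal{X})$, which is exactly the implication from (ii) to (iii) in Theorem~\ref{th:expectiled-value}. The elation-speculating case is symmetric, reversing the relevant inequality.

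For (i)$\Rightarrow$(ii): introduce the relation $\succsim^{*}$ on $B_0(\Omega,\mathcal{F})$ defined by $U\succsim^{*}V\iff I(U)\geq I(V)$ and aim to verify condition (iii) of Theorem~\ref{th:expectiled-value} for $\succsim^{*}$. Monotonicity ($x\succsim^{*}y\iff x\geq y$) and $L^{\infty}$-continuity follow at once from $I$ being normalized, monotone, and continuous. Probabilistic sophistication of $\succsim^{*}$ — equivalently, law-invariance and first-order stochastic dominance monotonicity of $I$ — is to be read off from probabilistic sophistication of $\succsim$ together with the representation $X\mapsto I(u(X))$; and disappointment hedging of $\succsim^{*}$ (with $\oplus=+$) is obtained by transporting the disappointment-hedging axiom of $\succsim$ along \eqref{eq-thmidpoint}, using that $(\Omega,\mathcal{F},P)$ is adequate and $u(\mathcal{X})$ has nonempty interior, so that the required utility profiles are realized by genuine acts. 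Granting these, Theorem~\ref{th:expectiled-value} yields a unique $\beta\geq 0$ (resp.\ $-1<\beta\leq 0$) with $U\succsim^{*}V\iff\mathbb{E}_{\beta}[U]\geq\mathbb{E}_{\beta}[V]$; since $I$ and $\mathbb{E}_{\beta}$ then represent the same weak order and are both normalized, the connecting strictly increasing transformation is the identity, whence $I=\mathbb{E}_{\beta}$, $\beta$ unique.

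The hard part will be the probabilistic-sophistication/disappointment-hedging transfer in the previous paragraph: the representation $X\mapsto I(u(X))$ exposes $I$ only on random variables valued in the interval $u(\mathcal{X})$, so a priori the axioms on $\succsim$ constrain $I$ only on the ``utility cone'' $\{V\in B_0(\Omega,\mathcal{F}):\mathrm{range}(V)\subseteq u(\mathcal{X})\}$. When $u$ is unbounded this cone is all of $B_0(\Omega,\mathcal{F})$ and nothing further is needed. In general I would propagate the conclusions to all simple random variables by a rescaling and $L^{\infty}$-approximation argument — writing a general $U$ as a limit of affine images $a_nU+b_n$ with range inside $u(\mathcal{X})$, comparing $I$ with $V\mapsto I(V\circ\sigma)$ for measure-preserving $\sigma$, and using monotonicity and continuity of $I$ and adequacy of the state space — or, equivalently, by rerunning the argument behind Theorem~\ref{th:expectiled-value} localized to the utility cone and then extending. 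This passage from the utility cone to $B_0(\Omega,\mathcal{F})$, and the care it requires with the interval structure of $u(\mathcal{X})$, is the main obstacle; everything else is bookkeeping on top of Theorem~\ref{th:expectiled-value}.
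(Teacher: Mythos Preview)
Your overall plan---translate the preference axioms into properties of $I$ and invoke a functional characterization---matches the paper, and the (ii)$\Rightarrow$(i) verification is fine. You also correctly identify the main obstacle in (i)$\Rightarrow$(ii): the representation exposes $I$ only on $B_0(\Omega,\mathcal F,u(\mathcal X))$.

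However, your primary route around this obstacle does not work. In the setting of Theorem~\ref{th-generalex}, $I$ is only monotone, continuous, and normalized---\emph{not} positively homogeneous or constant-additive, which is precisely what distinguishes this from the invariant biseparable case of Theorem~\ref{th:expectiled-utility}. Hence knowing $I$ on affine images $a_nU+b_n$ landing in $u(\mathcal X)$ says nothing about $I(U)$, and $I(\lambda U)=I(\lambda V)$ for small $\lambda$ does not yield $I(U)=I(V)$. You cannot propagate strong monotonicity or the concordant-mixture property from the utility cone to all of $B_0(\Omega,\mathcal F)$ by rescaling and continuity alone, so Theorem~\ref{th:expectiled-value} is not directly applicable.

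The paper instead invokes Theorem~\ref{th-characterization_general}, which is your fallback suggestion made precise: the Expectiled Value characterization localized to $B_0(\Omega,\mathcal F,A)$ for an interval $A$ with $0$ in its interior (normalize $u$ so that $A=u(\mathcal X)$ qualifies). The substantive internal step of that theorem is that decrease in concordant mixtures on $A$ alone already forces $I(\lambda X)=\lambda I(X)$ for $\lambda\in[0,1]$---the missing positive homogeneity is \emph{derived} from the hedging axiom, not assumed. Only after that is the extension $\widetilde I(X)=\lambda_X^{-1}I(\lambda_X X)$ well-defined and Theorem~\ref{th-characterizationEx1} brought to bear. So your instinct to localize is correct, but the step you are skipping is that positive homogeneity must first be extracted from disappointment hedging before any extension can be attempted.
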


This result shows that the substantive axioms that deliver Gul's theory
are indeed disappointment aversion and probabilistic sophistication. The
latter assumption is relaxed in the companion paper Bellini, Mao, Wang, and Wu
(2024). 
The reason why in Section \ref{sec:expectiled-axioms} we considered invariant biseparable preferences is that, to the best of our knowledge, a general axiomatization of separable preferences is not available in the literature. Therefore, Theorem \ref{th-generalex} applies to several axiomatically founded generalizations of invariant biseparable preferences, but it does not have a full  behavioral underpinning itself.

\newpage
\begin{appendix}

\section{Properties of expectiles}
In this section, we provide several properties of expectiles that will be used in the proofs of our results.
Denote by $L^1(P)$ the space of all random variables with finite expectation under $P$. We write $\essinf$ and $\esssup$ for the
essential infimum and essential supremum with respect to $P$, respectively.
We say that a sequence $\{X_n\}_{n\in\mathbb N}$ converges bounded a.s.
to $X$ if $X_n\to X$ a.s. and
$
\{\esssup|X_n|\}_{n\in\N}
$
is bounded.
\begin{lemma}\label{lm-propertyEx}
Let $X,Y\in L^1(P)$, and $\beta>-1$. We have the following properties of the expectile $\E_{\beta}$.
\begin{enumerate}[(i)]
\item {Law invariance}:~If $P_X=P_Y$, then $\E_{\beta}[X]=\E_{\beta}[Y]$;
\item {Strong monotonicity}:~If $P(X \ge t) \ge P(Y \ge t)$ for all $t \in \R$, then $\E_{\beta}[X] \ge \E_{\beta}[Y]$, with strict inequality if the inequality is strict for some $t$;
\item {Constant additivity}:~For any $m\in\R$, $\E_{\beta}[X+m]=\E_{\beta}[X]+m$;
\item {Positive homogeneity}: For any $\lambda\ge 0$, $\E_{\beta}[\lambda X]=\lambda\E_{\beta}[X]$;
\item {Superadditivity (resp.~subadditivity)}:~$\beta\ge 0$ (resp.~$\beta\in(-1,0]$) if, and only if, $\E_{\beta}[X+Y]\ge \E_{\beta}[X]+\E_{\beta}[Y]$ (resp.~$\E_{\beta}[X+Y]\le \E_{\beta}[X]+\E_{\beta}[Y]$);
\item {Bounded a.s.~continuity}:~If a sequence $\{X_n\}_{n\in\N}$ converges to $X$ bounded a.s., then $\E_{\beta}[X_n]\to \E_{\beta}[X]$;
\item {Additivity in concordant sums}:~If $\{\omega: X(\omega)<\E_{\beta}[X]\}=\{\omega: Y(\omega)<\E_{\beta}[Y]\}$, then $\E_{\beta}[X+Y]=\E_{\beta}[X]+\E_{\beta}[Y]$;
\item {Dual representation}:~We have the following representation:
\begin{align*}
\E_{\beta}[X]=\begin{cases}
\min_{\varphi\in\mathcal M_{\beta}} \E[\varphi X]\quad &{\rm if}~\beta\ge 0,\\
\max_{\varphi\in\mathcal M_{\beta}} \E[\varphi X]\quad &{\rm if}~\beta\in(-1,0],
\end{cases}
\end{align*}
where
$$
\mathcal{M}_\beta=\left\{\varphi \in L^{\infty}: \varphi>0 \text { a.s., } \E[\varphi]=1, \frac{\operatorname{ess} \sup \varphi}{\operatorname{ess} \inf \varphi} \leq \gamma(\beta)\right\},
$$
with $\gamma(\beta)=\max \left\{1+\beta,1/(1+\beta)\right\}$. Moreover, an optimal scenario $\bar{\varphi}$  is given by
$$
\bar{\varphi}:=\frac{ 1_{\left\{X>\E_{\beta}[X]\right\}}+(1+\beta) 1_{\left\{X \leq \E_{\beta}[X]\right\}}}{\E\left[ 1_{\left\{X>\E_{\beta}[X]\right\}}+(1+\beta) 1_{\left\{X \leq \E_{\beta}[X]\right\}}\right]}.
$$
\end{enumerate}
\end{lemma}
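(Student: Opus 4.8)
The plan is to reduce every item to the behaviour of the auxiliary function
\[
g_X(v):=\E\big[(X-v)^+\big]-(1+\beta)\,\E\big[(v-X)^+\big],\qquad v\in\R ,
\]
which, by Proposition~\ref{prop:expectiles}, is well defined under the standing integrability assumption and has $\E_{\beta}[X]$ as its unique zero. Writing $\E[(X-v)^+]=\int_v^\infty P(X>t)\,\d t$ and $\E[(v-X)^+]=\int_{-\infty}^v P(X<t)\,\d t$, one sees that $g_X$ is absolutely continuous with $g_X'(v)=-\big(P(X>v)+(1+\beta)P(X<v)\big)$ a.e., so $g_X$ is strictly decreasing with $|g_X'|$ a.e.\ in $[\min\{1,1+\beta\},\max\{1,1+\beta\}]$, and $g_X(v)\to+\infty$ (resp.\ $-\infty$) as $v\to-\infty$ (resp.\ $+\infty$). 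Items (i), (iii), (iv) are then read off at once: $g_X$ depends on $X$ only through $P_X$ (law invariance); $g_{X+m}(v)=g_X(v-m)$ (constant additivity); and $g_{\lambda X}(v)=\lambda\,g_X(v/\lambda)$ for $\lambda>0$, with $\lambda=0$ trivial (positive homogeneity).

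For strong monotonicity~(ii), the hypothesis $P(X\ge t)\ge P(Y\ge t)$ for all $t$ gives $\E[(X-v)^+]\ge\E[(Y-v)^+]$ and $\E[(v-X)^+]\le\E[(v-Y)^+]$, hence $g_X\ge g_Y$ pointwise; evaluating at $v=\E_{\beta}[Y]$ yields $g_X(\E_{\beta}[Y])\ge g_Y(\E_{\beta}[Y])=0$, and strict monotonicity of $g_X$ forces $\E_{\beta}[X]\ge\E_{\beta}[Y]$. If in addition $P(X\ge t_0)>P(Y\ge t_0)$ for some $t_0$, one-sided continuity of $t\mapsto P(X\ge t)$ and of $t\mapsto P(X<t)$ makes the strict inequality persist on a neighbourhood of $t_0$, so one of the two integral comparisons is strict at $v=\E_{\beta}[Y]$; thus $g_X(\E_{\beta}[Y])>0$ and $\E_{\beta}[X]>\E_{\beta}[Y]$.

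The substantive item is the dual representation~(viii), which I would prove before superadditivity. Put $v=\E_{\beta}[X]$. The candidate $\bar\varphi$ takes only the values $1$ and $1+\beta$, is a.s.\ positive, has $\E[\bar\varphi]=1$ by construction, and $\esssup\bar\varphi/\essinf\bar\varphi=\max\{1,1+\beta\}/\min\{1,1+\beta\}=\gamma(\beta)$, so $\bar\varphi\in\mathcal M_\beta$. The identity $g_X(v)=0$ rearranges to $\E[X1_{\{X>v\}}]+(1+\beta)\E[X1_{\{X\le v\}}]=v\,\big(1+\beta P(X\le v)\big)$; dividing by $\E[1_{\{X>v\}}+(1+\beta)1_{\{X\le v\}}]=1+\beta P(X\le v)$ gives $\E[\bar\varphi X]=v$. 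For an arbitrary $\varphi\in\mathcal M_\beta$, using $\E[\varphi]=1$,
\[
\E[\varphi X]-v=\E[\varphi(X-v)]=\E\big[\varphi(X-v)^+\big]-\E\big[\varphi(v-X)^+\big].
\]
Bounding $\varphi$ below by $a:=\essinf\varphi$ on $\{X>v\}$ and above by $A:=\esssup\varphi$ on $\{X\le v\}$ (reversed when $\beta\in(-1,0]$), and combining $A\le\gamma(\beta)a$ with $\E[(X-v)^+]=(1+\beta)\E[(v-X)^+]$, one gets $\E[\varphi(X-v)]\ge0$ for $\beta\ge0$ and $\E[\varphi(X-v)]\le0$ for $\beta\in(-1,0]$. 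Hence $\bar\varphi$ is a minimiser (resp.\ maximiser) with optimal value $\E_{\beta}[X]$, which is~(viii).

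The remaining items follow quickly. Superadditivity~(v): for $\beta\ge0$, $\E_{\beta}[X+Y]=\min_{\varphi\in\mathcal M_\beta}\big(\E[\varphi X]+\E[\varphi Y]\big)\ge\min_{\varphi}\E[\varphi X]+\min_{\varphi}\E[\varphi Y]=\E_{\beta}[X]+\E_{\beta}[Y]$, and dually for $\beta\in(-1,0]$, using~(viii). Additivity in concordant sums~(vii): with $D:=\{X<\E_{\beta}[X]\}=\{Y<\E_{\beta}[Y]\}$ and $v:=\E_{\beta}[X]+\E_{\beta}[Y]$, a pointwise check on $D$ and on $D^c$ shows $(X+Y-v)^+=(X-\E_{\beta}[X])^++(Y-\E_{\beta}[Y])^+$ and $(v-X-Y)^+=(\E_{\beta}[X]-X)^++(\E_{\beta}[Y]-Y)^+$ everywhere, so $g_{X+Y}(v)=g_X(\E_{\beta}[X])+g_Y(\E_{\beta}[Y])=0$ and $\E_{\beta}[X+Y]=v$ by uniqueness. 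Bounded a.s.\ continuity~(vi): by~(ii) the $\E_{\beta}[X_n]$ lie in a common compact interval (e.g.\ between $\E_{\beta}[-Z]$ and $\E_{\beta}[Z]$ for a dominating $Z$), the $g_{X_n}$ converge to $g_X$ pointwise by dominated convergence and are equi-Lipschitz, hence converge uniformly there, and since $g_X$ is strictly decreasing with slope bounded away from $0$ the zeros $\E_{\beta}[X_n]$ converge to the zero $\E_{\beta}[X]$ of $g_X$. I expect the main obstacle to be~(viii): guessing the extremal density $\bar\varphi$, verifying $\bar\varphi\in\mathcal M_\beta$, and making the one-line sign estimate work uniformly in $\varphi$ with the inequality direction flipping between the two sign regimes of $\beta$; a minor secondary point is the strict part of~(ii), which relies on one-sided continuity of the distribution functions.
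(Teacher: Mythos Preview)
Your proposal is correct and self-contained; the paper, by contrast, proves only item~(ii) directly and dispatches (i), (iii)--(viii) by citation to Ziegel (2016) and Bellini et al.\ (2014, 2021). Your unifying device---the strictly decreasing auxiliary function $g_X(v)=\E[(X-v)^+]-(1+\beta)\E[(v-X)^+]$---gives clean proofs of (i)--(iv), (vi), (vii) in one stroke and lets you prove the dual representation~(viii) by an elementary sign estimate, from which (v) drops out. The paper's proof of~(ii) is essentially the same idea as yours (integral comparisons at the zero of $g$, then a contradiction for the strict part), just written without the $g_X$ notation. What your approach buys is independence from the coherent-risk-measure literature; what the paper's approach buys is brevity. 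Two very minor remarks: in~(vi), ``slope bounded away from~$0$'' holds only a.e.\ (atoms of $X$ can make $|g_X'|$ smaller), but since $g_X$ is absolutely continuous this is enough for the integrated Lipschitz bound you need; and in~(ii), your one-sided continuity argument for the strict case is correct but you might spell out that one uses left-continuity of $t\mapsto P(X<t)$ when $\E_\beta[Y]\ge t_0$ and of $t\mapsto P(X\ge t)$ when $\E_\beta[Y]<t_0$.
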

\begin{proof}
The properties (i) and (iii)-(v) follow  from the studies of expectiles as coherent risk measures (see e.g.~Corollary 4.6 of Ziegel, 2016).
The properties (vi), (vii) and (viii) follow from
Theorem 10 of Bellini et al.~(2014),
Theorem 3 of Bellini et al.~(2021) and Proposition 8 of Bellini et al.~(2014), respectively. To see the property (ii),
if $P(X \ge t) \ge P(Y \ge t)$ for all $t \in \R$, then, for $v=\E_\beta[X]$ and $v'> v$, we have
\begin{align*}
\E[(Y-v')^+] \le
\E[(Y-v)^+] &\le \E[(X-v)^+]
\\&=(1+\beta)\E[(v-X)^+]
\\& \le(1+\beta) \E[(v-Y)^+]
< (1+\beta) \E[(v'-Y)^+],
\end{align*}
where the last  strict inequality  follows from $P(Y<v') \ge P(Y\le v)\ge P(X\le v) >0$.
This shows  that $\E_\beta[Y] >v$ cannot hold. Therefore, $\E_\beta[X] \ge \E_\beta[Y] $.
To see the strict inequality under strict dominance, note that if $\E_\beta[Y]=v$, then  at least one of the two inequalities in
\begin{align*}
(1+\beta) \E[(v-Y)^+]=
\E[(Y-v)^+] &\le \E[(X-v)^+]
\\&=(1+\beta)\E[(v-X)^+]
\le(1+\beta) \E[(v-Y)^+]
\end{align*}
is strict, a contradiction.
\end{proof}
\section{Characterizing expectiles within functionals}
Recall the properties of expectiles presented in Lemma~\ref{lm-propertyEx}.
In this section, we establish a characterization theorem showing that expectiles are the only class of functionals that satisfy strong monotonicity, constant additivity, positive homogeneity, and additivity in concordant sums.
In this appendix, we always assume that $(\Omega,\mathcal F,P)$ is an adequate probability space.

\begin{theorem}\label{th-characterizationEx}
A functional $I:B_0(\Omega,\mathcal F)\to\R$  is strongly monotone, constant-additive, positively homogeneous, and additive in concordant sums if and only if there exists $\beta>-1$ such that $I=\E_{\beta}$.
\end{theorem}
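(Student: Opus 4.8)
The ``if'' direction is immediate from Lemma~\ref{lm-propertyEx}: every expectile $\E_\beta$, $\beta>-1$, is strongly monotone~(ii), constant-additive~(iii), positively homogeneous~(iv), and additive in concordant sums~(vii). So assume $I$ has these four properties. I would first record some free consequences: $\lambda=0$ in positive homogeneity gives $I(0)=0$, hence $I(c)=c$ for all $c\in\R$ by constant additivity; strong monotonicity applied to pointwise inequalities makes $I$ monotone, hence $1$-Lipschitz in $\|\cdot\|_\infty$ (with constant additivity) and in particular continuous; and strong monotonicity makes $I$ law-invariant. Thus $g(p):=I(1_E)$ depends only on $p=P(E)$; the strict part of strong monotonicity (comparing $1_E$ with the constants $0,1$) gives $g(p)\in(0,1)$ for $p\in(0,1)$; and $I(a1_E+b1_{E^c})=b+(a-b)g(P(E))$ for $a>b$. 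Fixing some $E_0$ with $p_0:=P(E_0)\in(0,1)$ (if $\mathcal F$ is $P$-trivial, $B_0$ reduces to the constants and the claim is trivial), I define $\beta$ by $1+\beta:=p_0(1-g(p_0))\big/\big((1-p_0)g(p_0)\big)$, so $\beta>-1$ and $g(p_0)=p_0/(1+\beta(1-p_0))$; it remains to prove $I=\E_\beta$.

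\emph{Step 1 (flattening).} For nonconstant $Y\in B_0$ set $D_Y:=\{Y<I(Y)\}$. I claim $I(Y)=I(\bar Y)$, where $\bar Y$ is the bet obtained from $Y$ by replacing it with the constant $\E[Y\mid D_Y]$ on $D_Y$ and with $\E[Y\mid D_Y^c]$ on $D_Y^c$. The mechanism is an ``averaging is free'' principle: if $A,A'\subseteq D_Y$ have equal probability, replacing $Y$ on $A\cup A'$ by its conditional mean there does not change $I(Y)$. Indeed, writing this change as adding $R=c(1_{A'}-1_A)$, concordance of $Y+\lambda R$ with $Y+\mu R$ (same disappointment set $D_Y$, as one checks for $\lambda,\mu$ in a neighbourhood of $[0,1]$) plus additivity in concordant sums makes $\lambda\mapsto I(Y+\lambda R)$ midpoint-affine, hence affine; and since $A,A'$ have equal probability, $Y+\lambda R$ and $Y-\lambda R$ have the same law, so by law-invariance this affine function is even, hence constant. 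Subdividing $D_Y$ into equal-probability pieces (using adequacy) and iterating the averaging move drives $Y|_{D_Y}$ uniformly to $\E[Y\mid D_Y]$ while preserving $I$ and the disappointment set; continuity of $I$ gives the $D_Y$-half, and the symmetric argument on $D_Y^c$ gives the claim. I expect this to be the main obstacle — all four hypotheses, continuity, and the richness of $(\Omega,\mathcal F,P)$ are combined here, and verifying that the disappointment set survives the flattening needs some care (one may prefer to flatten in small increments so that each perturbation is manifestly disappointment-set-preserving).

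\emph{Step 2 (bets, and $\beta$).} Step 1 gives, for $Y$ with disappointment set $D$ of probability $\pi$, the identity $I(Y)=(1-g(1-\pi))\,\E[Y1_D]/\pi+g(1-\pi)\,\E[Y1_{D^c}]/(1-\pi)$. Now take $X_t=x_1 1_{A_1}+t\,1_{A_2}+x_3 1_{A_3}$ over a partition with probabilities $p_1,p_2,p_3>0$ and $x_1<t<x_3$: the disappointment set is $A_1\cup A_2$ for $t$ near $x_1$ and $A_1$ for $t$ near $x_3$, and $t\mapsto I(X_t)-t$ is continuous and, by the displayed formula, strictly decreasing, so it vanishes at a unique $t^\ast$ where both regime formulas hold and equal $t^\ast$. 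Solving each for $t^\ast$ produces two expressions affine in $(x_1,x_3)$ that must agree for all $x_1<x_3$; matching coefficients and substituting $R(q):=q(1-g(q))/((1-q)g(q))$ (equivalently $g(q)=q/(q+(1-q)R(q))$), the resulting relation simplifies to $R(p_3)=R(p_2+p_3)$ for all admissible $p_1,p_2,p_3$. Hence $R$ is constant on $(0,1)$, so $R\equiv R(p_0)=1+\beta$, i.e.\ $g(p)=p/(1+\beta(1-p))$ for every $p$ — that is, $I=\E_\beta$ on all bets.

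\emph{Step 3 (conclusion).} For nonconstant $Y$, let $v:=I(Y)$ and $D:=\{Y<v\}$. By Step 1, $v=I(\bar Y)$ for the bet $\bar Y$; by Step 2 and a direct computation from Proposition~\ref{prop:expectiles} applied to $\bar Y$ (using $\E[\bar Y1_D]=\E[Y1_D]$ and $\E[\bar Y1_{D^c}]=\E[Y1_{D^c}]$), $v=\E_\beta[\bar Y]=\big((1+\beta)\E[Y1_D]+\E[Y1_{D^c}]\big)\big/\big(1+\beta P(D)\big)$. Rearranging yields $\E[(Y-v)1_{D^c}]=(1+\beta)\E[(v-Y)1_D]$, and because $D=\{Y<v\}$ we have $(Y-v)1_{D^c}=(Y-v)^+$ and $(v-Y)1_D=(v-Y)^+$, so $v$ solves $\E[(Y-v)^+]=(1+\beta)\E[(v-Y)^+]$. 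By Proposition~\ref{prop:expectiles} this equation has the unique solution $\E_\beta[Y]$; hence $I(Y)=v=\E_\beta[Y]$, and with $I(c)=c=\E_\beta[c]$ we conclude $I=\E_\beta$.
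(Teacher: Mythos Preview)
Your approach is correct and takes a genuinely different route from the paper's. The paper first treats the finite uniform case by building, for each possible elation set $S$, measures $P_S,Q_S$ via additive extension from cones $\mathcal X_S^{\pm}$ on which the disappointment set is forced to equal $S^c$, then uses law invariance to show these are uniform on $S$ and $S^c$, and a limiting three-level act to show the ratio $\lambda_S/\eta_S$ is independent of $S$; the nonatomic case is then obtained by restricting $I$ to dyadic sub-$\sigma$-algebras and passing to the limit. Your route --- a flattening lemma reducing every act to its associated bet, a single three-point crossing argument to identify $g$, and a short Step~3 --- is more geometric, handles the finite and nonatomic cases uniformly, and is arguably shorter once the flattening is in hand. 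One caution on Step~1 (which you already flag): the claim that $D_{Y+\lambda R}=D_Y$ for all $\lambda$ in a neighbourhood of $[0,1]$ can fail when some value of $Y$ on $D_Y^c$ equals $I(Y)$; the clean fix is to first pass to $Y_\epsilon:=Y+\epsilon\,1_{D_Y^c}$, for which strong monotonicity forces $I(Y_\epsilon)\in(I(Y),I(Y)+\epsilon)$ and hence strict gaps on both sides, run the argument there, and let $\epsilon\downarrow 0$. With those gaps in place, the ``even affine $\Rightarrow$ constant'' step only controls $f$ near $0$; to reach $\lambda=1$ you also need the evident connectedness argument (affine on a neighbourhood of any $\lambda_0$ with $f(\lambda_0)=I(Y)$, and already constant on $[0,\lambda_0]$, forces constancy past $\lambda_0$), which goes through once the gaps are present.
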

Theorem~\ref{th-characterizationEx} does not constrain the sign of the parameter $\beta$, as additivity in concordant sums has an additive structure. To capture the effect of concordance, we introduce the two properties of monotonicity in concordant sums. A functional $I: B_0(\Omega,\mathcal F) \to \R$ is \emph{decreasing} (resp.~\emph{increasing}) \emph{in concordant sums} if, for all $X, Y, W \in B_0(\Omega,\mathcal F)$ such that $I(X) = I(Y)$ and $\{\omega : X(\omega) < I(X)\} = \{\omega : W(\omega) < I(W)\}$, it holds that $I(W + X) \le I(W + Y)$ (resp.~$I(W + X) \ge I(W + Y)$).
The next result, fundamental to the proofs of Theorems \ref{th:expectiled-utility} and \ref{th-ax-expectileutils}, provides an alternative characterization of expectiles in which decrease or increase in concordant sums serves as a key axiom, alongside two standard conditions: strong monotonicity and $L^\infty$-norm continuity. Specifically, $L^\infty$-norm continuity refers to the property that  $I(X_n) \to I(X)$ if $X_n\to X$ in $L^\infty$, and it can be deduced from strong monotonicity and constant additivity (see e.g.~Lemma 4.3 in F\"{o}llmer and Schied, 2016).
\begin{theorem}\label{th-characterizationEx1}
Let $I:B_0(\Omega,\mathcal F)\to\R$ satisfy $I(v)=v$ for all $v\in\R$. It is strongly monotone, $L^\infty$-norm continuous, and  decreasing (resp.~increasing) in concordant sums if and only if there exists $\beta\ge 0$ (resp.~$\beta\in(-1,0]$) such that $I=\E_{\beta}$.
\end{theorem}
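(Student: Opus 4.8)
The ``if'' direction is immediate from Lemma~\ref{lm-propertyEx}: for $\beta\ge 0$ the expectile $\E_\beta$ is strongly monotone (item~(ii)), $L^\infty$-norm continuous (item~(vi), since $L^\infty$-convergence implies bounded a.s.\ convergence), and normalized (items~(iii)--(iv) give $\E_\beta[v]=v$); and if $\{X<I(X)\}=\{W<I(W)\}$ and $I(X)=I(Y)$ then $\E_\beta[W+X]=\E_\beta[W]+\E_\beta[X]=\E_\beta[W]+\E_\beta[Y]\le\E_\beta[W+Y]$ by additivity in concordant sums (item~(vii)) and superadditivity (item~(v)), so $\E_\beta$ is decreasing in concordant sums; for $\beta\in(-1,0]$ the same computation with subadditivity gives the ``increasing'' conclusion.

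For the ``only if'' direction I would treat the disappointment (``decreasing'') case, the elation case being entirely analogous. The idea is to reduce to Theorem~\ref{th-characterizationEx}: once $I$ is known to be constant-additive and positively homogeneous, its remaining hypotheses follow for free. Indeed, applying the axiom with $W$ and $X$ both equal to an arbitrary $Z$ (trivially concordant) and $Y$ the constant $I(Z)$ gives $I(2Z)\le I(Z+I(Z))=2I(Z)$ by constant additivity, hence $I(2Z)=2I(Z)$ by positive homogeneity; then, for every $Z'$ with $I(Z')=I(Z)$, the axiom with $W=X=Z$ and $Y=Z'$ yields $2I(Z)=I(2Z)\le I(Z+Z')$, i.e.\ $I(Z+Z')\ge I(Z)+I(Z')$, and constant additivity extends this to all pairs, so $I$ is superadditive. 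For concordant $W,X$ the axiom with $Y$ the constant $I(X)$ gives the reverse inequality $I(W+X)\le I(W)+I(X)$, so $I$ is additive in concordant sums. Theorem~\ref{th-characterizationEx} then yields $I=\E_\beta$ for some $\beta>-1$; and $\beta\ge 0$, for otherwise $\E_\beta$ would also be subadditive (Lemma~\ref{lm-propertyEx}(v)), hence additive, hence a normalized law-invariant linear functional on $B_0(\Omega,\mathcal F)$, i.e.\ $\E[\cdot]=\E_0$, contradicting $\beta<0$.

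So the work is to prove that $I$ is constant-additive and positively homogeneous, which I would do in two steps. \emph{Step 1 (bets).} Strong monotonicity makes $I$ law-invariant; and for $P(E)\in(0,1)$ and $a>b$ the disappointment event of the bet $a1_E+b1_{E^c}$ is exactly $E^c$, so any two bets on $E$ with the larger value on $E$ are concordant. Applying the axiom in both directions to such bets shows that $I(a1_E+b1_{E^c})=I(a'1_E+b'1_{E^c})$ implies $I((a+a'')1_E+(b+b'')1_{E^c})=I((a'+a'')1_E+(b'+b'')1_{E^c})$ whenever $a''>b''$; passing to the limit $a''\downarrow b''$ and iterating shows that the level sets of $(a,b)\mapsto I(a1_E+b1_{E^c})$ are carried onto one another both by diagonal translations and by dilations, and, since the diagonal consists of constants, this pins the level sets down to a Cauchy equation, which $L^\infty$-continuity solves to give $I(a1_E+b1_{E^c})=\rho(P(E))\,a+(1-\rho(P(E)))\,b$ for $a\ge b$, with $\rho:[0,1]\to[0,1]$ continuous and increasing, $\rho(0)=0$, $\rho(1)=1$; in particular $I$ is constant-additive and positively homogeneous on bets. \emph{Step 2 (extension).} Extend both properties to all simple acts by induction on the number of outcomes: for an act $X$ with $n$ outcomes and disappointment event $D$, compare $X$ through the concordance axiom with anchors that are bets (or acts with fewer outcomes) sharing the disappointment event $D$, and with suitable constants and dilations, using strong monotonicity and $L^\infty$-continuity to squeeze the resulting one-sided inequalities into the equalities $I(X+t)=I(X)+t$ and $I(\lambda X)=\lambda I(X)$.

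The main obstacle is exactly this inductive Step~2: the axiom is one-directional, so by itself it only produces inequalities, and converting them to the equalities required to apply Theorem~\ref{th-characterizationEx} demands carefully chosen comparison acts together with the squeezing provided by strong monotonicity and continuity, running parallel to --- and reusing the bookkeeping of --- the inductive proof of Theorem~\ref{th-characterizationEx}. By contrast, the reduction in the second paragraph and the identification of the sign of $\beta$ are routine.
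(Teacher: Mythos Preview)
Your overall reduction is exactly the paper's: sufficiency from Lemma~\ref{lm-propertyEx}; necessity by establishing constant additivity, superadditivity, additivity in concordant sums, and positive homogeneity, then invoking Theorem~\ref{th-characterizationEx}; and your second-paragraph derivations of superadditivity and concordant additivity from CA and PH, as well as of the sign of $\beta$, are correct.

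Where you diverge is in how CA and PH are obtained, and here your Steps~1--2 are a detour that the paper avoids. You correctly diagnose that the axiom is one-directional and that equalities must be squeezed out via perturbation and continuity---but this squeezing works directly for arbitrary simple $X$, with no induction on the number of outcomes and no preliminary treatment of bets. For constant additivity: $I(X+m)\ge I(X)+m$ comes from the axiom applied with the two \emph{constants} $I(X)$ and $m$, trivially concordant since both have empty disappointment set; the reverse inequality comes from perturbing $m$ to $Y_\epsilon=m\,1_{D_I(X)}+(m+\epsilon)\,1_{D_I(X)^c}$, which by strong monotonicity has $D_I(Y_\epsilon)=D_I(X)$, applying the axiom with $W=Y_\epsilon$ and $Y$ the constant $I(X)$, and letting $\epsilon\downarrow0$. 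Superadditivity is obtained by the same trick, approximating $0$ by $\epsilon\,1_{D_I(X)^c}$. Positive homogeneity is then not proved separately but \emph{deduced} from concordant additivity: $X$ is concordant with itself, so $I(2X)=2I(X)$, hence $D_I(2X)=D_I(X)$, hence $I(3X)=3I(X)$, and continuity extends to all $\lambda\ge0$. So the paper's logical order is CA $\to$ superadditivity $\to$ concordant additivity $\to$ PH, which dissolves your ``main obstacle'' (the inductive Step~2) altogether; your bet-level Cauchy-equation analysis in Step~1 is likewise unnecessary.
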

Theorem~\ref{th-characterizationEx1} can be extended to the case where $I$ is defined on the set of acts supported on a nonempty interval $A \subseteq \R$ with $0$ in its interior. In this setting, the properties of decrease and increase in concordant sums need to be slightly modified, since an interval is generally not a linear space. Specifically, denote by $B_0(\Omega,\mathcal F, A)$ the set of all simple random variables taking values in $A$. A
functional $I: B_0(\Omega,\mathcal F, A) \to \R$  is \emph{decreasing} (resp.~\emph{increasing}) \emph{in concordant mixtures} if, for all $X, Y, W \in B_0(\Omega,\mathcal F,A)$ such that $I(X) = I(Y)$ and $\{\omega : X(\omega) < I(X)\} = \{\omega : W(\omega) < I(W)\}$, it holds that
\begin{align*}
I\left(\frac{W+X}{2}\right) \le I\left(\frac{W + Y}{2}\right)\quad \left({\rm resp.}~I\left(\frac{W+X}{2}\right) \ge I\left(\frac{W + Y}{2}\right)\right).
\end{align*}
\begin{theorem}\label{th-characterization_general}
Let $A\subseteq \R$ be an interval with $0$ in its interior. Suppose that $I:B_0(\Omega,\mathcal F,A)\to\R$ satisfies $I(v)=v$ for all $v\in A$. It is strongly monotone, $L^\infty$-norm continuous, and decreasing (resp.~increasing) in concordant mixtures if and only if there exists $\beta\ge 0$ (resp.~$\beta\in(-1,0]$) such that $I=\E_{\beta}$.
\end{theorem}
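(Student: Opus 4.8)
The plan is to treat the two implications separately and, for the substantive direction, to reduce Theorem~\ref{th-characterization_general} to Theorem~\ref{th-characterizationEx1} by a homogeneous extension of $I$ from $B_0(\Omega,\mathcal F,A)$ to $B_0(\Omega,\mathcal F)$.

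\emph{The ``if'' direction.} Suppose $I=\E_\beta$ with $\beta\ge 0$ (the case $\beta\in(-1,0]$ being dual). Strong monotonicity is Lemma~\ref{lm-propertyEx}(ii), and $L^\infty$-norm continuity follows from Lemma~\ref{lm-propertyEx}(vi) (or from strong monotonicity together with constant additivity (iii)). For decrease in concordant mixtures, take $X,Y,W\in B_0(\Omega,\mathcal F,A)$ with $\E_\beta[X]=\E_\beta[Y]$ and $\{\omega:X(\omega)<\E_\beta[X]\}=\{\omega:W(\omega)<\E_\beta[W]\}$; since $A$ is an interval all mixtures remain in $B_0(\Omega,\mathcal F,A)$. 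By positive homogeneity (iv) and additivity in concordant sums (vii), $\E_\beta[(W+X)/2]=\tfrac12(\E_\beta[W]+\E_\beta[X])$, whereas positive homogeneity and superadditivity (v) give $\E_\beta[(W+Y)/2]\ge\tfrac12(\E_\beta[W]+\E_\beta[Y])=\tfrac12(\E_\beta[W]+\E_\beta[X])$; comparing the two yields $\E_\beta[(W+X)/2]\le\E_\beta[(W+Y)/2]$. For $\beta\in(-1,0]$ one replaces superadditivity by subadditivity and obtains the reversed inequality.

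\emph{The ``only if'' direction.} Fix $\epsilon>0$ with $[-\epsilon,\epsilon]\subseteq A$. The key step is a homogeneity lemma: $I(\lambda X)=\lambda I(X)$ for every $X\in B_0(\Omega,\mathcal F,A)$ and every $\lambda\in(0,1]$ (note $\lambda X\in B_0(\Omega,\mathcal F,A)$ automatically, as $0\in A$). Granting this, define $\tilde I:B_0(\Omega,\mathcal F)\to\R$ by $\tilde I(Z):=\lambda^{-1}I(\lambda Z)$ for any $\lambda>0$ small enough that $\lambda Z\in B_0(\Omega,\mathcal F,[-\epsilon,\epsilon])$; the lemma makes this independent of $\lambda$, makes $\tilde I$ positively homogeneous on $B_0(\Omega,\mathcal F)$, and makes $\tilde I$ agree with $I$ on all of $B_0(\Omega,\mathcal F,A)$. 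One then checks that $\tilde I$ satisfies the hypotheses of Theorem~\ref{th-characterizationEx1}: $\tilde I(v)=v$ for all $v$ (from $I(v)=v$); strong monotonicity and $L^\infty$-norm continuity transfer because the stochastic order, the sup-norm, and constants all rescale linearly; and $\tilde I$ is decreasing in concordant sums because, given $X,Y,W$ with $\tilde I(X)=\tilde I(Y)$ and $\{\omega:X(\omega)<\tilde I(X)\}=\{\omega:W(\omega)<\tilde I(W)\}$, a single small factor $\lambda$ brings $\lambda X,\lambda Y,\lambda W$ and the mixtures $\tfrac\lambda2(W+X),\tfrac\lambda2(W+Y)$ into $B_0(\Omega,\mathcal F,[-\epsilon,\epsilon])$ while preserving the equal-value and concordance conditions (since $\{\omega:\lambda W(\omega)<\lambda\tilde I(W)\}=\{\omega:W(\omega)<\tilde I(W)\}$), so decrease in concordant mixtures for $I$ gives $I(\tfrac\lambda2(W+X))\le I(\tfrac\lambda2(W+Y))$, which unwinds via homogeneity to $\tilde I(W+X)\le\tilde I(W+Y)$. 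Theorem~\ref{th-characterizationEx1} then yields $\tilde I=\E_\beta$ for some $\beta\ge 0$ (resp.\ $\beta\in(-1,0]$ in the increasing case), hence $I=\E_\beta$ on $B_0(\Omega,\mathcal F,A)$, with $\beta$ unique.

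\emph{Main obstacle.} The crux is the homogeneity lemma, because the concordant-mixture axiom is only one-sided. Applying it with $W=X$ (always concordant with itself) and with $Y$ equal to the constant $I(X)$ produces one-directional relations between $I$ of a mixture and $I(X)$; from these, together with strong monotonicity and $L^\infty$-continuity to supply the matching reverse bounds, and a bootstrapping from halving to all dyadic scalars followed by a limit in $L^\infty$, one should extract $I(\lambda X)=\lambda I(X)$. I expect this argument to parallel the derivation of positive homogeneity that is internal to the proof of Theorem~\ref{th-characterizationEx1}, the only genuinely new point being that every intermediate act must be kept valued in $A$ --- which is exactly why the concordance condition is phrased through mixtures rather than sums. (Alternatively, one may simply verify that the whole proof of Theorem~\ref{th-characterizationEx1} localizes to $B_0(\Omega,\mathcal F,A)$ after rescaling, since every act it constructs can be taken valued in $[-\epsilon,\epsilon]$.)
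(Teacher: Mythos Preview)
Your proposal is correct and follows the paper's own route: first establish $I(\lambda X)=\lambda I(X)$ for $\lambda\in[0,1]$, then extend $I$ homogeneously to $B_0(\Omega,\mathcal F)$ and verify that the extension satisfies the hypotheses of Theorem~\ref{th-characterizationEx1}. One structural point to keep in mind for your ``main obstacle'': the paper first isolates the identity $I\bigl((X+m)/2\bigr)=(I(X)+m)/2$ for every constant $m\in A$ (one inequality from the axiom with $W$ a constant, the reverse by approximating $m$ with acts $Y_\epsilon$ satisfying $D_I(Y_\epsilon)=D_I(X)$ and passing to the $L^\infty$-limit) and then uses this identity --- not merely $I(X/2)=I(X)/2$ --- to carry the odd-numerator step of the dyadic induction, since for odd $k$ one must evaluate $I\bigl((\lambda_1 X+\lambda_2 X)/2\bigr)$ via a mixture of $\lambda_2 X$ with the \emph{constant} $I(\lambda_1 X)$.
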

Below, we present the complete proofs of Theorems \ref{th-characterizationEx}, \ref{th-characterizationEx1} and \ref{th-characterization_general}. For an act $X$ and a functional $I$, we define
\begin{align}\label{eq-D_I}
D_I(X)=\{\omega:X(\omega)<I(X)\}.
\end{align}
In the proofs of Theorems \ref{th-characterizationEx1} and \ref{th-characterization_general}, we focus on the decreasing case, as the increasing case can be treated analogously. 

\begin{proof}[Proof of Theorem \ref{th-characterizationEx}.]
The sufficiency is established in Lemma~\ref{lm-propertyEx}. We now turn to the necessity and address it in two parts, by considering first the finite case and then the infinite case of $\Omega$.
\\
\textbf{Proof for the case of finite $\Omega$.} Assume that $\Omega = \{\omega_1, \dots, \omega_n\}$ and $P(\omega_i) = 1/n$ for all $i \in [n]$. Since $I$ satisfies constant additivity and positive homogeneity, it is natural to focus on the study of the following set:
\begin{align*}
\mathcal{X}^0 = \{X \in B_0(\Omega,\mathcal F) : I(X) = 0\}.
\end{align*}
To make use of additivity in concordant sums, we aim to construct acts $X$ and $Y$ such that $\{\omega : X(\omega) \ge I(X)\} = \{\omega : Y(\omega) \ge I(Y)\}$. This motivates us to introduce the following subset:
\begin{align}\label{eq-X_S}
\mathcal{X}_S = \{X \in \mathcal{X}^0 : \{X \ge 0\} = S\}, \quad S \in \mathcal{F}.
\end{align}
It is clear that $\mathcal X^0=\bigcup_{S\in\mathcal F}\mathcal X_S$. Note that positive homogeneity implies that $I(0)=0$, and thus, combining with strong monotonicity yields
$\mathcal X_{\Omega}=\{0\}$, $\mathcal X_{\varnothing}=\varnothing$, and $\mathcal X_S\neq \varnothing$ for all $S\in\mathcal F\setminus\{\Omega,\varnothing\}$. We proceed with the rest of the proof in three steps.
\begin{enumerate}[(a)]
\item For any $S\in\mathcal F$, there exist finite measures $P_S$ and $Q_S$ on $(\Omega,\mathcal F)$
such that $P_S(S)Q_S(S^c)>0$ if $S\in\mathcal F\setminus \{\Omega,\varnothing\}$ and $\E^{P_S}[X_+]=\E^{Q_S}[X_-]$ for all $X\in\mathcal X_S$. 
This step is the most challenging.
\item Note that strong monotonicity implies law invariance. We use
law invariance to show that $P_S(\omega)=\lambda_S$ for all $\omega\in S$ and $Q_S(\omega)=\eta_S$ for all $\omega\in S^c$, where $\lambda_S,\eta_S$ are two constants. Moreover,
$
{\lambda_{S_1}}/{\eta_{S_1}}={\lambda_{S_2}}/{\eta_{S_2}}
$
if $|S_1|=|S_2|$, where $|S|$ is the cardinality of $S$.
\item Note that strong monotonicity and constant additivity together imply the continuity of $I$ with respect to uniform convergence (see, e.g.~Lemma 4.3 of F\"{o}llmer and Schied, 2016), that is, $\esssup|X_n-X|\to 0$ implies $I(X_n)\to I(X)$.
We apply this continuity property to deduce that $\lambda_S / \eta_S$ must be constant for any $S \in \mathcal{F}$.
This observation uniquely determines the form of $I$, implying that $I = \E_\beta$ for some $\beta>-1$.
\end{enumerate}
\underline{(a)}
The cases $S=\Omega$ and $S=\varnothing$ are trivial.
For $S\in\mathcal F\setminus\{\Omega,\varnothing\}$,
denote by $a_S=I(1_S)$, and strong monotonicity of $I$ yields $a_S\in(0,1)$. Based on this, we define a nonempty set as
$$
\mathcal X_{S}^+=\{X\in B_0(\Omega,\mathcal F): 0<\max X<\min X/a_S\},
$$
and a functional on $\mathcal X_{S}^+$ as
\begin{align*}
	\phi_S(X):=I(X1_S),~~X\in\mathcal X_{S}^+.
\end{align*}
Since $\max (X+Y)\le \max X+\max Y$ and $\min (X+Y)\ge \min X+\min Y$,
it is straightforward to verify that $\mathcal X_{S}^+$ is a convex cone. We also have that $\phi_S(\lambda)=\lambda a_S$ for all $\lambda\ge 0$ by the positive homogeneity of $I$.
For any $X\in\mathcal X_{S}^+$, we have
\begin{align*}
	0<I(X 1_{S})\le I((\max X) 1_S)=\phi_S(\max X)=(\max X) a_S<\min X,
\end{align*}
where the first and the second inequalities follow from the strong monotonicity of $I$.
This gives
\begin{align}\label{eq-disX_S^+}
D_I(X1_{S})=S^c,\quad X\in\mathcal X_S^+.
\end{align}
Therefore,
\begin{align*}
\phi_S(X+Y)=	I(X1_{S}+Y1_{S})=I(X1_{S})+I(Y1_{S})=\phi_S(X)+\phi_S(Y),~~\forall X,Y\in\X_S^+,
\end{align*}
where we have used additivity in concordant sums in the second equality.
Hence, $\phi_S$ is additive on $\mathcal X_S^+$.
Note that for any $X\in B_0(\Omega,\mathcal F)$, $X+m\in\mathcal X_S^+$ for large enough $m>0$. This allows us to
extend $\phi_S$ to $B_0(\Omega,\mathcal F)$ by defining
\begin{align*}
	\widehat\phi_S(X)=\phi_S(X+m_X+1)-(m_X+1) a_S,~~{\rm with}~m_X=\inf\{m:X+m\in\mathcal X_{S}^+\},
\end{align*}
for $X\in B_0(\Omega,\mathcal F)$.
We clarify that $X + m_X$ may not belong to $\mathcal{X}_S^+$ when $\mathcal{X}_S^+$ is an open set. Therefore, we use $X + m_X + 1$ to ensure it lies within the domain of $\phi_S$.
We assert that
\begin{align}\label{eq-+phihat}
	\widehat\phi_S(X)=\phi_S(X+m+1)-(m+1) a_S~~{\rm for~any}~m> m_X.
\end{align}
To see this, for $m> m_X$,
\begin{align*}
	\phi_S(X+m+1)-(m+1) a_S&=	\phi_S(X+m_X+1+m-m_X)-(m+1) a_S\\
	&=	\phi_S(X+m_X+1)+\phi_S(m-m_X)-(m+1) a_S\\
	&=	\phi_S(X+m_X+1)+(m-m_X)a_S-(m+1) a_S\\
	&=	\phi_S(X+m_X+1)-(m_X+1) a_S
	=\widehat\phi_S(X),
\end{align*}
where the second equality follows from the additivity of $\phi_S$. For any $X,Y\in B_0(\Omega,\mathcal F)$, it holds that $$X+m_X+1,~Y+m_Y+1,~X+Y+m_X+m_Y+2\in\mathcal X_{S}^+.
$$ Hence,
\begin{align*}
	\widehat\phi_S(X+Y)
	&=\phi_S(X+Y+m_X+m_Y+2)-(m_X+m_Y+2)a_S\\
	&=\phi_S(X+m_X+1)+\phi_S(Y+m_Y+1)-(m_X+m_Y+2)a_S
	\\&=\widehat\phi_S(X)+\widehat\phi_S(Y),
\end{align*}
where the second equality follows from the additivity of $\phi_S$. This implies that $\widehat\phi_S: B_0(\Omega,\mathcal F)\to\R$ is additive. Note that $\phi_S:\X_S^+\to\R$ is monotone as $I$ is strongly monotone. By the representation of $\widehat\phi_S$ in \eqref{eq-+phihat}, we have that $\widehat\phi_S$ is also monotone.
Hence, there exists a measure $P_S$ on $(\Omega,\mathcal F)$ such that $\widehat\phi_S(X)=\E^{P_S}[X]$ for $X\in B_0(\Omega,\mathcal F)$, and we have
\begin{align*}
	\phi_S(X)=I(X1_S)=\E^{P_S}[X],~~\forall X\in\X_S^+.
\end{align*}
It is straightforward to verify that $P_S(S^c)=0$ as $X1_{S}=0$ on $S^c$ for any $X\in\X_S^+$. Then, we claim that $P_S(S)>0$. Let $X=(2+a_S)1_S+3a_S1_{S^c}$ and $Y=(1+2a_S)1_S+3a_S1_{S^c}$.  It holds that $X,Y\in\mathcal X_S^+$ and
\begin{align*}
\phi_S(X)=I(X1_S)&=\E^{P_S}[X]=P_S(S)(2+a_S);\\
\phi_S(Y)=I(Y1_S)&=\E^{P_S}[Y]=P_S(S)(1+2a_S).
\end{align*}
Since $a_S\in(0,1)$, one can verify that $X 1_S\ge Y 1_S$, and $X 1_S> Y 1_S$ on $S$. Strong monotonicity yields $I(X1_S)>I(Y1_S)$, and thus, $P_S(S)(2+a_S)>P_S(S)(1+2a_S)$, which in turn implies that $P_S(S)>0$.
Next, we turn to the negative counterpart, which follows similarly from the preceding arguments. Hence, we provide a brief proof for this case. Recall that $a_S=I(1_S)\in(0,1)$.
Define
$$\mathcal X_{S}^-=\{X\in B_0(\Omega,\mathcal F): \max X/(1-a_S)<\min X<0\}$$
and
\begin{align*}
	\psi_S(X):=I(X1_{S^c}),~~X\in\mathcal X_{S}^-.
\end{align*}
We aim to show that $\psi_S(X)=\E^{Q_S}[X]$ for $X\in\mathcal X_S^-$ with some measure $Q_S$ on $(\Omega,\mathcal F)$. 
It follows from constant additivity of $I$ that $I(1_S-1)=I(1_S)-1=a_S-1$. Hence, we have
\begin{align*}
\psi_S(m)&=I((-m)(1_S-1))=(-m)I(1_S-1)=(1-a_S)m,~~m<0,
\end{align*}
where the third equality follows from the positive homogeneity of $I$.
For any $X\in\mathcal X_{S}^-$, we have
\begin{align*}
	0&> I(X1_{S^c})\ge I((\min X)(1-1_S))=\psi_S(\min X)=(1-a_S)\min X>\max X.
\end{align*}
where the first and the second inequalities follow from the strong monotonicity of $I$.
This implies that
\begin{align}\label{eq-disX_S^-}
D_I(X 1_{S^c})=S^c,\quad X\in\mathcal X_S^-.
\end{align}
By arguments similar to those above, we can
extend $\psi_S$ to $B_0(\Omega,\mathcal F)$ by defining
\begin{align*}
	\widehat\psi_S(X)=\psi_S(X-m_X-1)+(m_X+1) (1-a_S),~~{\rm with}~m_X=\inf\{m:X-m\in\mathcal X_{S}^-\},
\end{align*}
for $X\in B_0(\Omega,\mathcal F)$.
Similarly, one can check that $\widehat\psi_S$ is monotone and additive. Hence, there exists a measure $Q_S$ on $(\Omega,\mathcal F)$ such that $\widehat\psi_S=\E^{Q_S}$, and this implies
\begin{align*}
	\psi_S(X)=I(X1_{S^c})=\E^{Q_S}[X],~~\forall X\in\X_S^-.
\end{align*}
Similarly,
we can also conclude that $Q_S(S)=0$ and $Q_{S}(S^c)>0$ for $S\in\mathcal F \setminus\{\Omega,\varnothing\}$.
We now assume that $X\in\mathcal X_S$. Choose large enough $\eta>0$ such that $X-\eta\in\mathcal X_S^-$ and $X+\eta\in\mathcal X_S^+$. By the definition of $\mathcal X_S$ in \eqref{eq-X_S}, and combining with \eqref{eq-disX_S^+} and \eqref{eq-disX_S^-},  we have that $S^c=D_I(Z)$ for all $Z$ to be the following acts:
\begin{align*}
X,~-\eta1_{S^c},~\eta1_{S},~(X-\eta)1_{S^c},~(X+\eta)1_{S},~-\eta1_{S^c}+\eta1_{S}.
\end{align*}
Hence,
the following equality chain holds:
\begin{align*}
-\E^{Q_S}[\eta]+\E^{P_S}[\eta]
&=I(X)+\psi_S(-\eta)+\phi_S(\eta)\\
&=I(X)+I(-\eta1_{S^c})+I(\eta1_S)
=I((X-\eta)1_{S^c}+(X+\eta)1_S)\\
&=I((X-\eta)1_{S^c})+I((X+\eta)1_S)
=\psi_S(X-\eta)+\phi_S(X+\eta)\\
&=\E^{Q_S}[X]+\E^{P_S}[X]-\E^{Q_S}[\eta]+\E^{P_S}[\eta]\\
&=-\E^{Q_S}[X_-]+\E^{P_S}[X_+]-\E^{Q_S}[\eta]+\E^{P_S}[\eta],
\end{align*}
where the first equality follows from $I(X)=0$ as $X\in\mathcal X_S$, and we have used additivity in concordant sums of $I$ in the third and fourth equalities, and the last equality holds because $P_S(S^c)=Q_S(S)=0$ and $\{X\ge 0\}=S$.
This completes the step (a).\\
\underline{(b)} From the step (a), we have concluded that
\begin{align}\label{eq-PSQS}
\E^{P_S}[X_+]=\E^{Q_S}[X_-],~~X\in\X_S,~S\in\mathcal F,
\end{align}
where $P_S(S)>0$ and $Q_S(S^c)>0$ if $S\in\mathcal F\setminus \{\Omega,\varnothing\}$.
Fix $S\in\mathcal F$ with $|S|\ge 2$, let $\theta_1,\theta_2\in S$ with $\theta_1\neq \theta_2$, and we aim to verify $P_S(\theta_1)=P_S(\theta_2)$.
Let $X\in\mathcal X_S$ satisfying $X(\theta_1)\neq X(\theta_2)$.
Define $Y=X1_{\Omega\setminus\{\theta_1,\theta_2\}}+X(\theta_1)1_{\theta_2}+X(\theta_2)1_{\theta_1}$. Note that strong monotonicity implies law invariance, which further yields $I(Y)=I(X)=0$ as $P_Y=P_X$. By the definition of $\mathcal X_S$ and noting that $\theta_1,\theta_2\in S$, we have $X(\theta_1),X(\theta_2)\ge 0$, and hence, $\{Y\ge 0\}=\{X\ge 0\}=S$. Therefore, we have concluded that $Y\in\mathcal X_S$.
Substituting both $X$ and $Y$ into \eqref{eq-PSQS}, we have
\begin{align*}
&\sum_{\omega\in S\setminus\{\theta_1,\theta_2\}}X(\omega)P_S(\omega)
+P_S(\theta_1)X(\theta_1)+P_S(\theta_2)X(\theta_2)=\sum_{\omega\in S^c}|X(\omega)|Q_S(\omega);\\
&\sum_{\omega\in S\setminus\{\theta_1,\theta_2\}}X(\omega)P_S(\omega)
+P_S(\theta_1)X(\theta_2)+P_S(\theta_2)X(\theta_1)=\sum_{\omega\in S^c}|X(\omega)|Q_S(\omega).
\end{align*}
This yields
\begin{align*}
P_S(\theta_1)(X(\theta_1)-X(\theta_2))=P_S(\theta_2)(X(\theta_1)-X(\theta_2)),
\end{align*}
and thus, $P_S(\theta_1)=P_S(\theta_2)$.
Therefore, there exists $\lambda_S>0$ such that $P_S(\omega)=\lambda_S$ for all $\omega\in S$.
Following similar arguments, we can also verify that there exists $\eta_S>0$ such that $Q_S(\omega)=\eta_S$ for all $\omega\in S^c$. Suppose now that $|S_1|=|S_2|$. Let $X=a1_{S_1}-b_{S_1^c}$ satisfying $I(X)=0$ with some $a,b>0$. It is immediate to get $X\in \mathcal X_{S_1}$.
Define $Y=a1_{S_2}-b1_{S_2^c}$. By law invariance of $I$,  we have $I(Y)=I(X)=0$, and thus, $Y\in\mathcal X_{S_2}$.
Substituting $X$ and $Y$ into \eqref{eq-PSQS}, we have
\begin{align*}
&a|S_1|\lambda_{S_1}=aP_{S_1}(S_1)=bQ_{S_1}(S_1^c)=b|S_1^c|\eta_{S_1};\\
&a|S_2|\lambda_{S_2}=aP_{S_2}(S_2)=bQ_{S_2}
(S_2^c)=b|S_2^c|\eta_{S_2}.
\end{align*}
Combining with $|S_1|=|S_2|$ yields $\lambda_{S_1}/\eta_{S_1}=\lambda_{S_2}/\eta_{S_2}$, and this completes the proof of the step (b).\\
\underline{(c)}
From step (b), we have shown that $P_S(\omega)=\lambda_S$ for all $\omega\in S$ and
$Q_{S}(\omega)=\eta_S$ for all $\omega\in S^c$, and  $S\mapsto\lambda_S/\eta_S$ is a constant mapping on the sets $\{S\in\mathcal F:|S|=k\}$ for each $k\in[n]$. In this step, we aim to verify that $S\mapsto\lambda_S/\eta_S$ is a constant mapping on $\mathcal F$. To see this, it suffices to check that $\lambda_{S_1}/\eta_{S_1}=\lambda_{S_2}/\eta_{S_2}$
for $S_1=\{\omega_1,\dots,\omega_s\}$ and $S_2=\{\omega_1,\dots,\omega_t\}$ with $s<t$. Define
\begin{align*}
X_\epsilon=1_{S_1}-\epsilon1_{S_2\setminus S_1}-f(\epsilon)1_{S_2^c},~~\epsilon\ge 0,
\end{align*}
where $\epsilon\mapsto f(\epsilon)$ is a function such that $I(X_{\epsilon})=0$. Indeed, if $\epsilon=0$, then $X_\epsilon\in \mathcal X_{S_2}$, and applying \eqref{eq-PSQS} yields
\begin{align}\label{eq-f(0)}
f(0)=P_{S_2}(S_1)/Q_{S_2}(S_2^c).
\end{align}
For  $0<\epsilon<P_{S_1}(S_1)/Q_{S_1}(S_2\setminus S_1)$, we have $X_{\epsilon}\in \mathcal X_{S_1}$, and using \eqref{eq-PSQS} yields
\begin{align}\label{eq-f(epsilon)}
f(\epsilon)=\frac{P_{S_1}(S_1)-\epsilon Q_{S_1}(S_2\setminus S_1)}{Q_{S_1}(S_2^c)}>0.
\end{align}
Because $I$ satisfies strong monotonicity, the function $\epsilon\mapsto f(\epsilon)$ is strictly decreasing. We claim that $\lim_{\epsilon\downarrow0}f(\epsilon)=f(0)$.
Indeed,
\begin{align}\label{eq-limf(epsilon)}
I\left(1_{S_1}-f(0)1_{S_2^c}\right)=I(X_0)=0=\lim_{\epsilon\downarrow 0}I(X_\epsilon)=I\left(1_{S_1}-\lim_{\epsilon\downarrow0}f(\epsilon)1_{S_2^c}\right),
\end{align}
where the last equality holds because strong monotonicity and constant additivity together imply  $L^\infty$-norm continuity of $I$, and $X_{\epsilon}\to 1_{S_1}-\lim_{\epsilon\downarrow0}f(\epsilon)1_{S_2^c}$ uniformly.
Combining \eqref{eq-limf(epsilon)} with strong monotonicity of $I$ yields $\lim_{\epsilon\downarrow0}f(\epsilon)=f(0)$.
Therefore,
\begin{align*}
\frac{s\lambda_{S_2}}{(n-t)\eta_{S_2}}=\frac{P_{S_2}(S_1)}{Q_{S_2}(S_2^c)}=f(0)=\lim_{\epsilon\downarrow 0}f(\epsilon)=\frac{P_{S_1}(S_1)}{Q_{S_1}(S_2^c)}=\frac{s\lambda_{S_1}}{(n-t)\eta_{S_1}},
\end{align*}
where the second and the fourth equalities follow from \eqref{eq-f(0)} and \eqref{eq-f(epsilon)}, respectively.
This concludes that $S\mapsto \lambda_S/\eta_S$ is a constant on $\mathcal F$, which is denoted as $1/(1+\beta)$. Since it is positive, we have $\beta>-1$.
For any $X\in\mathcal X^0$, using the representation \eqref{eq-PSQS} implies that
\begin{align*}
\E[X^+]=(1+\beta)\E[X^-].
\end{align*}
For any $X\in B_0(\Omega,\mathcal F)$, note that $X-I(X)\in \mathcal X^0$ by translation invariance of $I$, and thus,
\begin{align*}
\E[(X-I(X))^+]=(1+\beta)\E[(I(X)-X)^+],
\end{align*}
which means that $I=\E_{\beta}$ with $\beta>-1$. This completes the proof of the case of finite $\Omega$.\\
\textbf{Proof for the case of infinite $\Omega$.}
We now assume that $(\Omega, \mathcal F, P)$ is a nonatomic probability space. The proof builds upon the result established for the finite case and proceeds by applying standard convergence arguments.
We begin by introducing some notation from Maccheroni et al.~(2025), which will be used in the proof.
 Denote by $q_X$ the quantile function of $X$ under $P$, i.e., $q_X(\alpha)=\inf\{x \in \R: P(X\le x)\ge \alpha\}$ for $\alpha \in (0,1]$. Also write $q_X(0)= \inf\{x \in \R: P(X\le x)>0\}$.
 Let $V$ be a random variable with uniform distribution on $(0,1)$, i.e., $P(V\le x)=x$ for $x\in[0,1]$, such that $X=q_X(V)$ almost surely (for the existence of $V$, see Lemma A.32 of F\"{o}llmer and Schied, 2016).
Define
\begin{align*}
\Psi_{k}=\left\{  \left(  \frac{0}{2^{k}},\frac{1}{2^{k}}\right]  ,\left(
\frac{1}{2^{k}},\frac{2}{2^{k}}\right]  ,\dots,\left(  \frac{2^{k}-1}{2^{k}%
},\frac{2^{k}}{2^{k}}\right]  \right\},~~k\in\N
\end{align*}
as the partition of $\left( 0,1\right]  $ into segments of equal length $2^{-k}$. Further define
\begin{align*}
\Pi_{k}=V^{-1}\left(\Psi_{k}\right),~~k\in\N
\end{align*}
as a partition of $\Omega$ in $\mathcal F$ such that $P(E)=1/2^{k}$ for all $E\in
\Pi_{k}$. By setting $\mathcal F_{k}=\sigma(  \Pi
_{k})  =V^{-1}\left(  \sigma\left(  \Psi_{k}\right)
\right)  $ for all $k\in\mathbb{N}$, we have a filtration $\left\{  \mathcal F
_{k}\right\}_{k\in\mathbb{N}}$ in $\mathcal F$.
Denote by $P|_{\mathcal F_k}$ the restriction of $P$ on $\mathcal F_k$.
Let $B_0(\Omega,\mathcal F_k)$ denote the set of all $\mathcal F_k$-measurable simple random variables, and let $I|_{\mathcal F_k}$ represent the restriction of the functional $I$ to the domain $B_0(\Omega,\mathcal F_k)$.
It is straightforward to verify that $I|_{\mathcal F_k}$ satisfies  strong monotonicity, constant additivity, positive homogeneity and additivity in concordant sums on $B_0(\Omega,\mathcal F_k)$. By the result for finite case, we have $I|_{\mathcal F_k}=\E_{\beta_k}$ with $\beta_k> -1$ for all $k\in\N$. Since $\left\{  \mathcal F
_{k}\right\}_{k\in\mathbb{N}}$ is a filtration, we have
$$
\E_{\beta_k}[X]=I|_{\mathcal F_{k}}(X)=I|_{\mathcal F_{1}}(X)=\E_{\beta_1}[X]
$$
for any $X\in B_0(\Omega,\mathcal F_1)$. We claim that $\beta_k=\beta_1$ for all $k\in\N$.
Denote $\Pi_1=\{E_1,E_2\}$ with $P(E_1)=P(E_2)=1/2$. Let $X=1_{E_1}$, and denote $t=\E_{\beta_1}[X]=\E_{\beta_k}[X]$. It holds that $t\in(0,1)$ by strong monotonicity, and
\begin{align*}
1-t=(1+\beta_1)t\quad {\rm and}\quad 1-t=(1+\beta_k)t
\end{align*}
implying $\beta_k=\beta_1$.
Therefore, the following representation holds:
\begin{align}\label{eq-IV}
I(X)=\E_{\beta}[X],~~X\in\bigcup_{k\in\N} B_0(\Omega,\mathcal F_k),
\end{align}
where we denote $\beta=\beta_k$ for $k\in\N$.
Below, we aim to verify that the representation  holds for all acts in $B_0(\Omega,\mathcal F)$.
We first recall that strong monotonicity implies law invariance.
Let $\underline{V}_k = 2^{-k} \lceil 2^{k}V -1\rceil  $ and $\overline{V}_k = 2^{-k} \lceil 2^{k}V\rceil $
for $k\in \N$, where $\lceil x \rceil$ represents the least integer not less than $x$; that is, $(\underline{V}_k,\overline{V}_k]$ is the interval in $\Psi_k$ that contains $V$.
For $X\in B_0(\Omega,\mathcal F)$, denote by
$\underline X_k=q_X(\underline{V}_k)$ and $\overline X_k= q_X(\overline{V}_k)$.  It is straightforward to check that $\underline X_k, \overline X_k\in \bigcup _{k\in\N}B_0(\Omega,\mathcal F_k)$ and
$\underline X_k \le X \le \overline X_k$. By strong monotonicity, we have
$$
I(\underline X_k ) \le I(X) \le I( \overline X_k),
$$
and combining with \eqref{eq-IV} implies
\begin{align}\label{eq:sandwich}
\E_{\beta}[\underline X_k ] \le I(X) \le \E_{\beta}[\overline X_k].
\end{align}
Since $\underline X_k$ and $\overline X_k$ both converge to $q_X(V)$ bounded a.s., we have
\begin{align*}
\E_{\beta}[\underline X_k ], \E_{\beta}[\overline X_k ]\to \E_{\beta}[q_X(V)]=\E_{\beta}[X],
\end{align*}
where the convergence follows from the continuity property in Lemma \ref{lm-propertyEx}, and the equality is due to law invariance.
Combining with \eqref{eq:sandwich} yields the representation of $\E_{\beta}$ on $B_0(\Omega,\mathcal F)$. This completes the proof.
\end{proof}
\begin{proof}[Proof of Theorem \ref{th-characterizationEx1}.]
\textbf{Sufficiency.} Suppose that $I=\E_{\beta}$ with $\beta\ge 0$. By Lemma \ref{lm-propertyEx},
strong monotonicity and $L^\infty$-norm continuity hold. To see decrease in concordant sums, let $X, Y, W \in B_0(\Omega,\mathcal F)$ be such that $\E_{\beta}[X] = \E_{\beta}[Y]$ and
$D_{\E_{\beta}}(X)=D_{\E_{\beta}}(W)$, where $D_{\E_{\beta}}(X)$ is defined by \eqref{eq-D_I} with the form:
\begin{align*}
D_{\E_\beta}(X)=\{\omega:X(\omega)<\E_{\beta}[X]\}.
\end{align*}
It holds that
\begin{align*}
\E_{\beta}[W+X]=\E_{\beta}[W]+\E_{\beta}[X]=\E_{\beta}[W]+\E_{\beta}[Y]\le \E_{\beta}[W+Y],
\end{align*}
where the first equality and the inequality follow from additivity in concordant sums and superadditivity in Lemma \ref{lm-propertyEx}, respectively.\\
\textbf{Necessity.} Suppose that $I$ satisfies strong monotonicity, $L^\infty$-norm continuity, and decrease in concordant sums with $I(m)=m$ for all $m\in\R$. We aim to show that these properties together imply constant additivity, superadditivity, additivity in concordant sums, and positive homogeneity.
Consequently, by Theorem~\ref{th-characterizationEx}, we conclude that $I = \E_{\beta}$ for some $\beta > -1$. Moreover, superadditivity implies that $\beta \ge 0$, as established in Lemma~\ref{lm-propertyEx}. Below, we proceed to verify each of the required properties in turn.
\underline{Constant additivity.}
The case that $X\in B_0(\Omega,\mathcal F)$ is a constant is trivial as $I(m)=m$ for all $m\in\R$.
For a nonconstant $X\in B_0(\Omega,\mathcal F)$, we have
\begin{align*}
I(X+m)\ge I(I(X)+m)=I(X)+m,
\end{align*}
where the inequality follows from decrease in concordant sums and the fact that $I(I(X))=I(X)$ and $D_I(I(X))=D_I(m)=\varnothing$.
On the other hand,
for $\epsilon> 0$, define $Y_{\epsilon}=m1_{D_I(X)}+(m+\epsilon)1_{D_I(X)^c}$. Strong monotonicity implies $I(Y_{\epsilon})\in (I(m),I(m+\epsilon))=(m,m+\epsilon)$, and thus, $D_I(Y_{\epsilon})=D_I(X)$. Decrease in concordant sums yields
\begin{align*}
	I(X+Y_\epsilon)\le I(I(X)+Y_\epsilon)~~\forall \epsilon>0.
\end{align*}
Note that $\esssup |Y_\epsilon-m|\to0$.
Letting $\epsilon\to 0$ in the above equation, and using $L^\infty$-norm continuity of $I$ implies
\begin{align*}
I(X+m)\le I(I(X)+m)=I(X)+m.
\end{align*}Hence, we have concluded that $I(X+m)=I(X)+m$ for all $m\in\R$, and constant additivity holds.
\underline{Superadditivity.}
For $X, Y \in B_0(\Omega,\mathcal F)$, if either $X$ or $Y$ is constant, then constant additivity immediately yields $I(X + Y) = I(X) + I(Y)$.
Now, consider the case where both $X$ and $Y$ are nonconstant.
Define $X_{\epsilon}=\epsilon 1_{D_I(Y)^c}$ for $\epsilon>0$. Strong monotonicity implies $D_I(X_{\epsilon})=D_I(Y)$. Note that
$$
I(X-I(X)+I(X_{\epsilon}))=I(X)-I(X)+I(X_\epsilon)=I(X_\epsilon),
$$
where we have used constant additivity in the first equality. Therefore,
\begin{align*}
I(X_{\epsilon}+Y)\le I(X-I(X)+I(X_{\epsilon})+Y)=I(X+Y)+I(X_\epsilon)-I(X),
\end{align*}
where the inequality follows from decrease in concordant sums and the equality is due to constant additivity. Letting $\epsilon\downarrow 0$ in the above equation, and using $L^\infty$-norm continuity of $I$ yields
\begin{align*}
I(Y)\le I(X+Y)-I(X).
\end{align*}
This gives superadditivity.
\underline{Additivity in concordant sums.} Suppose that $X,Y\in B_0(\Omega,\mathcal F)$ satisfy $D_I(X)=D_I(Y)$. By decrease in concordant sums and constant additivity, we have $I(X+Y)\le I(I(X)+Y)=I(X)+I(Y)$. Combining with superadditivity yields additivity in concordant sums.
\underline{Positive homogeneity.}
For $X \in B_0(\Omega,\mathcal F)$, $D_I(X) = D_I(X)$ holds trivially.
By additivity in concordant sums, we have $I(2X) = 2I(X)$, which implies $D_I(X) = D_I(2X)$.
Applying additivity in concordant sums again, we obtain $I(3X) = 3I(X)$.
By iterating this argument, it follows that $I(\lambda X) = \lambda I(X)$ for all rational $\lambda > 0$.
Since $I$ satisfies $L^\infty$-norm continuity, this identity can be extended to all real $\lambda \ge 0$ via a standard convergence argument.
\end{proof}
\begin{proof}[Proof of Theorem \ref{th-characterization_general}.]
The proof of sufficiency is similar to that of Theorem \ref{th-characterizationEx1}, and follows directly from Lemma~\ref{lm-propertyEx}. We now consider the necessity. Suppose that $I:B_0(\Omega,\mathcal F, A)\to \R$ satisfies strong monotonicity, $L^\infty$-norm continuity, and decrease in concordant mixtures with $I(m)=m$ for all $m\in A$.
The proof proceeds by verifying the following claims:
\begin{enumerate}[(a)]
\item $I\left(\frac{X+m}{2}\right)=\frac{1}{2}(I(X)+m)$ for all $X\in B_0(\Omega,\mathcal F, A)$ and $m\in A$.
\item $I(\lambda X)=\lambda I(X)$ for all $\lambda\in[0,1]$ and $X\in B_0(\Omega,\mathcal F, A)$. Because $0$ is in the interior of $A$, the conditions $\lambda\in[0,1]$ and $X\in B_0(\Omega,\mathcal F,A)$ imply $\lambda X\in B_0(\Omega,\mathcal F,A)$.
\item The functional $I$ can be extended to $B_0(\Omega,\mathcal F)$ in such a way that the extension satisfies strong monotonicity, $L^\infty$-norm continuity, and decrease in concordant sums, thereby allowing the desired result to follow from Theorem~\ref{th-characterizationEx1}.
\end{enumerate}
\underline{(a)}
For $X\in B_0(\Omega,\mathcal F,A)$ and $m\in A$, it is trivial for the case that $X$ is a constant. Suppose now that $X$ is nonconstant. It holds that
\begin{align}\label{eq-cxTI1}
\frac{I(X)+m}{2}=I\left(\frac{I(X)+m}{2}\right)\le I\left(\frac{X+m}{2}\right),
\end{align}
where the inequality follows from decrease in concordant mixtures by noting that
$I(I(X))=I(X)$ and $D_I(I(X))=D_I(m)=\varnothing$, where we recall that $D_I$ is defined by \eqref{eq-D_I} with the form:
\begin{align*}
D_I(Z)=\{\omega: Z(\omega)<I(Z)\},\quad Z\in B_0(\Omega,\mathcal F,A).
\end{align*}
On the other hand,
strong monotonicity implies $I(X)\in(\essinf X,\esssup X)$, and thus, $P(D_I(X))>0$.
Define $Y_{\epsilon}=m-\epsilon 1_{D_I(X)}$ if $m$ is the right endpoint of $A$, and $Y_\epsilon=m+\epsilon 1_{D_I^c(X)}$ otherwise.
Applying strong monotonicity again, it is straightforward to verify that  $D_I(Y_{\epsilon})=D_I(X)$ for all $\epsilon>0$ such that $Y_\epsilon\in B_0(\Omega,\mathcal F,A)$. Hence, we have
\begin{align}\label{eq-cxTI2}
I\left(\frac{X+m}{2}\right) \overset{\epsilon\downarrow 0}{\longleftarrow}I\left(\frac{X+Y_\epsilon}{2}\right)\le I\left(\frac{I(X)+Y_\epsilon}{2}\right)\overset{\epsilon\downarrow 0}{\longrightarrow}I\left(\frac{I(X)+m}{2}\right)=\frac{I(X)+m}{2},
\end{align}
where the convergences are due to the $L^\infty$-norm continuity of $I$ and the inequality follows from decrease in concordant mixtures. Hence, statement (a) holds by combining \eqref{eq-cxTI1} and \eqref{eq-cxTI2}.\\
\underline{(b)}
We first consider a result as follows:
\begin{align}\label{eq-cxPH}
I\left(\frac{k}{2^n}X\right)=\frac{k}{2^n}I(X)~~{\rm for~all}~~X\in B_0(\Omega,\mathcal F,A),~n\in\N,~k\in[2^n].
\end{align}
We use induction to prove this conclusion. We only consider the nontrivial case that $X$ is not a constant.
The cases of $(n,k)\in\{(0,1),(1,2)\}$ hold directly. The case of $(n,k)=(1,1)$ follows immediately from statement (a) with $m=0$.
Assume now that \eqref{eq-cxPH} holds for $(n-1,k)$ with $k\in[2^{n-1}]$ and $(n,k)$ with $k\in[2^{n}]$. Let us consider the situations that $(n+1,k)$ with $k\in [2^{n+1}]$.
If $k\in [2^{n+1}]$ is even, then we have
\begin{align*}
I\left(\frac{k}{2^{n+1}}X\right)=I\left(\frac{k/2}{2^{n}}X\right)
=\frac{k/2}{2^n}I(X)=\frac{k}{2^{n+1}}I(X),
\end{align*}
where we have used the inductive assumption in the second equality. 
Notice that \eqref{eq-cxPH} in the case of $k=1$    follows from a repeated application of part (a) with $m=0$.   
For odd $k\in [2^{n+1}]$ and $k\ne 1$, let $s=(k-1)/2\in\N$, and thus, $k=2s+1$. Denote $\lambda_1=s/2^n$ and $\lambda_2=(s+1)/2^n$.  Since $\lambda_1,\lambda_2\in [0,1]$, we have $\lambda_1 X, \lambda_2 X\in B_0(\Omega,\mathcal F,A)$ because $0$ belongs to the interior of $A$. It follows from the inductive assumption that
$
I(\lambda_i X)=\lambda_i I(X)$ for  $i=1,2$, which implies
\begin{align}\label{eq-PHone}
D_I({\lambda_1 X})=D_I({\lambda_2 X})=D_I(X).
\end{align}
Therefore,
\begin{align}
I\left(\frac{k}{2^{n+1}}X\right)
&=I\left(\frac{\lambda_1 X+\lambda_2 X}{2}\right)\notag\\
&\le I\left(\frac{I(\lambda_1 X)+\lambda_2 X}{2}\right)\label{eq-PHone1}\\
&=\frac{I(\lambda_1 X)+I(\lambda_2 X)}{2}\label{eq-PHone2}\\
&=\frac{\lambda_1 I(X)+\lambda_2 I(X)}{2}
=\frac{k}{2^{n+1}}I\left(X\right),\label{eq-PHone3}
\end{align}
where the inequality follows from decrease in concordant mixtures, the second equality is due to statement (a), and we have used the inductive assumption in the third equality. On the other hand, define $Y_\epsilon=I(\lambda_1 X)+\epsilon 1_{D_I^c(X)}$ for $\epsilon>0$. Since $X$ is nonconstant, strong monotonicity implies that
$$
I(\lambda_1 X)<\esssup (\lambda_1 X)\le \esssup X.
$$
Note that $I(\lambda_1 X), \esssup X\in A$, and we have $Y_{\epsilon}\in B_0(\Omega,\mathcal F,A)$ for any $\epsilon\in(0,\esssup X-I(\lambda_1 X))$.
Because $Y_{\epsilon}$ is a two-point random variable that attains its larger value on $D_I(X)$, strong monotonicity implies that
$D_I(Y_{\epsilon})=D_I(X)$.
Combining with \eqref{eq-PHone} yields
$$
D_I({Y_{\epsilon}})=D_I({\lambda_1 X})=D_I({\lambda_2 X})=D_I(X).
$$
Define $\eta_\epsilon\in\R$ as the number satisfying $I(\lambda_1 X+\eta_\epsilon)=I(Y_\epsilon)$. For $\epsilon>0$,  strong monotonicity implies that $I(Y_\epsilon)>I(\lambda_1 X)$, and  $I(Y_\epsilon)$ is increasing in $\epsilon$. Hence, $\eta_\epsilon$ is positive and increasing in $\epsilon$. Moreover, the $L^\infty$-norm continuity implies $\lim_{\epsilon\downarrow 0}\eta_\epsilon=0$.
Therefore,
\begin{align*}
\frac{k}{2^{n+1}}I(X)=I\left(\frac{I(\lambda_1 X)+\lambda_2 X}{2}\right) & \overset{\epsilon\downarrow 0}{\longleftarrow}
I\left(\frac{Y_\epsilon+\lambda_2 X}{2}\right)\\&\le I\left(\frac{\lambda_1 X+\eta_\epsilon+\lambda_2 X}{2}\right)\\
&\overset{\epsilon\downarrow 0}{\longrightarrow}
I\left(\frac{\lambda_1 X+\lambda_2 X}{2}\right)
=I\left(\frac{k}{2^{n+1}}X\right),
\end{align*}
where the first equality has been verified in \eqref{eq-PHone1}-\eqref{eq-PHone3}, the convergences are due to the $L^\infty$-norm continuity of $I$, and the inequality follows from decrease in concordant  mixtures by noting that $I(Y_{\epsilon})=I(\lambda_1 X+\eta_\epsilon)$ and $D_I(Y_{\epsilon})=D_I(\lambda_2 X)$. Hence, we have concluded that the equation in \eqref{eq-cxPH} holds for $(n+1,k)$ with $k\in[2^{n+1}]$. This completes the proof of \eqref{eq-cxPH}. Since $\{k/2^n:~n\in\N,~k\in[2^n]\}$ is a dense subset of $[0,1]$, this combined with the $L^\infty$-norm continuity of $I$ yields $I(\lambda X)=\lambda I(X)$ for all $X\in B_0(\Omega,\mathcal F,A)$  and $\lambda\in[0,1]$.
\\
\underline{(c)}
Define
$$
\widetilde{I}(X)=\frac{1}{\lambda_X}I\left({\lambda_X X}\right)~{\rm for}~X\in B_0(\Omega,\mathcal F),
$$
where $$\lambda_X:=\frac{1}{2}\sup\{\lambda\in[0,1]: \lambda X\in B_0(\Omega,\mathcal F,A)\}.
$$
We clarify that $\lambda_X$ is defined as half the value of the supremum problem above, rather than the full value, in order to avoid situations where $\lambda_X X \notin B_0(\Omega, \mathcal{F}, A)$ when $A$ is an open set.
We aim to prove that $\widetilde{I}$ is an extension of $I$ on $B_0(\Omega,\mathcal F)$ satisfying strong monotonicity, $L^\infty$-norm continuity and decrease in concordant sums with $\widetilde{I}(m)=m$ for all $m\in \R$.
If $X\in B_0(\Omega,\mathcal F,A)$, it is clear that $\lambda_X=1/2$, and statement (b) implies $\widetilde{I}(X)=I(X)$, which means that $\widetilde{I}$ is an extension of $I$. Let $X\in B_0(\Omega,\mathcal F)$, and we claim that
\begin{align}\label{eq-extension1}
\widetilde{I}(X)=\frac{1}{\lambda} I(\lambda X)~{\rm whenever}~\lambda\in(0,1]~{\rm and}~\lambda X\in B_0(\Omega,\mathcal F,A).
\end{align}
If $\lambda\ge \lambda_X$, then denote by $\theta=\lambda_X/\lambda\in[0,1]$, and we have
\begin{align*}
\frac{1}{\lambda} I(\lambda X)=\frac{1}{\lambda} \left(\frac{1}{\theta} I\left(\theta(\lambda X)\right)\right)=\frac{1}{\lambda_X}I\left({\lambda_X X}\right)=\widetilde{I}(X).
\end{align*}
If $\lambda \le \lambda_X$, a similar argument holds with $\lambda$ and $\lambda_X$ interchanged. Hence, \eqref{eq-extension1} holds. For $m\in\R$, let $\lambda\in(0,1]$ be such that $\lambda m\in A$, and we have $\widetilde{I}(m)=I(\lambda m)/\lambda=m$. For $X,Y\in B_0(\Omega,\mathcal F)$ with $P(X\ge t)\ge P(Y\ge t)$ for all $t\in\R$, let $\lambda\in(0,1]$ be such that $\lambda X, \lambda Y\in B_0(\Omega,\mathcal F,A)$, and we have
\begin{align*}
\widetilde{I}(X)=\frac{I(\lambda X)}{\lambda}\ge \frac{I(\lambda Y)}{\lambda}=\widetilde{I}(Y),
\end{align*}
where the inequality follows from strong monotonicity of $I$ and $P(\lambda X\ge t)\ge P(\lambda Y\ge t)$ for all $t\in\R$. 
If, in addition, $P(X\ge t_0)>P(Y\ge t_0)$ for some $t_0\in\mathbb R$, then the above inequality is strict.
Hence, $\widetilde{I}$ satisfies strong monotonicity. For $\{X_n\}_{n\in\N}\subseteq B_0(\Omega,\mathcal F)$ and $X\in B_0(\Omega,\mathcal F)$ with $X_n\to X$ in $L^\infty$, there exists $\lambda\in(0,1]$ such that $\lambda X\in B_0(\Omega,\mathcal F,A)$ and $\lambda X_n\in B_0(\Omega,\mathcal F,A)$ for all $n\in \N$, and hence,
\begin{align*}
\widetilde{I}(X_n)=\frac{I(\lambda X_n)}{\lambda}\to \frac{I(\lambda X)}{\lambda}=\widetilde{I}(X),
\end{align*}
where the convergence follows from the $L^\infty$-norm continuity of $I$. This yields the $L^\infty$-norm continuity of $\widetilde{I}$.
Finally, let $W,X,Y\in B_0(\Omega,\mathcal F)$ with $\widetilde{I}(X)=\widetilde{I}(Y)$ and $D_{\widetilde{I}}(W)=D_{\widetilde{I}}(X)$. There exists $\lambda\in(0,1]$ such that $\lambda W,\lambda X,\lambda Y, \lambda(W+X), \lambda(W+Y)\in B_0(\Omega,\mathcal F,A)$, and we have
\begin{align}\label{eq-extensionDC1}
\frac{I(\lambda X)}{\lambda}=\widetilde{I}(X)=\widetilde{I}(Y)= \frac{I(\lambda Y)}{\lambda}
\end{align}
and
\begin{align}\label{eq-extensionDC2}
D_I(\lambda W)&=\left\{\omega: W(\omega)< \frac{I(\lambda W)}{\lambda}\right\}=\left\{\omega: W(\omega)< \widetilde{I}(W)\right\}=D_{\widetilde{I}}(W)\notag\\
&=D_{\widetilde{I}}(X)
=\left\{\omega: X(\omega)< \widetilde{I}(X)\right\}=\left\{\omega: X(\omega)< \frac{I(\lambda X)}{\lambda}\right\}=D_I(\lambda X).
\end{align}
Therefore,
\begin{align*}
\widetilde{I}(W+X)=\frac{I(\lambda W/2+\lambda X/2)}{\lambda/2}\le \frac{I(\lambda W/2+\lambda Y/2)}{\lambda/2} = \widetilde{I}(W+Y),
\end{align*}
where the inequality follows from the property of decrease in concordant mixtures, along with the relations in \eqref{eq-extensionDC1} and \eqref{eq-extensionDC2}. This completes the proof.
\end{proof}
\section{Proofs of results in Section \ref{sec:expectiled-utility}}
\begin{proof}[Proof of Proposition \ref{prop:expectiles}.]
The solution to \eqref{eq:DA} is an expectile and is unique; see, e.g.~Bellini et al.~(2014). Next, we aim to verify that a solution to Gul's equation \eqref{eq:Gulagain} is a solution of \eqref{eq:DA}, and conversely. Suppose that $v$ is a solution of  \eqref{eq:Gulagain}, and thus,
\begin{align*}
v=\E\left[\frac{u(X)+\beta v}{1+\beta}1_{\{u(X)\ge v\}}+u(X)1_{\{u(X)<v\}}\right].
\end{align*}
It is straightforward to verify that the two terms inside the bracket in the above equation can be reformulated as follows:
\begin{align*}
\frac{u(X)+\beta v}{1+\beta}1_{\{u(X)\ge v\}}=\frac{1}{1+\beta} \left((u(X)-v)^++(1+\beta)v1_{\{u(X)\ge v\}}\right)
\end{align*}
and
\begin{align*}
u(X)1_{\{u(X)<v\}}=-(v-u(X))^++v 1_{\{u(X)<v\}}.
\end{align*}
Therefore, $v$ is a solution of \eqref{eq:Gulagain} if and only if 
\[
v=\E\left[\frac{1}{1+\beta} \left((u(X)-v)^++(1+\beta)v1_{\{u(X)\ge v\}}\right)-(v-u(X))^++v 1_{\{u(X)<v\}}\right],
\]
equivalently, 
\[
v=\frac{1}{1+\beta}\E[(u(X)-v)^+]-\E[(v-u(X))^+]+v,
\]
that is, $v$ is a solution of \eqref{eq:DA}. This completes the proof.
\end{proof}
\begin{proof}[Proof of Theorem \ref{th:expectiles}.]
The result follows directly from Newey and Powell (1987) and Bellini et al.~(2014), where expectiles are characterized as solutions to asymmetric least squares minimization problems and as coherent risk measures under suitable conditions.
\end{proof}
\begin{proof}[Proof of Theorem \ref{th:GulNeweyPowellPlus}.]
For notational simplicity, we denote $U=u(X)$.
Define $\mathcal P$ as the set of all probability measures.
Note that $\beta\ge 0$. By the dual representation of $\E_{\beta}$ in Lemma \ref{lm-propertyEx}, we have
\begin{align*}
\E_{\beta}[U]=
\min_{Q\in\mathcal P^{\beta}} \E^Q[U],
\end{align*}
where
$$
\mathcal P^{\beta}=\left\{Q \in\mathcal P:  \frac{\operatorname{ess} \sup \d Q/\d P}{\operatorname{ess} \inf \d Q/\d P} \leq 1+\beta\right\},
$$
and the optimal probability measure $Q^*$  is given by
$$
\mathrm{d}Q^{\ast}=\frac{1_{D_{X  }^{c}}+(1+\beta
)1_{D_{X   }}}{1+\beta P\big(D_{X  }%
\big)}\mathrm{d}P ,
$$
and it is clear that $Q^{\ast} \in \mathcal Q^\beta$.
The conclusion follows from this and the observation  $\mathcal Q^{\beta} \subseteq \mathcal P^{\beta}$, which can be checked directly.  
\end{proof}
\section{Proofs  of results and related axioms in Section \ref{sec:expectiled-axioms}}
In this section, we begin by presenting the detailed axioms underlying invariant biseparable preferences  and some related results as introduced in GMMS.
\begin{aaxiom}[Weak Order]
\label{ax:wo} (a) For all $X,Y\in \mathbb{X}$, $%
X\succsim Y$ or $Y\succsim X$. (b) For all $X,Y,Z\in \mathbb{X} $, if $X\succsim Y$ and $Y\succsim Z$, then $X\succsim Z$.
\end{aaxiom}
\begin{aaxiom}[Dominance]
\label{ax:D} For every $X,Y\in \mathbb{X}$, if $X(\omega)\succsim Y(\omega)$ for all $\omega\in\Omega$, then $X\succsim Y$.
\end{aaxiom}
\begin{aaxiom}[Essentiality]
\label{ax:E}
There exists an event $E\in\mathcal F$ such that $x \succ xEy \succ y$ for some  consequences $x, y \in \mathcal{X}$. Such an event is called essential.
\end{aaxiom}
Given $E$, we denote by $\sigma(E)$ the algebra generated by $E$.
We use the following terminology: An event $A \in \mathcal F$ is null (resp.~universal) if $y \sim x A y$ (resp.~$x \sim x A y$ ) for every $x \succ y$. It follows from Axiom \ref{ax:wo} that an event can be only one of null, essential, or universal.
\begin{aaxiom}[$E$-Monotonicity]
\label{ax:EM}
For every non-null $A \in \sigma(E)$ and every $x, y \succsim z \in \X$,
$$
x \succ y \Longrightarrow x A z \succ y A z.
$$
For every non-universal $A \in \sigma(E)$ and every $x, y \precsim z \in X$,
$$
x \succ y \Longrightarrow z A x \succ z A y.
$$
\end{aaxiom}
\begin{aaxiom}[$E$-Continuity]
\label{ax:C}\label{ax:c}
Let $\{X_n\}_{n \in \N} \subseteq \mathbb{X}$ be a sequence of $\sigma(E)$-measurable acts that pointwise converges  to $X$. For every $Y \in \mathbb{X}$, if $X_n \succsim Y$ (resp.~$Y \succsim X_n$) for all $n\in\N $, then $X \succsim Y$ (resp.~$Y \succsim X$).
\end{aaxiom}
It is straightforward to show (see e.g.~Lemma 12 of GMMS) that any binary relation satisfying axioms \ref{ax:wo}--\ref{ax:E} and \ref{ax:c} has certainty equivalents. That is, for every $X \in \mathbb{X}$, there exists $x \in \X$ such that $x \sim X$. Granted this, we henceforth denote by $c_X$ an arbitrarily chosen certainty equivalent of $X \in \mathbb{X}$.
The next axiom imposes a behavioral restriction. We write $x \succsim\left\{z', z^{\prime \prime}\right\}$ (resp.~$\left\{z^{\prime}, z^{\prime \prime}\right\} \succsim$ $y)$ if $x \succsim z^{\prime}$ and $x \succsim z^{\prime \prime}\left(\right.$ resp.~$z^{\prime} \succsim y$ and $\left.z^{\prime \prime} \succsim y\right)$.
\begin{aaxiom}[$E$-Substitution]
\label{ax:ES}
For all $x, y, z^{\prime}, z^{\prime \prime} \in X$ and $A, B \in \sigma(E)$. Suppose that $x \succsim$ $\left\{z^{\prime}, z^{\prime \prime}\right\} \succsim y$. Then
$$
c_{x A z^{\prime}} B c_{z^{\prime \prime} A y} \sim c_{x B z^{\prime \prime}} A c_{z^{\prime} B y}
$$
\end{aaxiom}
The final axiom of GMMS uses the notion of mixture thus derived to impose a very weak and natural
property of separability of preferences. Before stating the axiom, we first introduce some necessary preliminaries.
\begin{definition}[Preference Average]\label{def:midpoint}
Given $x, y \in \X$ such that $x \succsim y$ (resp.~$y \succsim x$), the preference average of $x,y$ given the event $E$, denoted by $(1 / 2) x \oplus(1 / 2) y$, is
a consequence $z \in\X$ such that $x \succsim z \succsim y$ (resp.~$y \succsim z \succsim x$) and
$$
x E y \sim c_{x E z} E c_{z E y}~~~\left(\text {resp. } y E x \sim c_{y E z} E c_{z E x}\right).
$$
\end{definition}
We note that the term preference average used in the above definition corresponds to what is referred to as the preference midpoint in the main text of this paper.
Based on Axioms \ref{ax:wo}--\ref{ax:ES}, Lemma 1 of GMMS establishes that the DM’s preferences admit a canonical representation over the set of all $\sigma(E)$-measurable acts, characterized by a continuous and nonconstant canonical utility function $u$.
\begin{lemma}[Lemma 1 of GMMS]\label{lm-CR}
The binary relation $\succsim$ satisfies Axioms \ref{ax:wo}--\ref{ax:ES} if and only if there is a continuous nonconstant utility index $u: \X \rightarrow \mathbb{R}$ and a capacity $\rho_E: \sigma(E) \rightarrow[0,1]$, with $\rho_E(E) \in(0,1)$, such that the functional $V: \mathbb X \rightarrow \mathbb{R}$ defined by $V(X)=u\left(c_X\right)$ for any $X \in \mathbb X $ represents $\succsim$, it is $u$-monotone,\footnote{This property means: if $u(X)\geq u(Y)$ pointwise then $V(X)\geq V(Y)$.} and it satisfies, for all $x \succsim y$ and all $A \in \sigma(E)$,
$$
V(x A y)=u(x) \rho_E(A)+u(y)\left(1-\rho_E(A)\right) .
$$
Moreover, such $u$ and $V$ are unique up to a positive affine transformation and $\rho_E$ is unique.
\end{lemma}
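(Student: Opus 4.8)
The plan is to treat Lemma~\ref{lm-CR} as the canonical \emph{biseparable} representation of $\succsim$ on $\sigma(E)$-measurable acts, organized around two objects: a cardinal utility index $u:\X\to\R$ and the single number $\rho:=\rho_E(E)\in(0,1)$ encoding the willingness to bet on $E$. The engine will be Axiom~\ref{ax:ES} ($E$-Substitution): since $\sigma(E)=\{\varnothing,E,E^{c},\Omega\}$, its only nontrivial instance is a \emph{bisymmetry} identity for the binary operation ``bet on $E$''. I would first invoke the already-recalled Lemma~12 of GMMS: Axioms~\ref{ax:wo}--\ref{ax:E} and~\ref{ax:c} give every act a certainty equivalent $c_X\in\X$. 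Restricting $\succsim$ to constant acts, using Axiom~\ref{ax:wo}, the standing topological hypotheses on $\X$ (connected, separable, metric), and the continuity supplied by Axiom~\ref{ax:c} on constant sequences, a Debreu-type argument produces a continuous function representing $\succsim$ on $\X$, nonconstant by Axiom~\ref{ax:E}. This representation is so far only ordinal; Axiom~\ref{ax:ES} will fix it cardinally.

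Next I would build the mixture operation. For $x\succsim y$ put $W(x,y):=c_{xEy}$; Dominance (Axiom~\ref{ax:D}) gives $x\succsim xEy\succsim y$, and $E$-Monotonicity (Axiom~\ref{ax:EM}) together with essentiality makes these strict when $x\succ y$, so $W(x,y)$ lies strictly between $y$ and $x$. On the triangle $\{(x,y):x\succsim y\}$ the operation $W$ is idempotent ($W(x,x)\sim x$), monotone in each argument, continuous (Axiom~\ref{ax:c}), and \emph{bisymmetric}: $W(W(x,z'),W(z'',y))\sim W(W(x,z''),W(z',y))$ whenever $x\succsim\{z',z''\}\succsim y$, which is exactly Axiom~\ref{ax:ES} with $A=B=E$ after rewriting $c_{xEz'}Ec_{z''Ey}\sim c_{xEz''}Ec_{z'Ey}$ in the $W$-notation. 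The classical representation theorem for continuous monotone bisymmetric operations on a connected domain (Acz\'el) then yields a continuous strictly increasing recalibration $u$ of the ordinal index and a weight $\rho\in(0,1)$ with
\[
u(c_{xEy})=\rho\,u(x)+(1-\rho)\,u(y),\qquad x\succsim y.
\]
Defining $\rho_E(E)=\rho$, $\rho_E(E^{c})=1-\rho$, $\rho_E(\Omega)=1$, $\rho_E(\varnothing)=0$ gives a capacity on $\sigma(E)$ with $\rho_E(E)\in(0,1)$, and for $x\succsim y$ and $A\in\sigma(E)$ one reads off $V(xAy)=u(x)\rho_E(A)+u(y)(1-\rho_E(A))$ (using $xE^{c}y=yEx$ when $A=E^{c}$).

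I would then set $V(f):=u(c_f)$ on all acts. Since $c_f\sim f$ and $u$ represents $\succsim$ on $\X$, $V$ represents $\succsim$; its statewise monotonicity follows from Axiom~\ref{ax:D}; and the bet formula has just been established. For uniqueness: any competing $u'$ obeying $u'(c_{xEy})=\rho'u'(x)+(1-\rho')u'(y)$ is a continuous solution of the same mixture equation, and such solutions form a positive-affine family, so $u'$ is a positive affine transform of $u$, $V$ is determined up to that transform, and $\rho=(u(c_{xEy})-u(y))/(u(x)-u(y))$ is unchanged, so $\rho_E$ is unique. For the converse I would verify the six axioms from the representation: Weak Order from representability by $V$; Dominance from monotonicity of $V$; Essentiality from $\rho_E(E)\in(0,1)$ and $u$ nonconstant; $E$-Monotonicity and $E$-Continuity because, on $\sigma(E)$-measurable acts, $V$ is an explicit affine function of $(u(x),u(y))$ that is continuous and strictly increasing in each nonnull coordinate; and $E$-Substitution because both sides of its conclusion evaluate under $V$ to $\rho^{2}u(x)+\rho(1-\rho)\big(u(z')+u(z'')\big)+(1-\rho)^{2}u(y)$, which is symmetric in $z'$ and $z''$.

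The hard part will be the passage from the abstract operation $W$ to the explicit weighted mean: one must check that $W$ satisfies the solvability and range conditions required by the bisymmetry representation theorem even though $W$ is only defined on the ordered pairs $x\succsim y$ (so one must confirm the bisymmetry equation never leaves this triangular domain), and one must use the connectedness of $\X$ essentially to upgrade the monotone ordinal index to a genuinely continuous cardinal $u$. The remaining steps are routine bookkeeping with the linear-in-$u$ formula for bets.
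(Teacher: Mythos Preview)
The paper does not supply its own proof of this lemma: it is quoted verbatim as ``Lemma~1 of GMMS'' and invoked as a black box. So there is nothing to compare your proposal against in the present paper; I can only assess it on its merits.

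Your overall strategy---reduce to the binary operation $W(x,y)=c_{xEy}$ on ordered pairs, extract bisymmetry from the $A=B=E$ instance of $E$-Substitution, and apply Acz\'el's quasi-arithmetic mean theorem---is the standard route and is essentially what GMMS do. The domain issues you flag (triangular domain, solvability, connectedness) are real but manageable, and your identification of them is apt.

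There is, however, a genuine gap. You set $\rho_E(E^{c})=1-\rho$ and justify the formula for $A=E^{c}$ by writing ``using $xE^{c}y=yEx$''. This does not work: your Acz\'el step delivers $u(c_{aEb})=\rho\,u(a)+(1-\rho)\,u(b)$ only for $a\succsim b$, and in $yEx$ the first argument is the \emph{worse} outcome, so the formula does not apply. The lemma asserts a \emph{capacity} $\rho_E$, not a probability; in general $\rho_E(E)+\rho_E(E^{c})\neq 1$ (this is exactly what allows biseparable preferences to depart from expected utility on bets). You therefore need an independent derivation of $\rho_E(E^{c})$. One clean fix: run the same bisymmetry argument for the $E^{c}$-operation (Axiom~\ref{ax:ES} with $A=B=E^{c}$) to obtain a weighted mean with some $u'$ and weight $\rho'$, and then invoke the \emph{mixed} instance $A=E$, $B=E^{c}$ of $E$-Substitution to force $u'$ to be a positive affine transform of $u$; this pins down $\rho_E(E^{c})=\rho'$ with the \emph{same} cardinal $u$. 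Your converse computation likewise only checks $A=B=E$; the cross case $A=E$, $B=E^{c}$ should also be verified (it does go through, with the general $\rho_E(E^{c})$, by the same bilinear expansion).
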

Furthermore, Lemma 3 of GMMS shows that this utility function is additive with respect to the preference average operator. We present the corresponding result below.
\begin{lemma}[Lemma 3 of GMMS]\label{lm-midpoint}
Suppose that the binary relation $\succsim$ satisfies Axioms \ref{ax:wo}--\ref{ax:ES}.
For any $x,y\in\mathcal X$, there exists $z=(1/2)x\oplus(1/2)y$.
If $u$ is the cardinal utility that represents $\succsim$ by Lemma \ref{lm-CR}, then
\begin{align*}
u\left(\frac12 x \oplus\frac12 y\right)=\frac12u(x)+\frac12u(y).
\end{align*}
\end{lemma}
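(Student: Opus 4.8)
The plan is to reduce the statement to the canonical representation of Lemma~\ref{lm-CR}. Fix an essential event $E$ (Axiom~\ref{ax:E}) and set $\rho:=\rho_E(E)\in(0,1)$; by Lemma~\ref{lm-CR} there is a continuous nonconstant $u:\mathcal{X}\to\mathbb{R}$ such that $V(f):=u(c_f)$ represents $\succsim$, is monotone, and satisfies $V(aEb)=\rho\,u(a)+(1-\rho)\,u(b)$ whenever $a\succsim b$. Since $\succsim$ is complete (Axiom~\ref{ax:wo}), assume without loss of generality $x\succsim y$, i.e.\ $u(x)\ge u(y)$. Because $\mathcal{X}$ is connected and $u$ is continuous and nonconstant, $u(\mathcal{X})$ is a nondegenerate, hence convex, interval, so there is $z\in\mathcal{X}$ with $u(z)=\tfrac12 u(x)+\tfrac12 u(y)$; as $u$ represents $\succsim$ on constants, $x\succsim z\succsim y$, so $z$ is an admissible candidate for a preference average in the sense of Definition~\ref{def:midpoint}.

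The core is a short computation. Using the certainty-equivalent property (Lemma~12 of GMMS, recalled in the text) together with the representation, $u(c_{xEz})=V(xEz)=\rho\,u(x)+(1-\rho)\,u(z)$ and $u(c_{zEy})=V(zEy)=\rho\,u(z)+(1-\rho)\,u(y)$, where we used $x\succsim z$ and $z\succsim y$. From $u(x)\ge u(z)\ge u(y)$ and $\rho\in(0,1)$ one checks $u(c_{xEz})\ge u(z)\ge u(c_{zEy})$, hence $c_{xEz}\succsim c_{zEy}$, so the representation applies once more and
\[
V\bigl(c_{xEz}\,E\,c_{zEy}\bigr)=\rho^2 u(x)+2\rho(1-\rho)\,u(z)+(1-\rho)^2 u(y).
\]
Substituting $u(z)=\tfrac12(u(x)+u(y))$, the $\rho(1-\rho)$ coefficients collect so that this equals $\rho\,u(x)+(1-\rho)\,u(y)=V(xEy)$; hence $c_{xEz}\,E\,c_{zEy}\sim xEy$, so $z$ satisfies Definition~\ref{def:midpoint} and a preference average exists.

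For the displayed identity I would observe that any preference average $z'$ of $x,y$ satisfies $x\succsim z'\succsim y$ and $xEy\sim c_{xEz'}\,E\,c_{z'Ey}$; running the same computation with $z'$, this indifference becomes $\rho\,u(x)+(1-\rho)\,u(y)=\rho^2 u(x)+2\rho(1-\rho)\,u(z')+(1-\rho)^2 u(y)$, which solves uniquely for $u(z')$ since $\rho(1-\rho)\neq0$, forcing $u(z')=\tfrac12(u(x)+u(y))$. Thus all preference averages are mutually indifferent, the symbol $(1/2)x\oplus(1/2)y$ is well defined up to indifference, and $u\bigl((1/2)x\oplus(1/2)y\bigr)=\tfrac12 u(x)+\tfrac12 u(y)$. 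I do not expect a genuine obstacle here, as the argument is just algebra layered on Lemma~\ref{lm-CR}; the only care-points are verifying the comparison $c_{xEz}\succsim c_{zEy}$ so that the correct branch of the $V(\cdot E\cdot)$ formula is applied, invoking connectedness of $\mathcal{X}$ for the existence of an outcome of the prescribed utility, and checking that the degenerate case $x\sim y$ is subsumed by the same computation.
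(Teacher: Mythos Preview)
Your argument is correct. The paper itself does not supply a proof of this lemma; it simply quotes the statement from GMMS (2001, Lemma~3) and uses it as a black box. Your direct verification via the canonical representation of Lemma~\ref{lm-CR} is exactly the natural route and is essentially how GMMS themselves prove it: exhibit a candidate $z$ with $u(z)=\tfrac12(u(x)+u(y))$ using connectedness of $\mathcal X$, then check Definition~\ref{def:midpoint} by the quadratic-in-$\rho$ computation, and finally invert that same identity (using $\rho(1-\rho)\ne 0$) to pin down $u(z')$ for any preference average $z'$. The care-points you flag (the branch check $c_{xEz}\succsim c_{zEy}$, the degenerate case $x\sim y$) are the right ones and are handled correctly.
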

Let $w$ be a preference average of $x$ and $y$. Then, $z$ is a preference average of $x$ and $w$ if and only if $u(z)=(3 / 4) u(x)+(1 / 4) u(y)$; that is, $z$ is a $(3 / 4):(1 / 4)$ utility mixture of $x$ and $y$. This allows us to identify $(3 / 4) x \oplus(1 / 4) y$ behaviorally. Proceeding along these lines and using the continuity axiom (Axiom \ref{ax:c}), it is possible to identify behaviorally the $\alpha: 1-\alpha$ utility mixtures of $x$ and $y$, for any $\alpha \in[0,1]$. Specifically, Lemma 13 of GMMS shows that
\begin{align*}
u(\alpha x\oplus (1-\alpha)y)=\alpha u(x)+(1-\alpha)u(y),\quad\forall x,y\in\mathcal X,~\alpha\in[0,1].
\end{align*}
Subjective mixtures of acts may then be defined pointwise, as usual. That is, given $X, Y \in \mathbb{X}$ and $\alpha \in[0,1], \alpha X \oplus(1-\alpha) Y$ is the act $Z \in \mathbb{X}$ defined by $Z(\omega)=\alpha X(\omega) \oplus(1-\alpha) Y(\omega)$ for any $\omega \in \Omega$.
\begin{aaxiom}[Weak Certainty Independence]
\label{ax:WCI}
For all $X, Y \in \mathbb{X},~x \in \X$ and $\alpha\in[0,1]$,
$$
X \sim Y \Longrightarrow \alpha X \oplus(1-\alpha) x \sim \alpha Y \oplus(1-\alpha) x.
$$
\end{aaxiom}
So far, we have presented all the relevant axioms from GMMS. A binary relation that satisfies Axioms \ref{ax:wo}--\ref{ax:WCI} is referred to as \emph{invariant biseparable}.
Next, we present the representation result for invariant biseparable preference given by GMMS.
\begin{lemma}[Theorem 5 of GMMS]\label{lm-IBGMMS}
Let $\succsim$ be a binary relation on $\mathbb{X}$. Then $\succsim$ is invariant biseparable
if and only if there exist a continuous nonconstant function $u: \X \rightarrow \mathbb{R}$ and a monotone, constant-additive and positively homogeneous functional $I: B_0(\Omega,\mathcal F) \rightarrow \mathbb{R}$ such that for all $X, Y \in \mathbb{X}$,
$$
X \succsim Y \Longleftrightarrow I(u (X)) \geq I(u(Y))
$$
and such that $I\left(1_E\right) \notin\{0,1\}$ for some $E \in \mathcal F$. Moreover, $u$ is unique up to a positive affine transformation and $I$ is unique.
\end{lemma}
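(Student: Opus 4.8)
The statement is Theorem~5 of GMMS, so the plan is to verify that the present setting---$\mathbb X$ the set of simple acts valued in a connected, separable metric space $\X$, over an adequate $(\Omega,\mathcal F,P)$---is an instance of their framework, and then to invoke (and, for the reader's benefit, outline) their argument. The ``if'' direction is a routine check: if $X\succsim Y\iff I(u(X))\ge I(u(Y))$ with $u$ continuous and nonconstant and $I$ monotone, constant-additive, and positively homogeneous, then Weak Order and Dominance are immediate, Essentiality follows because constant-additivity and positive homogeneity give $I(u(xEy))=u(y)+(u(x)-u(y))I(1_E)$, which lies strictly between $u(y)$ and $u(x)$ precisely when $I(1_E)\in(0,1)$, $E$-Monotonicity and $E$-Continuity follow from monotonicity and the $L^\infty$-norm continuity that monotone constant-additive functionals automatically enjoy, $E$-Substitution is obtained by evaluating both sides through the representation, and Weak Certainty Independence follows from $I(\alpha u(X)\oplus(1-\alpha)u(x))=\alpha I(u(X))+(1-\alpha)u(x)$, which is again constant-additivity plus positive homogeneity.

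For the ``only if'' direction I would proceed in four steps. \emph{(1) Canonical utility.} Axioms~\ref{ax:wo}--\ref{ax:ES} give, via Lemma~\ref{lm-CR}, a continuous nonconstant $u:\X\to\R$ and a capacity $\rho_E$ representing $\succsim$ on $\sigma(E)$-measurable acts; by Lemma~\ref{lm-midpoint} and the argument immediately following it, $u$ is affine for the preference-average operator, $u(\alpha x\oplus(1-\alpha)y)=\alpha u(x)+(1-\alpha)u(y)$. \emph{(2) Definition of $I$.} Since $\X$ is connected and $u$ continuous and nonconstant, $J:=u(\X)$ is a nondegenerate interval; for $\varphi\in B_0(\Omega,\mathcal F)$ taking values in $J$ one picks an act $X$ with $u(X)=\varphi$ and sets $I(\varphi):=u(c_X)$, extending later to all of $B_0(\Omega,\mathcal F)$ by translation and scaling.

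\emph{(3) Well-definedness and monotonicity.} The key point is that $u(X)=u(Y)\implies X\sim Y$, so that $I(\varphi)$ does not depend on the chosen act; this is extracted from the $E$-Substitution axiom, which allows one to replace an act, value by value, by a $\sigma(E)$-measurable act with the same utility profile and the same certainty equivalent, after which Lemma~\ref{lm-CR} applies; monotonicity of $I$ then comes from Dominance and $E$-Monotonicity. \emph{(4) Constant-linearity.} Translating Weak Certainty Independence through $I$ and step~(1): $I(\varphi)=I(\psi)$ implies $I(\alpha\varphi+(1-\alpha)t)=I(\alpha\psi+(1-\alpha)t)$ for all $t\in J$ and $\alpha\in[0,1]$. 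Combined with $I(t)=t$ on constants, the standard niveloid argument (translate a level set by a constant; rescale it toward a constant) yields constant-additivity and positive homogeneity, first on $J$-valued acts and then, via the extension in step~(2), on all of $B_0(\Omega,\mathcal F)$; and $I(1_E)\notin\{0,1\}$ is precisely Essentiality. Finally, $u$ is cardinally unique by Lemma~\ref{lm-CR}, and with $u$ fixed $I$ is determined on the utility profiles $\{u(X):X\in\mathbb X\}$---which exhaust the simple $J$-valued functions---hence everywhere by constant-linearity, so $I$ is unique.

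I expect step~(3) to be the main obstacle: showing that the represented value of an act depends only on its utility profile. The difficulty is that the canonical representation of Lemma~\ref{lm-CR} is available only for $\sigma(E)$-measurable (essentially binary) acts, so reducing a general $\mathcal F$-measurable simple act to such a form requires iterating $E$-Substitution while carefully managing the nesting of certainty equivalents---this bookkeeping is the technical heart of the GMMS proof, and the reason I would ultimately cite it rather than reproduce it in full.
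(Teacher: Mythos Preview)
The paper does not prove this statement at all: it is presented as a direct citation of Theorem~5 of GMMS (the lemma title says so explicitly), and no proof environment follows it in the appendix. Your proposal therefore goes considerably beyond what the paper does; the paper simply invokes the external result and moves on.

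Your outline of the GMMS argument is broadly faithful, but your diagnosis of the ``main obstacle'' in step~(3) is misplaced. Well-definedness of $I$---that $u(X)=u(Y)$ implies $X\sim Y$---is immediate from Dominance (Axiom~\ref{ax:D}): if $u(X(\omega))=u(Y(\omega))$ for every $\omega$, then $X(\omega)\sim Y(\omega)$ pointwise, and two applications of Dominance give $X\sim Y$. No iteration of $E$-Substitution or reduction to $\sigma(E)$-measurable acts is required for this step. The genuinely delicate part of the GMMS proof lies rather in your step~(4): extracting full constant-additivity and positive homogeneity of $I$ from Weak Certainty Independence (which is stated only for mixtures with constants and only in $\sim$-form), and managing the extension from $J$-valued simple functions to all of $B_0(\Omega,\mathcal F)$ in a way that preserves these properties and uniqueness. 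Your final paragraph misattributes the technical weight.
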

We are now ready to present the complete proof of our main characterization result --- Theorem \ref{th:expectiled-utility}.

\begin{proof}[Proof of Theorem \ref{th:expectiled-utility}.]
In this proof, we focus on the case of disappointment hedging, as the elation-speculating case can be treated analogously.
(ii) $\Rightarrow$ (i).
By Lemma \ref{lm-propertyEx}, $\E_{\beta}$ satisfies monotonicity, constant additivity and positive homogeneity. Combining with Lemma \ref{lm-IBGMMS} implies that $\succsim$ is invariant biseparable. To see probabilistic sophistication, let $X,Y\in \mathbb{X}$ be such that
\begin{align}\label{eq-suf1}
P\left(  \omega:X\left(  \omega\right)  \succsim x\right)  \geq P\left(
\omega:Y\left(  \omega\right)  \succsim x\right)   \text{ for all }%
x\in\mathcal{X}.
\end{align}
This implies
\begin{align*}
P\left(  u(X)\ge u(x)\right)  \geq P\left(
u(Y)\ge u(x)\right)   \text{ for all } x\in\mathcal{X},
\end{align*}
which is equivalent to
\begin{align}\label{eq-suf2}
P\left(  u(X)\ge t\right)  \geq P\left(
u(Y)\ge t\right)   \text{ for all } t\in\R.
\end{align}
By strong monotonicity of $\E_{\beta}$ in Lemma \ref{lm-propertyEx}, we have $\E_{\beta}[u(X)]\ge \E_{\beta}[u(Y)]$, and thus, $X\succsim Y$. If the inequality in \eqref{eq-suf1} is strict for some $x\in\X$, then the inequality in \eqref{eq-suf2} is strict for $t=u(x)$, and using strong monotonicity again yields $X\succ Y$. It remains to verify disappointment hedging.
Let $X,Y,W\in \mathbb{X}$ be such that $X\sim Y$ and $D_X=D_W$. It holds that
\begin{align*}
\E_{\beta}[u(X)]=\E_{\beta}[u(Y)]\quad\text{and}\quad \{\omega: u(X(\omega))<\E_{\beta}[u(X)]\}=\{\omega: u(W(\omega))<\E_{\beta}[u(W)]\}.
\end{align*}
Therefore,
\begin{align*}
\E_{\beta}\left[u\left(\frac12 W\oplus \frac12 X\right)\right]
&=\E_{\beta}\left[\frac12 u(W)+\frac12 u(X)\right]\\
&=\frac{1}{2}\E_{\beta}[u(W)+u(X)]\\
&=\frac{1}{2}(\E_{\beta}[u(W)]+\E_{\beta}[u(X)])\\
&=\frac{1}{2}(\E_{\beta}[u(W)]+\E_{\beta}[u(Y)])\\
&\le \frac{1}{2}\E_{\beta}[u(W)+u(Y)]\\
&=\E_{\beta}\left[\frac12 u(W)+\frac12 u(Y)\right]
=\E_{\beta}\left[u\left(\frac12 W\oplus \frac12 Y\right)\right],
\end{align*}
where the first and the last equalities follow from Lemma \ref{lm-midpoint}, the second and the fifth equalities are due to positive homogeneity of $\E_{\beta}$, and the third equality and the inequality come from additivity in concordant sums and superadditivity of $\E_{\beta}$ in Lemma \ref{lm-propertyEx}, respectively. Thus, we have verified disappointment hedging.
(i) $\Rightarrow$ (ii). Suppose now that $\succsim$ is probabilistically sophisticated, invariant biseparable, and satisfies disappointment hedging. Then, there exist a continuous nonconstant function $u:\X\to\R$ and a monotone,  positively homogeneous,
and constant-additive $I: B_0(\Omega,\mathcal F)\to \R$ such that,
for all acts $X,Y\in \mathbb{X}$,
\begin{align*}
X\succsim Y\Longleftrightarrow I(u\left(  X\right)  )\geq I(u\left(  Y\right)).
\end{align*}
Below, we aim to establish that $I = \E_{\beta}$ for some $\beta \ge 0$.
Note that positive homogeneity
and constant additivity together imply $L^\infty$-norm  continuity (see e.g.~Lemma 4.3 of F\"{o}llmer and Schied, 2016).
According to Theorem~\ref{th-characterizationEx1}, it suffices to verify that $I$ satisfies strong monotonicity and disappointment aversion.
Denote $B_0(\Omega,\mathcal F,u(\X))$ by the set of all simple random variables taking values on $u(\X)$.
One can check that probabilistic sophistication implies strong monotonicity of $I$ on $B_0(\Omega,\mathcal F,u(\X))$, that is, for $U,V\in B_0(\Omega,\mathcal F,u(\X))$
if $P(U \ge t) \ge P(V \ge t)$ for all $t \in \R$, then $I(U) \ge I(V)$, with strict inequality if the inequality is strict for some $t$. Moreover, for $X,Y,W\in \mathbb{X}$ with $X\sim Y$ and $D_W=D_X$,
\begin{align*}
\frac12 I(u(W)+u(X))
&=I\left(\frac{1}2(u(W)+u(X))\right)\\&=
I\left(u\left(\frac12 W\oplus\frac12 X\right)\right)\\
&\le I\left(u\left(\frac12 W\oplus\frac12 Y\right)\right)
\\& =I\left(\frac{1}2(u(W)+u(Y))\right)
=\frac12 I(u(W)+u(Y)),
\end{align*}
where the first and the last equalities follow from positive homogeneity of $I$, the second and the third equalities come from Lemma \ref{lm-midpoint}, and the inequality is due to disappointment hedging. This yields disappointment aversion of $I$ restricted to $B_0(\Omega,\mathcal F,u(\X))$. Since $u$ is continuous and nonconstant, and $\X$ is connected, we have that $u(\X)\subseteq\R$ is a nonempty interval. Thus,
it is straightforward to extend strong monotonicity and disappointment aversion from $B_0(\Omega,\mathcal F,u(\X))$ to $B_0(\Omega,\mathcal F)$ by positive homogeneity and constant additivity of $I$. This completes the proof.
\end{proof}

\begin{proof}[Proof of Theorem \ref{th:elicitation}.]
Because $u$ is cardinally unique, normalize $u(x)=1$ and $u(y)=0$. 
Write $p=P(F)$. Applying \eqref{eq:GM}, formula \eqref{eq:pro} is equivalent to  
\begin{equation}
    \label{eq:pro-solve}  
\frac{1}{2} =\frac{p}{1+\beta(1- p) } .
\end{equation}
Solving \eqref{eq:pro-solve} gives $\beta=(2p-1)/(1-p)$, proving \eqref{eq:beta-elicitation} in part (i). 
Such an event $F$ exists because $P$ is nonatomic, and  the right-hand side  of \eqref{eq:pro-solve} is a continuous function of $p\in (0,1)$ with range $(0,1)$.

The existence of $E$ in part (ii) follows from the same argument in \eqref{eq:pro-solve} but now with $1/2$ replaced by $u(z)$ and the range  $[0,1]$ for $p=P(E)$. 
Finally, \eqref{eq:utility-elicitation} follows
directly from formula~\eqref{eq:GM} applied to the indifference $z\sim xEy$.  
\end{proof}

\section{Proofs  of results in  Section \ref{sec:exp-v}}

\begin{proof}[Proof of Theorem \ref{th-ax-expectileutils}.]
First, it is an immediate and well-known consequence that monotonicity and probabilistic sophistication, together with continuity, imply  the existence of a unique certainty equivalent. To see this, for $U\in L^{\infty}(P)$, monotonicity implies that the sets $A:=\{t\in\R:t\succsim  U\}$ and $B:=\{t\in\R:U\succsim t\}$ are both intervals. Continuity further yields $A\cap B\neq\varnothing$. Using probabilistic sophistication, we know that $A\cap B$ is a singleton, and its unique element is the certainty equivalent of $\succsim$.
Second, we denote such a certainty equivalent as $I$. It is straightforward to verify that, probabilistic sophistication,
the monotonicity, continuity, and disappointment aversion 
(resp.~disappointment hedging) of $\succsim$ on $L^{\infty}(P)$ are equivalent to 
the strong monotonicity, $L^\infty$-norm continuity, and decrease in
concordant sums (resp.~mixtures) of the induced functional $I$ on $L^{\infty}(P)$. 
Moreover, under probabilistic sophistication, $L^{\infty}(P)$ can be viewed as an 
extension of $B_0(\Omega,\mathcal F)$, up to the usual identification of random 
variables that agree $P$-almost surely.
Therefore, the implications (i)$\Rightarrow$(ii) and (iii)$\Rightarrow$(ii) follow 
directly from Theorem~\ref{th-characterizationEx1} and 
Theorem~\ref{th-characterization_general}, respectively. Conversely, the two 
reverse implications follow from Lemma~\ref{lm-propertyEx}, since the properties 
established in that lemma are defined on the larger space $L^1(P)$ and hence 
apply, in particular, to $L^{\infty}(P)$.
\end{proof}

 \begin{lemma}
\label{lem:piecewise}If $f\left(  v\right)  =(u-v)^{+}-(1+\beta)(v-u)^{+}$,
with $u\in\mathbb{R}$ and $\beta>-1$, then
\[
(v-v^{\prime})\left(  f\left(  v\right)  -f\left(  v^{\prime}\right)  \right)
\leq-\min\left(  1,1+\beta\right)  (v-v^{\prime})^{2}%
\]
for all $v,v^{\prime}\in\mathbb{R}$.
\end{lemma}

\begin{proof}
Since $f(v)=(u-v)^{+}-(1+\beta)(v-u)^{+}$, we may write $f$ piecewise as
\[
f(v)=%
\begin{cases}
u-v, & v\leq u,\\[2pt]%
-(1+\beta)(v-u), & v\geq u.
\end{cases}
\]
The two branches agree at $v=u$ (both vanish), so $f$ is continuous and
piecewise linear, with slope $-1$ on $(-\infty,u)$ and slope $-(1+\beta)$ on
$(u,\infty)$.

Let
\[
m:=\min(1,1+\beta)\in(0,1],
\]
which is well defined and positive since $\beta>-1$.

Both sides of
\[
(v-v^{\prime})\big(f(v)-f(v^{\prime})\big)\ \leq\ -m(v-v^{\prime})^{2}%
\]
are invariant under exchanging $v$ and $v^{\prime}$, so it suffices to prove
the inequality for $v\geq v^{\prime}$.

\medskip\noindent\textbf{Case 1: }$v^{\prime}\leq v\leq u$\textbf{.} Here
$f(v)-f(v^{\prime})=(u-v)-(u-v^{\prime})=-(v-v^{\prime})$, so
\[
(v-v^{\prime})\big(f(v)-f(v^{\prime})\big)=-(v-v^{\prime})^{2}\leq
-m(v-v^{\prime})^{2},
\]
since $m\leq1$.

\medskip\noindent\textbf{Case 2: }$u\leq v^{\prime}\leq v$\textbf{.} Here
$f(v)-f(v^{\prime})=-(1+\beta)(v-v^{\prime})$, so
\[
(v-v^{\prime})\big(f(v)-f(v^{\prime})\big)=-(1+\beta)(v-v^{\prime})^{2}%
\leq-m(v-v^{\prime})^{2},
\]
since $m\leq1+\beta$.

\medskip\noindent\textbf{Case 3: }$v^{\prime}\leq u\leq v$\textbf{.} Let
$a:=v-u\geq0$ and $b:=u-v^{\prime}\geq0$, so that $v-v^{\prime}=a+b$. Then
\[
f(v)=-(1+\beta)a,\qquad f(v^{\prime})=b,
\]
hence
\[
(v-v^{\prime})\big(f(v)-f(v^{\prime})\big)=(a+b)\big(-(1+\beta
)a-b\big)=-(1+\beta)a^{2}-(2+\beta)ab-b^{2}.
\]
We must show this is $\leq-m(a+b)^{2}=-ma^{2}-2mab-mb^{2}$, i.e.
\begin{equation}
\big(m-(1+\beta)\big)a^{2}+\big(2m-2-\beta\big)ab+(m-1)b^{2}\leq0.
\label{eq:uine}%
\end{equation}

\begin{itemize}
\item If $\beta\geq0$, then $m=1$, and (\ref{eq:uine}) becomes
\[
-\beta a^{2}-\beta ab=-\beta\,a(a+b)\leq0,
\]
which holds since $\beta\geq0$ and $a,\,a+b\geq0$.

\item If $-1<\beta<0$, then $m=1+\beta$, and (\ref{eq:uine}) becomes
\[
\beta ab+\beta b^{2}=\beta\,b(a+b)\leq0,
\]
which holds since $\beta<0$ and $b,\,a+b\geq0$.
\end{itemize}

In either case (\ref{eq:uine}) holds, proving Case 3.

Cases 1--3 exhaust all possibilities for $v\geq v^{\prime}$, thus proving the statement.
\end{proof}

\begin{lemma}
\label{lem:freezing}Let $V$ and $U$ be bounded random variables, and let
$\mathcal{H}$ be a sub-$\sigma$-algebra of $\mathcal{F}$. Suppose that $V$ is
$\mathcal{H}$-measurable and that $U$ is independent of $\mathcal{H}$. Let
$G(v,u)$ be jointly continuous, and define
\[
g(v):=\mathbb{E}\bigl[G(v,U)\bigr].
\]
Then
\[
\mathbb{E}\bigl[G(V,U)\mid\mathcal{H}\bigr]=g(V)\qquad\text{a.s.}%
\]

\end{lemma}

\begin{proof}
Since $V$ and $U$ are bounded and $G$ is jointly continuous, the random
variable $G(V,U)$ is bounded and hence integrable.

First suppose that $V$ is simple. Then there exist $v_{1},\dots,v_{n}$ and
sets $H_{1},\dots,H_{n}\in\mathcal{H}$ such that
\[
V=\sum_{i=1}^{n}v_{i}1_{H_{i}}.
\]
Hence
\[
G(V,U)=\sum_{i=1}^{n}1_{H_{i}}G(v_{i},U).
\]
Taking conditional expectations with respect to $\mathcal{H}$, we obtain
\begin{align*}
\mathbb{E}\bigl[G(V,U)\mid\mathcal{H}\bigr]  &  =\mathbb{E}\left[  \sum
_{i=1}^{n}1_{H_{i}}G(v_{i},U)\mid\mathcal{H}\right] \\
&  =\sum_{i=1}^{n}1_{H_{i}}\mathbb{E}\bigl[G(v_{i},U)\mid\mathcal{H}\bigr].
\end{align*}
Since $U$ is independent of $\mathcal{H}$, for each fixed $v_{i}$ we have
\[
\mathbb{E}\bigl[G(v_{i},U)\mid\mathcal{H}\bigr]=\mathbb{E}\bigl[G(v_{i}%
,U)\bigr]=g(v_{i}).
\]
Therefore
\[
\mathbb{E}\bigl[G(V,U)\mid\mathcal{H}\bigr]=\sum_{i=1}^{n}1_{H_{i}}%
g(v_{i})=g(V).
\]

Now let $V$ be any bounded $\mathcal{H}$-measurable random variable. Choose
simple $\mathcal{H}$-measurable random variables $V^{(m)}$ such that
\[
V^{(m)}\to V
\]
pointwise and such that the sequence $\{V^{(m)}\}_{m\geq1}$ is uniformly bounded. Since
$U$ is bounded and $G$ is continuous, the variables $G(V^{(m)},U)$ are
uniformly bounded, and
\[
G(V^{(m)},U)\to G(V,U)
\]
pointwise. Thus, by the Dominated Convergence Theorem,
\[
G(V^{(m)},U)\to G(V,U) \qquad\text{in } L^{1}.
\]

By the Dominated Convergence Theorem, again, $g$ is continuous. Hence
\[
g(V^{(m)})\to g(V)
\]
pointwise. Since the sequence $\{g(V^{(m)})\}_{m\geq1}$ is uniformly bounded, another
application of the Dominated Convergence Theorem gives
\[
g(V^{(m)})\to g(V) \qquad\text{in } L^{1}.
\]

For each $m$, the identity already proved for simple random variables gives
\[
\mathbb{E}\bigl[G(V^{(m)},U)\mid\mathcal{H}\bigr]=g(V^{(m)}).
\]
Passing to the limit in $L^{1}$ and using the $L^{1}$-continuity of
conditional expectation, we obtain
\[
\mathbb{E}\bigl[G(V,U)\mid\mathcal{H}\bigr]=g(V).
\]
This proves the claim.
\end{proof}

\begin{proof}
[Proof of Theorem \ref{th:RPE-learning}.]
Since $U$ is bounded, there
exist 
two real numbers $\underline{u}$ and $\overline{u}$ such that
\[
\underline{u}\leq U_{t}\left(  \omega\right)  \leq\overline{u} \qquad \mbox{a.s.}%
\]
for  all $t\in\mathbb{N}$.
Without loss of generality we can assume the above holds pointwise, because  we will prove almost-sure convergence. 
Define
\[
G_{\beta}(v,u)=(u-v)^{+}-(1+\beta)(v-u)^{+}.
\]
Then the learning rule is
\[
V_{t+1}=V_{t}+\alpha_{t}G_{\beta}(V_{t},U_{t+1}).
\]
Let
\[
h_{\beta}(v):=\mathbb{E}[G_{\beta}(v,U)].
\]
That is,
\[
h_{\beta}(v)=\mathbb{E}[(U-v)^{+}]-(1+\beta)\mathbb{E}[(v-U)^{+}].
\]

The expectiled reward $\mathbb{E}_{\beta}[U]$ is, by definition, the unique solution
$v^{\star}$ of
\[
\mathbb{E}[(U-v)^{+}]=(1+\beta)\mathbb{E}[(v-U)^{+}].
\]
Equivalently, $v^{\star}$ is the unique solution  of
 $
h_{\beta}(v)=0.
 $  
We now prove that the stochastic recursion converges to this unique zero.
Since
 $
\sum_{t=0}^{\infty}\alpha_{t}^{2}<\infty,
 $ 
we have
 $ 
\alpha_{t}\rightarrow0
 $
as $t\rightarrow\infty$. Let
\[
\lambda:=\max\{1,1+\beta\},\qquad\mu:=\min\{1,1+\beta\}.
\]
Since $\beta>-1$, both $\lambda$ and $\mu$ are strictly positive. Also there
exists a finite time $t_{0}$ such that, for all $t\geq t_{0}$,
\[
0<\alpha_{t}\lambda\leq1.
\]
In particular, $1\leq\lambda$ and $1+\beta\leq\lambda$ imply $0<\alpha_{t}%
\leq1$ and $0<\left(  1+\beta\right)  \alpha_{t}\leq1$ for all $t\geq t_{0}$.

Since $V_{0}$ is deterministic and
\[
V_{t+1}=V_{t}+\alpha_{t}1_{\left\{ U_{t+1}\geq V_{t}\right\}  }(U_{t+1}%
-V_{t})+\alpha_{t}\left(  1+\beta\right)  1_{\left\{  U_{t+1}<V_{t}\right\}
}(U_{t+1}-V_{t})
\]
all of the $V_{t}$ are bounded. In particular, there exists a compact interval
$J$ of the real line such that all of the $U_{t}$ and $V_{t_{0}}$ take values
in $J$, briefly written as $U_{t},V_{t_{0}}\in J$. 
For $t\geq t_{0}$, if $U_{t+1}\geq V_{t}$, then
\[
V_{t+1}=V_{t}+\alpha_{t}(U_{t+1}-V_{t})=(1-\alpha_{t})V_{t}+\alpha_{t}%
U_{t+1}.
\]
Since $0<\alpha_{t}\leq1$, this is a convex combination of $V_{t}$ and
$U_{t+1}$. If $U_{t+1}<V_{t}$, then
\[
V_{t+1}=V_{t}+\alpha_{t}(1+\beta)(U_{t+1}-V_{t})=(1-\alpha_{t}(1+\beta
))V_{t}+\alpha_{t}(1+\beta)U_{t+1}.
\]
Since $0<\alpha_{t}(1+\beta)\leq1$, this is also a convex combination of
$V_{t}$ and $U_{t+1}$. Therefore, by induction, $V_{t}\in J\ $for all $t\geq
t_{0}$ and hence the   process $(V_{t})_{t\in \N}$ is
uniformly bounded.
Let
\[
Z_{t}:=V_{t}-v^{\star}.
\]
For fixed $u$, the function $v\mapsto G_{\beta}(v,u)$ is the one appearing in
Lemma \ref{lem:piecewise}. Therefore, for all $v,v^{\prime}\in\mathbb{R}$,
\[
(v-v^{\prime})\bigl(G_{\beta}(v,u)-G_{\beta}(v^{\prime},u)\bigr)\leq
-\mu(v-v^{\prime})^{2}.
\]
Hence%
\[
(v-v^{\prime})\bigl(G_{\beta}(v,U)-G_{\beta}(v^{\prime},U)\bigr)\leq
-\mu(v-v^{\prime})^{2},
\]
and taking expectations, we get
\[
(v-v^{\prime})\bigl(h_{\beta}(v)-h_{\beta}(v^{\prime})\bigr)\leq
-\mu(v-v^{\prime})^{2}.
\]
Since $h_{\beta}(v^{\star})=0$, setting $v^{\prime}=v^{\star}$ gives
\[
(v-v^{\star})h_{\beta}(v)\leq-\mu(v-v^{\star})^{2}.
\]
In particular, $(V_{t}-v^{\star})h_{\beta}(V_{t})\leq-\mu(V_{t}-v^{\star}%
)^{2}$ and%
\[
Z_{t}h_{\beta}(V_{t})\leq-\mu Z_{t}^{2}%
\]
for all $t\in\mathbb{N}$.

Let
\[
\mathcal{F}_{t}:=\sigma(V_{0},U_{1},\dots,U_{t}).
\]
By construction (and induction), $V_{t}$ is $\mathcal{F}_{t}$-measurable,
while $U_{t+1}$ is independent of $\mathcal{F}_{t}$. Then, by Lemma
\ref{lem:freezing},
\[
\mathbb{E}[G_{\beta}(V_{t},U_{t+1})\mid\mathcal{F}_{t}]=h_{\beta}(V_{t}).
\]
Since the processes $\left(  U_{t}\right)  _{t\in\mathbb{N}}$ and $\left(
V_{t}\right)  _{t\in\mathbb{N}}$ are uniformly bounded, and $G_{\beta}$ is
jointly continuous,   there exists a finite constant $C$ such that
\[
\mathbb{E}[G_{\beta}(V_{t},U_{t+1})^{2}\mid\mathcal{F}_{t}]\leq C\qquad
\text{for all }t\in\mathbb{N}.
\] 
Therefore, for all $t\in\mathbb{N}$,
\begin{align*}
\mathbb{E}\left[  Z_{t+1}^{2}\mid\mathcal{F}_{t}\right]   &  =\mathbb{E}%
\left[  \left(  Z_{t}+\alpha_{t}G_{\beta}(V_{t},U_{t+1})\right)  ^{2}%
\mid\mathcal{F}_{t}\right] \\
&  =Z_{t}^{2}+2\alpha_{t}Z_{t}h_{\beta}(V_{t})+\alpha_{t}^{2}\mathbb{E}%
[G_{\beta}(V_{t},U_{t+1})^{2}\mid\mathcal{F}_{t}]\\
&  \leq Z_{t}^{2}-2\mu\alpha_{t}Z_{t}^{2}+C\alpha_{t}^{2}.
\end{align*}
Since
 $
\sum_{t=1}^{\infty}\alpha_{t}^{2}<\infty,
 $ 
the   almost-supermartingale convergence theorem of Robbins and Siegmund (1971; see the remark
  below) implies that $Z_{t}^{2}$ converges almost surely and that
\[
\sum_{t=1}^{\infty}\alpha_{t}Z_{t}^{2}<\infty\qquad\text{a.s.}%
\]

Next we show that the almost-sure limit of $Z_{t}^{2}$ must be zero. Indeed,
if $Z_{t}^{2}\rightarrow L$ with $L\left(  \omega\right)  >0$ on a set $W$ of
positive measure, then, for each $\omega\in W$, there exists $t_{\omega}%
\in\mathbb{N}$ such that $Z_{t}^{2}\left(  \omega\right)  \geq L\left(
\omega\right)  /2$ for all $t\geq t_{\omega}$, and hence
\[
\sum_{t=1}^{\infty}\alpha_{t}Z_{t}^{2}\left(  \omega\right)  \geq
\sum_{t=t_{\omega}}^{\infty}\alpha_{t}Z_{t}^{2}\left(  \omega\right)  \geq
\sum_{t=t_{\omega}}^{\infty}\alpha_{t}\frac{L\left(  \omega\right)  }%
{2}=\infty,
\]
contradicting
\[
\sum_{t=1}^{\infty}\alpha_{t}Z_{t}^{2}<\infty.
\]
Thus
\[
Z_{t}^{2}\rightarrow0\qquad\text{a.s.}%
\]
and therefore $Z_{t}\rightarrow0$ a.s.~and so%
\[
V_{t}\rightarrow v^{\star}=\mathbb{E}_{\beta}[U]\qquad\text{a.s.}%
\]
as wanted.
\end{proof}

\bigskip \noindent\textbf{Remark.}
 \label{rem:RS}The
almost-sure convergence argument above is a direct application of Theorem~1 of
Robbins and Siegmund (1971). Recall its statement: if $(\mathcal{F}_{n}%
)_{n\in\mathbb{N}}$ is a filtration and $z_{n},\beta_{n},\xi_{n},\zeta_{n}$
are non-negative $\mathcal{F}_{n}$-measurable random variables satisfying
\begin{equation}
\mathbb{E}[z_{n+1}\mid\mathcal{F}_{n}]\leq z_{n}(1+\beta_{n})+\xi_{n}%
-\zeta_{n},\label{eq:SR}%
\end{equation}
then on the event $\{\sum_{n}\beta_{n}<\infty,\ \sum_{n}\xi_{n}<\infty\}$ we
have, almost surely, that $\lim_{n}z_{n}$ exists and is finite, and $\sum
_{n}\zeta_{n}<\infty$.  
In the proof above, consider%
\[
z_{t}:=Z_{t}^{2},\qquad\beta_{t}:=0,\qquad\xi_{t}:=C\alpha_{t}^{2},\qquad
\zeta_{t}:=2\mu\alpha_{t}Z_{t}^{2}.
\]
These random variables are non-negative, $\mathcal{F}_{t}$-measurable, and the
inequality%
\[
\mathbb{E}\left[  Z_{t+1}^{2}\mid\mathcal{F}_{t}\right]  \leq Z_{t}%
^{2}+C\alpha_{t}^{2}-2\mu\alpha_{t}Z_{t}^{2}
\]
that we derive, is precisely (\ref{eq:SR}) with $n=t$. Moreover, since
$\beta_{t}\equiv0$, the condition $\sum_{t}\beta_{t}<\infty$ holds on $\Omega
$, and $\sum_{t}\xi_{t}=C\sum_{t}\alpha_{t}^{2}<\infty$ holds again on
$\Omega$ by the standing assumption on the learning plasticity coefficients
$(\alpha_{t})_{t\in\mathbb{N}}$. The Robbins--Siegmund theorem therefore
guarantees simultaneously that: 
\begin{itemize}
\item $Z_{t}^{2}$ converges almost surely to a finite limit, and

\item $\sum_{t=1}^{\infty}\zeta_{t}=2\mu\sum_{t=1}^{\infty}\alpha_{t}%
Z_{t}^{2}<\infty$ almost surely, i.e. $\sum_{t=1}^{\infty}\alpha_{t}%
Z_{t}^{2}<\infty$ a.s.
\end{itemize}
  as claimed in the proof.
 
\section{Proofs  of results in Section \ref{sec:expectiled-more}
}

\begin{proof}[Proof of Theorem \ref{th:expectiled-utility-bis}.]
We give the proof for the maxmin case in detail and then   the maxmax case follows by symmetric arguments.

\underline{(ii) $\Rightarrow$ (i), maxmin case.}
Let $\beta\geq 0$ and suppose that $\succsim$ is represented by
$X\mapsto \E_\beta[u(X)]$.  
Theorems 
\ref{th:GulNeweyPowellPlus}
 and 
 \ref{th:expectiled-utility} show that $\succsim$ is a probabilistically sophisticated  maxmin expected utility preference. 
It remains to verify betweenness.  Suppose $X\sim Y$, and put
\[
m=\E_\beta[u(X)]=\E_\beta[u(Y)].
\]
Let $\Lambda $ be independent of $(X,Y)$ and write $\lambda=P(\Lambda )$.  Since
$u(X\Lambda Y)=u(X)1_\Lambda +u(Y)1_{\Lambda ^c}$, independence gives
\begin{align*}
\E\left[(u(X\Lambda Y)-m)^+\right]  
& =\lambda \E\left[(u(X)-m)^+\right]+(1-\lambda)\E\left[(u(Y)-m)^+\right] \\
& =(1+\beta)\left(\lambda \E\left[(m-u(X))^+\right]
 +(1-\lambda)\E\left[(m-u(Y))^+\right]\right) \\
& =(1+\beta)\E\left[(m-u(X\Lambda Y))^+\right].
\end{align*}
Thus $m=\E_\beta[u(X\Lambda Y)]$, and so $X\Lambda Y\sim X$.  This proves
betweenness.

\underline{(i) $\Rightarrow$ (ii), maxmin case.}
By maxmin expected utility, there are a continuous nonconstant
$u:\mathcal X\to\mathbb R$ and a nonempty set $\mathcal Q$ of probability
measures such that the preference is represented by
\[
I(u(X)),\qquad I(U):=\min_{Q\in\mathcal Q}\E^Q[U],
\]
where $I$ is defined on all simple real-valued random variables.  The functional
$I$ is normalized, monotone, constant-additive, positively homogeneous, and
superadditive.  Hence $\rho(U):=-I(U)$ is a coherent risk measure on
$B_0(\Omega,\mathcal F)$ in the sense of Ziegel (2016).

We first show that $I$, and hence $\rho$, is law invariant with respect to the
reference probability $P$.  Since $\mathcal X$ is connected and $u$ is continuous
and nonconstant, $J:=u(\mathcal X)$ is a nondegenerate interval.  Let
$U,V\in B_0(\Omega,\mathcal F)$ have the same $P$-distribution.  Choose
$a>0$ and $b\in\mathbb R$ such that $aU+b$ and $aV+b$ take values in $J$.
Then there exist simple acts $X,Y$ such that
$u(X)=aU+b$ and $u(Y)=aV+b$.  The equality in distribution of $aU+b$ and
$aV+b$ implies
\[
P(u(X)\geq u(x))=P(u(Y)\geq u(x))\qquad\text{for all }x\in\mathcal X.
\]
By probabilistic sophistication, $X\sim Y$.  Therefore
$I(aU+b)=I(aV+b)$, and by constant additivity and positive homogeneity,
$I(U)=I(V)$.  Hence $I$ is law invariant.

Next we show that the distributional functional induced by $I$ satisfies betweenness for lotteries  specified by \eqref{eq:betweenness}.  Let $U,V\in B_0(\Omega,\mathcal F)$ satisfy $I(U)=I(V)$, and let
$\lambda\in[0,1]$.  Since $U$ and $V$ are simple and $P$ is nonatomic, there
exists an event $\Lambda $ independent of $(U,V)$ such that $P(\Lambda )=\lambda$; this
is obtained by splitting each atom of the finite algebra generated by $U$ and
$V$ in the proportion $\lambda$.  Choose again $a>0$ and $b\in\mathbb R$ such
that $aU+b$ and $aV+b$ take values in $J$, and choose acts $X,Y$ with
$u(X)=aU+b$ and $u(Y)=aV+b$.  Since
$I(aU+b)=aI(U)+b=aI(V)+b=I(aV+b)$, we have $X\sim Y$.  Betweenness gives $X\Lambda Y\sim X$, hence
\[
I\left(a(U1_\Lambda +V1_{\Lambda ^c})+b\right)=I(aU+b).
\]
Using constant additivity and positive homogeneity once more,
\[
I(U1_\Lambda +V1_{\Lambda ^c})=I(U).
\]
Since $\Lambda $ is independent of $(U,V)$, the distribution of $U1_\Lambda +V1_{\Lambda ^c}$ is
$\lambda(P\circ U^{-1})+(1-\lambda)(P\circ V^{-1})$.  Thus the law-invariant
functional generated by $I$ satisfies betweenness for lotteries.

By Ziegel (2016, Corollary 4.6), a law-invariant coherent risk
measure satisfying betweenness for lotteries is a negative expectile.  Applied to $\rho=-I$, this gives a number $\beta\geq0$ such that
\[
\rho(U)=-\E_\beta[U]\qquad\text{for all }U\in B_0(\Omega,\mathcal F).
\]
Consequently $I(U)=\E_\beta[U]$ for all simple $U$, and therefore
\[
X\succsim Y
\iff I(u(X))\geq I(u(Y))
\iff \E_\beta[u(X)]\geq \E_\beta[u(Y)].
\]
This proves the maxmin assertion.


 {The maxmax case} is analogous.  Finally, the uniqueness statements follow from those in Theorem~\ref{th:expectiled-utility}.
\end{proof}

\begin{proof}[Proof of Theorem \ref{th-generalex}.]
We only provide the proof of the disappointment hedging case as the elation speculating case is similar.
(ii) $\Rightarrow$ (i). The proof follows a similar argument to that of Theorem \ref{th:expectiled-utility}.
(i) $\Rightarrow$ (ii).
Since $u$ is cardinally unique and $u(\mathcal{X})$ is a nondegenerate interval,  we can assume that $u(\mathcal{X})$ contains $0$ in its interior. Therefore, by Theorem \ref{th-characterization_general}, it suffices to verify that $I$ satisfies strong monotonicity, $L^\infty$-norm continuity and decrease in concordant mixtures on $B_0(\Omega,\mathcal F,u(\X))$.
The property $L^\infty$-norm continuity has already been given.
By probabilistic sophistication, it is straightforward to verify that $I$ satisfies strong monotonicity.
For $X,Y,W\in \mathbb{X}$ with $X\sim Y$ and $D_W=D_X$,
\begin{align*}
I\left(\frac{1}2(u(W)+u(X))\right)&=
I\left(u\left(\frac12 W\oplus\frac12 X\right)\right)\\
&\le I\left(u\left(\frac12 W\oplus\frac12 Y\right)\right)
=I\left(\frac{1}2(u(W)+u(Y))\right)
,
\end{align*}
where the equalities follow from \eqref{eq-thmidpoint} in the statement (i), and the inequality is due to disappointment hedging. This yields decrease in concordant mixtures of $I$  on $B_0(\Omega,\mathcal F,u(\X))$. This completes the proof.
\end{proof}

\end{appendix}

\end{document}